\documentclass[11pt,letterpaper]{article}
\usepackage[utf8]{inputenc}
\usepackage{times}
\usepackage{url}
\usepackage{algorithm}
\usepackage[noend]{algpseudocode}
\usepackage{fullpage}
\usepackage{pslatex} 
\usepackage{mathdots} 
\usepackage{amsmath}
\usepackage{amssymb}
\usepackage{amsthm}
\usepackage{color}
\usepackage{array}
\usepackage{xy}
\usepackage{setspace}
\usepackage{varwidth}
\usepackage{multicol}
\usepackage{hyperref}
\usepackage{cite}

\usepackage{accents}

\newcommand{\defn}[1]{\emph{{#1}}}

\usepackage{todonotes}

\newcommand{\poly}{\operatorname{poly}}
\newcommand{\polylog}{\operatorname{polylog}}
\newcommand{\E}{\mathbb{E}}

\newcommand{\var}{\operatorname{Var}}

\renewcommand{\paragraph}[1]{\vspace{.5 cm} \noindent \textbf{\mathbold #1} }

\makeatletter
 \addtolength{\partopsep}{-2mm}
 \setlength{\parskip}{4pt plus 1pt}
 \addtolength{\abovedisplayskip}{-3mm}
 \addtolength{\textheight}{40pt}
\renewcommand\paragraph{\@startsection{paragraph}{4}{\z@}%
                                    {1.5ex \@plus1ex \@minus.2ex}%
                                    {-1em}%
                                    {\normalfont\normalsize\bfseries}}
\makeatother

\usepackage{thmtools}
\usepackage{thm-restate}

\newtheoremstyle{slanted}
{3pt}
{3pt}
{\slshape}
{}
{\bfseries}
{.}
{.5em}
{}
\theoremstyle{slanted}
\newtheorem{theorem}{Theorem}
\newtheorem{lemma}[theorem]{Lemma}
\newtheorem{claim}[theorem]{Claim}
\newtheorem{proposition}[theorem]{Proposition}

\interfootnotelinepenalty=10000


\title{Balanced Allocations: The Heavily Loaded Case with Deletions}
\author{ Nikhil Bansal\footnote{University of Michigan, CS Dept., \url{bansal@gmail.com }. Supported in part by the NWO VICI grant 639.023.812.} \phantom{fff} William Kuszmaul\footnote{MIT CSAIL, \url{kuszmaul@mit.edu}.  Funded by a Fannie and John Hertz Fellowship and an NSF GRFP Fellowship. This research was also partially sponsored by the United States Air Force Research Laboratory and the United States Air Force Artificial Intelligence Accelerator and was accomplished under Cooperative Agreement Number FA8750-19-2-1000. The views and conclusions contained in this document are those of the authors and should not be interpreted as representing the official policies, either expressed or implied, of the United States Air Force or the U.S. Government. The U.S. Government is authorized to reproduce and distribute reprints for Government purposes notwithstanding any copyright notation herein.}}
\date{}

\begin{document}

\maketitle
\thispagestyle{empty}
\begin{abstract}
In the 2-choice allocation problem, $m$ balls are placed into $n$ bins, and each ball must choose between two random bins $i, j \in [n]$ that it has been assigned to. It has been known for more than two decades, that if each ball follows the \textsc{Greedy} strategy (i.e., always pick the less-full bin), then the maximum load will be $m/n + O(\log \log n)$ with high probability in $n$ (and $m / n + O(\log m)$ with high probability in $m$). It has remained an open question whether the same bounds hold in the \emph{dynamic} version of the same game, where balls are inserted/deleted with no more than $m$ balls present at a time. 

We show that, somewhat surprisingly, these bounds \emph{do not} hold in the dynamic setting: already on $4$ bins, there exists a sequence of insertions/deletions that cause the \textsc{Greedy} strategy to incur a maximum load of $m/4 + \Omega(\sqrt{m})$ with probability $\Omega(1)$---this is the same bound that one gets in the single-choice allocation model where each ball is assigned to a random bin!

This raises the question of whether \emph{any} 2-choice allocation strategy can offer a strong bound in the dynamic setting. Our second result answers this question in the affirmative: we present a new strategy, called \textsc{ModulatedGreedy}, that guarantees a maximum load of $m / n + O(\log m)$, at any given moment, with high probability in $m$. We also show how to generalize \textsc{ModulatedGreedy} to obtain dynamic guarantees for the $(1 + \beta)$-choice setting, and for the setting of balls-and-bins on a graph. 

Finally, we consider an extension of the dynamic setting in which balls can be \emph{reinserted} after they are deleted, and where the pair $i, j$ that a given ball uses is consistent across insertions. This seemingly small modification renders tight load balancing impossible: on 4 bins, any balls-and-bins strategy that is oblivious to the specific identities of balls being inserted/deleted \emph{must} allow for a maximum load of $m/4 + \poly(m)$ at some point in the first $\poly(m)$ insertions/deletions, with high probability in $m$. This is a remarkable departure from the $m = n$ case where the maximum load of $O(\log \log n)$ holds independently of whether reinsertions are allowed or not. 
\end{abstract}

\newpage
\setcounter{page}{1}

\allowdisplaybreaks

\section{Introduction}

Randomized balls-into-bins processes \cite{MRS01,Wieder} serve as a useful abstraction for studying load-balancing problems, with applications such as scheduling, distributed systems, and data structures. The goal is to assign balls (e.g., tasks) to bins (e.g., machines) such that the balls are balanced as evenly as possible across the bins, where each individual ball may have only a few available random options for bins that it can be placed in. 

It is well known that, if $n$ balls are placed into $n$ bins using the classical \textsc{SingleChoice} rule, where each ball is placed independently in a uniformly random bin, then the maximum load is $\Theta(\log n/\log \log n)$ with probability $1 - 1 / \poly(n)$. 

\paragraph{The power of $2$-choices.}
In a seminal 1994 paper, Azar, Broder, Karlin and Upfal \cite{ABKU94} showed that under a seemingly minor modification, where for each ball
{\em two} bins are chosen independently and uniformly at random, and the ball is placed {\em greedily} in the least loaded of the two bins, 
the maximum load reduces to $\log \log n+O(1)$ with high probability in $n$.
In the decades since, this {\em power of 2-choices} paradigm has been extremely influential, with both theoretical (e.g., \cite{theoretical1, theoretical2, theoretical3, theoretical4, theoretical5}) and empirical (e.g., \cite{empirical1, empirical2, empirical3, empirical4, empirical5}) applications, and with a large literature on generalizations; see e.g.,~\cite{MRS01,Wieder}~for some excellent surveys.

\paragraph{The heavily-loaded case.}
Azar et al.'s result \cite{ABKU94} prompted researchers to consider the \defn{heavily-loaded case}, where $m \gg n$ balls are inserted into $n$ bins. The early techniques that were developed for the lightly-loaded setting (i.e., layered induction \cite{ABKU94}, witness trees \cite{Vocking99, CFMMSU98}, and differential-equation approaches \cite{mitzenmacher2001power, mitzenmacher1999studying}) struggled to deliver strong bounds in the heavily-loaded setting, and for several years the best known bound stood at $m/n + \log \log n + O(m/n)$ \cite{CFMMSU98, Vocking99}. 
If we define the \defn{overload} to be the amount by which the maximum load exceeds $m / n$, then this bound allows for an overload as large as $\log \log n + O(m / n)$---such a bound is useful if $m \approx n$, but  when $m \gg n \log n$, the bound becomes worse even than the standard bound offered by \textsc{SingleChoice} (i.e., an overload of $O(\sqrt{(m/n)\log n})$).

In a breakthrough result, Berenbrink, Czumaj, Steger and V{\"o}cking~\cite{BCSV00proc} showed how to use Markov-chain techniques to obtain a much stronger bound of $\log \log n + O(1)$  on the overload, with probability $1 - 1 / \poly(n)$. Thus, somewhat remarkably, the gap between the maximum and average loads in the heavily-loaded case \emph{is the same} as in the lightly-loaded case, with high probability in $n$.

When $m \gg n$, the $O(\log \log n)$ overload bound does not, in general, extend to hold with probability $1 - 1 / \poly(m)$ (i.e., w.h.p.~in the number of \emph{balls}). However, the known techniques can be used to achieve a quite strong (and, when $n = O(1)$, optimal) bound of $O(\log m)$ on the overload in this case.


\paragraph{The dynamic setting.}
In typical load-balancing and data-structures applications, however, the items can be both inserted and deleted dynamically over time. Here two natural models have been studied: (i) the \defn{insertion/deletion} model in which each insertion involves a new ball with independent random bin choices, and (ii) the \defn{reinsertion/deletion} model in which a ball can be {\em reinserted} after being deleted, and has the same two random bin choices each time it is reinserted. Although these two models may seem quite similar at first glance, we shall see later that the distinction is significant. 

Note that, whereas in the insertion-only setting, $m$ is set to be the total number of insertions, in the dynamic setting, $m$ is set to be an \emph{upper bound} on the number of balls that are present at any given moment (and the sequence of insertions/deletions may be infinite). The objective is to minimize the \emph{overload}, which is now defined as the amount by which the maximum load exceeds $m / n$ at any given moment.\footnote{It is tempting to define the overload to be the amount by which the maximum load exceeds $m(t) / n$, where $m(t)$ is the number of balls present at time $t$. However, the following (folklore) example demonstrates the flaw with such a definition: Suppose we insert $m$ balls (using an arbitrary insertion strategy), and then we delete a random $m/2$ of those balls. Since the $m/2$ deletions are random, even if the system was perfectly balanced after the initial $m$ insertions, the bin loads will typically be $m/2n \pm \sqrt{m/2n}$, and the maximum load will be $m(t) / n + \tilde{\Theta}(\sqrt{m / n})$, which is no better than the bound trivially achieved by \textsc{SingleChoice}.}

Azar et al.\cite{ABKU94}~considered the insertion/deletion model with $m = n$ and with \emph{random deletions}: that is, $n$ balls are inserted initially, and then there is an infinite sequence of alternating insertions/deletions, where each deletion removes a \emph{random} ball. They showed that, at any given moment, the \textsc{Greedy} strategy achieves a maximum load of $\log \log n + O(1)$, with high probability in $n$. 

Subsequent work has considered the more general setting where the insertions/deletions are determined by an \defn{oblivious-adversary} (i.e., an adversary that does not know the random choices of the algorithm), and where the only constraint on the adversary is that the number of balls in the system can never exceed $m$. Using the witness tree technique, first introduced by \cite{ColeMHMRSSV98}, Cole et al.~\cite{CFMMSU98} analyzed the reinsertion/deletion model with $m = n$, and established that the \textsc{Greedy} strategy guarantees a maximum load of $O(\log \log n)$ with high probability in $n$. Later, V\"{o}cking \cite{Vocking99} improved this to $\log \log n + O(1)$, which remarkably, matches the bound in the non-dynamic (insertion-only) case up to an additive $O(1)$ term. 

\paragraph{What about the dynamic heavily-loaded case?}
For more than two decades, it has remained an open question what the optimal bounds are in the \emph{heavily-loaded case} if we wish to support both insertions and deletions performed by an oblivious adversary. Besides obvious theoretical interest, the question also arises naturally in practice---for example, as a scheduling problem in which jobs arrive and depart over time, the number of jobs (balls) at any moment is much larger than the number $n$ of machines (bins), and the only guarantee on the arrivals/departures of jobs is an upperbound $m/n$ on the average load at any moment. 


The dynamic heavily-loaded setting was studied by Cole et al.~\cite{CFMMSU98} and V\"{o}cking \cite{Vocking99, Vocking03}, who showed that \textsc{Greedy} has overload $\log \log n + O(m/n)$ with high probability in $n$. 
But again this bound is already worse for $m \gg n\log n$ than the $O(\sqrt{(m/n)\log n})$ overload bound for \textsc{SingleChoice} (which also holds in the dynamic setting).

However, it is widely believed that \textsc{Greedy} should also achieve similar bounds in the dynamic heavily-loaded case as in the non-dynamic heavily-loaded case (i.e., an overload of $O(\log \log n)$ and $O(\log m)$, w.h.p.~in $n$ and $m$, respectively). The current limitation would seem to be a technical one: the witness-tree techniques that allow for us to analyze dynamic games with oblivious adversaries \cite{CFMMSU98, Vocking03} are incompatible with the techniques (i.e., Markov-chain \cite{BCSV00proc} and potential-function \cite{PTW10, los2022balanced, talwar2014balanced} arguments) that achieve strong bounds in the heavily-loaded case. 

In this work we prove new upper and lower bounds for the dynamic heavily-loaded case. We split our results into two parts, the first of which considers the insertion/deletion model, and the second of which considers the reinsertion/deletion model. 
 
\subsection{Results in the Insertion/Deletion Model}

We begin by considering the insertion/deletion model, that is, an oblivious adversary performs an arbitrary sequence of insertions/deletions subject only to the constraint that no more than $ m $ balls are present at a time.


\paragraph{A lower bound for \textsc{Greedy}.}
We show that, somewhat surprisingly, the \textsc{Greedy} strategy actually \emph{does not} offer strong bounds in the dynamic heavily-loaded setting. 
In particular, already for $n = 4$ bins, there exists an oblivious sequence of insertions/deletions after which there is a maximum load of 
\[m / n + \Omega(\sqrt{m})\]
with probability $\Omega(1)$. In other words, the \textsc{Greedy} strategy is no better than \textsc{SingleChoice} in this setting!

Our result represents a remarkable departure from the lightly-loaded $m = n$ case, where \textsc{Greedy} achieves an optimal bound of $O(\log \log n)$ (even in the \emph{re}insertion/deletion model). The result also offers an explanation for why all previous attempts \cite{CFMMSU98,Vocking03} to analyze \textsc{Greedy} for large $m$ have yielded only relatively weak bounds. 

The high-level intuition behind our lower bound is as follows. Using \textsc{Greedy}, if some bin $i$ contains far fewer balls than the other bins, then there will be a contiguous time window during which all of the insertions are maximally biased towards bin $i$. But this means that, later on, the adversary can perform a sequence of deletions in which the balls being \emph{deleted} exhibit a strong bias towards being from bin $i$. In other words, the biases that \textsc{Greedy} exhibits during insertions can be thrown back at it by future deletions. 

We present the full construction in Section \ref{sec:greedy}. As a warmup, we first show a simpler (but already nontrivial) lower bound of $m/n +\Omega(m^{1/4})$ for $n=4$ bins in Section \ref{sec:greedysimpler}, and then give the full lower bound of $m / n + \Omega(m^{1/2})$ in Section \ref{sec:unevengreedy}. For ease of exposition we mostly focus on the case of $n=4$ --- however, we also show how to use our techniques to obtain a lower bound of $m / n + m^{1/4} / \poly(n)$ for general $n$. 


\paragraph{The \textsc{ModulatedGreedy} algorithm.}
Of course, the above phenomenon is not isolated to the \textsc{Greedy} strategy. Any strategy that exhibits biases between bins is at risk of having those biases thrown back at it via future deletions. This raises a natural question: is it possible for \emph{any 2-choice allocation strategy} to beat the bounds trivially achieved in the single-choice model?

Our second result is a new algorithm called \textsc{ModulatedGreedy}, in the insertion/deletion model, that at any time,
with high probability in $m$, achieves a maximum load of
\[m / n + O(\log m).\]
This bound is optimal for any strategy that achieves high-probability bounds in $m$ (see Section \ref{sec:tightness}).

Given the choice between two bins $i$ and $j$, the \textsc{ModulatedGreedy} algorithm chooses between the bins probabilistically, based on how their loads compare. In particular, it carefully modulates its biases between bins so that the adversary is unable to find any non-trivial correlations between how balls are inserted.
Interestingly, the structure of \textsc{ModulatedGreedy} also allows for a direct combinatorial analysis, which proceeds by coupling the behavior of \textsc{ModulatedGreedy} to a seemingly different (and much simpler) randomized process that we call the {\em stone game}.

\paragraph{Generalizations.} Our analysis of \textsc{ModulatedGreedy} extends to support a number of generalizations and applications. This includes a tight bound of $m / n + O(\beta^{-1} \log m)$ for the \emph{$(1 + \beta)$-choice} version of the game \cite{peres2010}, where a $(1 - \beta)$-fraction of the balls are inserted using \textsc{SingleChoice} and only a $\beta$-fraction of the balls get two choices; a bound of $m/n + \polylog m$ for the dynamic balls-and-bins game on an undirected well-connected regular graphs \cite{bansal2021well, KP06}; and a bound of $m / n + O(\log M)$ for the setting in which $m$ is permitted to increase over time, subject only to the constraint that $m \le M$. In all of these settings, the previous states of the art were restricted to the insertion-only model.

To describe the main ideas as clearly as possible, we describe these results in two parts. In Section \ref{sec:upper} we consider a simpler version of \textsc{ModulatedGreedy} that guarantees the $m/n + O(\log m)$ bound for insertion/deletion sequences of $\poly(m)$ length. 
Later, in Section \ref{sec:applications}, we consider the general setting with unbounded request sequences and where $m$ can increase over time. The extensions to the $(1+\beta)$-choice and the graphical 2-choice processes are described in Section \ref{sec:ext}.


\subsection{An Impossibility Result for the Reinsertion/Deletion Model}

Finally, in Section \ref{sec:generallower}, we turn our attention to the reinsertion/deletion model. That is, the adversary can perform an arbitrary sequence of insertions, deletions, and reinsertions (as long as the ball being reinserted is not currently present) subject only to the constraint that no more than $m$ balls are present at a time. 

Here we establish an impossibility result. Consider any 2-choice bin-allocation strategy that is {\em oblivious} to the specific {\em identities} of balls (i.e., when a ball is inserted, all that the strategy gets to see is the pair $i, j$ of bins that the ball is assigned to). We show that, against any such strategy, it is possible for an oblivious adversary to force a maximum load of $m/4 + \poly(m)$ at some point in the first $\poly(m)$ insertions/deletions, with high probability in $m$.


This result reveals a fundamental (and perhaps unexpected) gap between the insertion/deletion model and the reinsertion/deletion model. 
In particular, in the lightly-loaded setting  with deletions where $m \leq n$, both models yield the same $O(\log \log n)$ bounds even for infinite sequences of reinsertions/deletions \cite{CFMMSU98,Vocking03}.
But, in the heavily-loaded setting, the cyclic dependencies that are introduced by reinsertions (i.e., a ball $x$ being reinserted is being placed into a system whose state has \emph{already} been affected by $x$'s bin choices in the past) end up being lethal to any ID-oblivious allocation strategy. 

\subsection{Other Related Work}

Beyond research on the heavily-loaded and dynamic settings, there has been a large body of work on other ways to extend the 2-choice allocation framework---because the literature on this subject is so extensive, we give only a brief overview here. These extensions have included work on restricted classes of insertion strategies (e.g., $(1 + \beta)$-choice strategies \cite{peres2010, weighted3}, thinning strategies \cite{los2022balanced,feldheim2021power,los2022balanced2}, strategies with limited information \cite{los2022balanced2}, etc.), on balls with nonuniform sizes \cite{talwar2014balanced, weighted2, weighted3, weighted4}, on parallel settings in which balls arrive in batches \cite{parallel1, parallel2, parallel3, parallel4, parallel5}, on settings in which bins correspond to vertices on a graph \cite{bansal2021well,KP06}, on settings where balls can be relocated after insertion \cite{aamand2021load, bender2021optimal}, etc. Another notable extension is V\"ocking's asymmetric $d$-choice paradigm \cite{Vocking03} which, in the lightly-loaded setting, chooses between $d$ bins on each insertion to achieve a maximum load of $O((\log \log n) / d)$.  

Another line of work, related to the current work on the dynamic setting, is on queuing models \cite{mitzenmacher2001power, vvedenskaya1996queueing, mukherjee2018universality, bramson2010randomized, luczak2006maximum, brightwell2012supermarket, eschenfeldt2016supermarket, luczak2005strong}, where insertions and deletions are {\em stochastic}. Many of these focus on the so-called supermarket model, introduced by \cite{mitzenmacher2001power, vvedenskaya1996queueing}, in which customers (i.e., balls) arrive in a Poisson stream of rate $\lambda n$, $\lambda < 1$, and are processed within each queue (i.e., bin) in FIFO order, where each customers requires processing time that is exponentially distributed with mean $1$. In the case where $\lambda$ is allowed to go to $1$ (see, e.g., \cite{brightwell2012supermarket, eschenfeldt2016supermarket}), the number of balls in the system can become $\omega(n)$ (this is analogous to the heavy case in standard balls and bins). However, because insertions/deletions are assumed to be stochastic, the analyses (and the flavors of the results) take a very different form than those in this paper (where deletions are performed by an oblivious adversary, and the number of balls in the system is deterministically bounded by a parameter $m$). 

In addition to the past work described above, there have also been recent efforts within the succinct-data-structure literature to obtain stronger bounds for the reinsertion/deletion model in specialized regimes, resulting in a 3-choice allocation scheme that achieves a bound of $m/n + O(\log \log n) + O(\sqrt{m / n} \cdot \sqrt{\log (m / n)})$ on the maximum load at any given moment \cite{bender2021tiny, bender2021all}. This bound is useful when $m \le O(n \log n)$, but does not improve significantly on \textsc{SingleChoice} when $m \gg n$.

\subsection{Preliminaries}
In the \defn{dynamic 2-choice allocation problem}, an oblivious adversary performs a sequence of ball insertions and deletions subject to the constraint that the number of balls in the system can never exceed $m$. 
 Whenever a ball $x$ is inserted, a uniformly random pair $h(x) = (h_1(x), h_2(x)) \in [n] \times [n]$ of distinct bins is selected, and the \defn{insertion strategy} must choose which of the bins $h_1(x)$ or $h_2(x)$ the ball will be placed in. The pair $h(x)$ is sometimes referred to as the \defn{hash} of the ball $x$.  

There are two models that we will consider for insertions and deletions. In the \defn{insertion/deletion model}, each insertion \textsc{Insert($x$)} places a new ball $x$ into the system that has never been present before. In the \defn{reinsertion/deletion model}, each insertion \textsc{Insert($x$)} places a ball $x$ into the system that is not  \emph{currently} present, but that may have been present in the past (each time $ x $ is inserted, its bin pair $h(x)$ stays the same). In both models, the \textsc{Delete($x$)} operation selects a ball $ x $ that is currently present and removes it.

We are interested in bounding the maximum \defn{load} (i.e., the number of balls) of any bin. Our algorithms will offer guarantees with high probability (w.h.p.) in $m$, meaning that the failure probability is $1 / \poly(m)$ for a polynomial of our choice. Two basic insertion strategies that we will discuss frequently are \textsc{Greedy}, which always selects the least full of the bins $h_1(x), h_2(x)$, and \textsc{SingleChoice}, which always selects bin $h_1(x)$.

In our lower bound for the reinsertion/deletion model (Section \ref{sec:generallower}), we will study the class of \defn{ID-oblivious} insertion strategies---such a strategy makes each insertion decision based on the hash $h(x)$ of the ball being inserted, rather than based on the specific identity $x$ of the ball. Formally, an ID-oblivious strategy is one that can be implemented with operations \textsc{Insert($h_1(x), h_2(x)$)} (indicating the pair of bins for the ball being inserted) and \textsc{Delete($r$)} (indicating a deletion of the $r$-th-most-recently-inserted ball of those present).

Finally, although  $h(x) = (h_1(x), h_2(x))$ is a uniformly random pair of \emph{distinct} bins, any strategy in the insertion/deletion model can choose to view $h(x)$ as a pair of \emph{independent bins} by artificially resetting $h_2(x) = h_1(x)$ with probability $1 / n$. The strategies that we design in this paper will assume (without loss of generality) that they are given a uniformly random pair of (not necessarily distinct) bins for each insertion.

\section{\textsc{ModulatedGreedy}: Handling $\poly(m)$ Insertions/Deletions}\label{sec:upper}

In this section, we consider the insertion/deletion model, with $n$ bins and up to $m$ balls present at a time, and we describe an insertion strategy, called \textsc{ModulatedGreedy}, that achieves a strong bound on maximum load. Here, we describe the simplest possible version of the strategy, which supports any sequence of $\poly(m)$ insertions/deletions while guaranteeing a maximum load of $m/n + O(\log m)$ with high probability in $m$. 
Later, in Section \ref{sec:applications}, we will extend \textsc{ModulatedGreedy} in various ways, such as supporting an infinite sequence of insertions/deletions, allowing $m$ to increase over time, etc.



The main result of the section is the following:

\begin{theorem}
Let $m \ge n$. Consider the insertion/deletion model with $n$ bins and an upper bound of at most $m$ balls present at a time. Consider a sequence of $\poly(m)$ insertions/deletions, where insertions are implemented using \textsc{ModulatedGreedy}. With high probability in $m$, \textsc{ModulatedGreedy} does not halt during any of the insertions/deletions, and no bin ever has load more than $m/n + O(\log m)$.
\label{thm:modgreedy}
\end{theorem}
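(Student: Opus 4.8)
The plan is to define \textsc{ModulatedGreedy} so that, given a pair of bins $(i,j)$ with loads $\ell_i \le \ell_j$, the ball is placed in bin $i$ with probability $\tfrac12 + \tfrac12 g(\ell_j - \ell_i)$ for some increasing function $g$ saturating at $1$ once the gap exceeds roughly $\Theta(\log m)$, and the algorithm \emph{halts} if it ever encounters a bin of load exceeding $m/n + c\log m$. The key design goal is that, from the adversary's point of view, the decision on each insertion is ``close to a fair coin flip'' unless the two bins are already far apart — so the adversary can never steer a nontrivial bias into the system, and the biases cannot be reflected back during deletions. To make this precise I would introduce a potential function on the load vector (e.g., $\Phi = \sum_i \big(\phi(\ell_i - m/n) + \phi(m/n - \ell_i)\big)$ for an exponential-type $\phi$, or the ``stone game'' quantity hinted at in the excerpt) and show a drift statement: conditioned on the full history, every insertion and every deletion changes $\E[\Phi]$ by at most a multiplicative factor bounded away from growth, so that $\E[\Phi]$ stays $\poly(m)$ throughout a $\poly(m)$-length request sequence. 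A Markov-inequality union bound over the $\poly(m)$ time steps then gives that $\max_i |\ell_i - m/n| = O(\log m)$ with high probability in $m$, which also implies the halting condition is never triggered.

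The steps, in order, are: (1) state the exact modulation function $g$ and the halting rule, and record the one-step transition of the load vector for an insertion (bin pair uniform over $[n]^2$, then the biased coin) and for a deletion (the adversary picks which ball, hence effectively which bin, but is oblivious to our coins); (2) set up the coupling to the stone game — I expect this is where insertions-with-modulated-bias get replaced by a cleaner process in which the sign of the bias always opposes the current imbalance and its magnitude is controlled, letting us ignore the adversary's choices; (3) prove the per-step drift bound $\E[\Phi_{t+1} \mid \mathcal F_t] \le (1+o(1/\poly(m)))\Phi_t + O(1)$, separately for insertion steps and deletion steps — here the crucial point is that a deletion from an \emph{overfull} bin actually helps, and a deletion from an underfull bin is offset by the fact that the adversary cannot correlate it with future insertions because \textsc{ModulatedGreedy}'s next decision is (near-)unbiased; (4) conclude $\E[\Phi_t] = \poly(m)$ for all $t \le \poly(m)$ by iterating the recurrence, then apply Markov's inequality at each time step and union bound; (5) translate the $\Phi = \poly(m)$ tail into the $\pm O(\log m)$ load bound and note this certifies the no-halt claim.

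The main obstacle I anticipate is step (3) for deletion steps. Unlike insertions, deletions are chosen adversarially, and a naive bound would let the adversary repeatedly delete from the least-loaded bin, driving that bin's load down and inflating $\Phi$ on the underfull side. The resolution — and the conceptual heart of the argument — must be that \textsc{ModulatedGreedy}'s near-fair coin on subsequent insertions means any deficit the adversary manufactures is immediately ``hidden'': the adversary's information about which balls sit in which bin is limited precisely because the modulation kept the insertion biases small, so it cannot systematically convert a small deficit into a large one. Formally I would argue that the deficit on any single bin performs something like an unbiased random walk with a restoring drift whenever the deficit is large (because then $g$ saturates and \textsc{Greedy}-like behavior kicks in), so it is confined to $O(\log m)$; the potential function $\Phi$ has to be chosen (exponential in the deficit, with the right base relative to the $1/n$ hash probability) so that this restoring drift beats the variance. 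A secondary, more technical obstacle is handling the boundary where $g$ transitions from its linear-ish regime to saturation, and making sure the halting rule (which truncates the state space) does not itself bias the drift estimate — standard but needs care.
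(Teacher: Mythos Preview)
Your plan has a real gap, and it sits exactly where you flag ``the main obstacle'': step (3) for deletions. A drift inequality on an exponential potential $\Phi$ cannot absorb adversarial deletions. The adversary, though oblivious, can arrange that a batch of balls was inserted during a window when some bin $i$ was underloaded; even with a small bias, those balls are slightly more likely to sit in bin $i$, and deleting that batch later produces a deficit in bin $i$ that no per-step bound on $\E[\Phi_{t+1}\mid\mathcal F_t]$ can offset---at the deletion step there is simply no compensating insertion, and the exponential term for bin $i$ jumps. Your proposed resolution (``the adversary's information is limited because the biases were small'') is the right intuition but does not yield a one-step inequality; it is a statement about correlations across many steps, and a potential/drift framework does not capture it. Indeed, this is exactly the mechanism behind the paper's lower bound for \textsc{Greedy}, and a generic modulation $g(\ell_j-\ell_i)$ does not escape it.

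The paper does something structurally different from a drift argument. The modulation is $\tfrac12 + \tfrac{\ell_j-\ell_i}{2T}$ with $T = m/n + c\log m - \overline{\ell}$; note the dependence on the \emph{total} current load, not just on the gap---this is essential and is not your $g(\ell_j-\ell_i)$. The algebraic payoff (Lemma~\ref{lem:modulateddist}) is that bin $k$ is selected with probability exactly $T_k/\sum_i T_i$, where $T_k = m/n + c\log m - \ell_k$ is the ``free space'' in bin $k$. That is precisely the probability of drawing a color-$k$ stone from an urn holding $T_i$ stones of color $i$ for each $i$. So the process couples \emph{exactly} to the stone game: activations draw a uniformly random stone from an inactive bag, deactivations return the $r$-th most recently activated stone. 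The key point---and the reason deletions cost nothing---is that neither operation ever inspects a stone's label, so by a pure symmetry argument the inactive bag is at all times a uniformly random subset of all $nQ$ stones. A single hypergeometric tail bound then says each color's count in the inactive bag is $s/n \pm s/(2n)$ w.h.p., which back in the balls-and-bins world is exactly the statement that the halting condition never fires. There is no martingale, no drift, no case split on insertions versus deletions; the coupling reduces everything to one concentration inequality. Your mention of the stone game as ``a possible alternative potential'' undersells it: it is not a potential at all, but an exact change of process that makes deletions trivially symmetric.
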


When we describe the lower bound for \textsc{Greedy} in Section \ref{sec:greedy}, we will see  that the main problem with \textsc{Greedy} is that it is too aggressive. Given the choice between two bins $i, j$, as \textsc{Greedy} always chooses the less loaded of the two---this creates correlations between balls that can be exploited to construct a bad sequence of insertions/deletions. In contrast, \textsc{ModulatedGreedy} will try to be as \emph{unaggressive} as possible, while still guaranteeing an upper gap of $O(\log m)$. In particular, it carefully modulates its behavior and only exhibits a strong bias between two bins $i$ and $j$ if (1) the two bins $i$ and $j$ have significantly different loads; and (2) the system is nearly saturated (i.e., there are nearly $m$ balls present). 

As we shall see, this modulated behavior also allows for a simple (but clever) combinatorial analysis, marking a departure from the (typically quite involved) potential-function and Markov-chain arguments used in past analyses of the heavily-loaded case.

\subsection{The Algorithm} 
The \textsc{ModulatedGreedy} algorithm for allocating a bin to a ball is given below. We assume without loss of generality that $m$ is a multiple of $n$.
\begin{algorithm}
\begin{algorithmic}
\Procedure{ModulatedGreedy}{} 
\State Select two bins $i, j \in [n]$ independently and uniformly at random. 
\State Set $T = m/n + c \log m - \sum_r \ell_r / n$. 
\If{$ (\max_k \ell_k) - (\min_k \ell_k) \le T$}
  \State{Assign the ball to bin $i$ with probability $1/2 + \frac{\ell_j - \ell_i}{2T}$, and otherwise assign it to bin $j$.}
\Else
  \State{Halt.}
\EndIf
\EndProcedure
\end{algorithmic}
\label{alg:modulated1}
\caption{The \textsc{ModulatedGreedy} insertion strategy. Here, $\ell_k$ is the number of balls in bin $k$ prior to the insertion, and $c$ is a large positive constant.} 
\end{algorithm}

For $ k\in [n]$, let $\ell_k$ denote the load on bin $k$ prior to the insertion, let
$\overline{\ell} = \sum_k \ell_k/n$ be the average bin load, and $c$ be a (sufficiently large) fixed constant. 
When choosing between two bins $i, j$, the algorithm exhibits bias
\[(\ell_j - \ell_i)/2T\]
towards bin $i$, where  \[T = m/n + c \log m - \overline{\ell}.\] 
Note that the algorithm is well-defined as long as $|\ell_j - \ell_i| \leq T$ for all $i,j \in [n]$.
One should think of $T$ as representing the average amount of leftover space that each bin would have if each bin had a total capacity of $m/n + c \log m$ balls. This means that the bias is proportional to the difference $\ell_j - \ell_i$ between the loads of the bins, and is inversely proportional to the average amount $T$ of space left in each bin. 



The following lemma gives a closed-form solution for the probability of a given bin $k$ being selected by \textsc{ModulatedGreedy}.
\begin{lemma}
Suppose that $|\ell_i - \ell_j| \le T$ for all bins $i, j$. Consider a bin $k$, and set $T_k = m/n + c \log m - \ell_k$. Upon an insertion, a bin $k$  is selected with probability $T_k/(nT) = T_k/(\sum_i T_i)$.
\label{lem:modulateddist}
\end{lemma}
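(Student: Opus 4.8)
The plan is to compute $\Pr[\text{bin } k \text{ selected}]$ by conditioning on the unordered pair of bins that is drawn. Since the two bins $i,j$ are drawn independently and uniformly from $[n]$, bin $k$ can be selected in two ways: either $k$ is one of the two drawn bins and it wins the probabilistic comparison, or (in the degenerate case $i=j=k$) it is trivially chosen. I would write
\[
\Pr[k \text{ selected}] = \sum_{j \in [n]} \Pr[\text{draw } (k,j)] \cdot \Pr[k \text{ beats } j \mid \text{draw } (k,j)] + \sum_{i \in [n]} \Pr[\text{draw } (i,k)] \cdot \Pr[k \text{ beats } i \mid \text{draw } (i,k)],
\]
being careful about how the ordered-pair sampling interacts with the algorithm's rule (the algorithm calls the first bin $i$ and the second $j$, and biases toward $i$ by $(\ell_j-\ell_i)/(2T)$). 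Each ordered pair $(a,b)$ is drawn with probability $1/n^2$, and given the draw $(a,b)$, bin $a$ is kept with probability $1/2 + (\ell_b-\ell_a)/(2T)$ and bin $b$ with the complementary probability $1/2 + (\ell_a-\ell_b)/(2T)$; note these are in $[0,1]$ precisely because $|\ell_a-\ell_b|\le T$, which is the hypothesis (and is exactly the condition under which \textsc{ModulatedGreedy} does not halt).

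The key computation is then just summing the bias terms. For a fixed bin $k$, collecting the contribution from every ordered pair in which $k$ appears (as either the first or second coordinate), the "$1/2$" parts contribute $\frac{1}{n^2}\cdot 2n \cdot \frac12 = \frac1n$ total, and the bias parts contribute $\frac{1}{n^2}\sum_{j\in[n]} \frac{\ell_j - \ell_k}{2T}$ from the pairs $(k,j)$ plus an identical amount from the pairs $(j,k)$, giving $\frac{1}{n^2}\sum_{j\in[n]} \frac{\ell_j-\ell_k}{T}$. Writing $\sum_{j}(\ell_j - \ell_k) = n\overline{\ell} - n\ell_k$, this equals $\frac{\overline{\ell}-\ell_k}{nT}$. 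Hence
\[
\Pr[k \text{ selected}] = \frac1n + \frac{\overline{\ell} - \ell_k}{nT} = \frac{T + \overline{\ell} - \ell_k}{nT}.
\]
Since $T = m/n + c\log m - \overline{\ell}$, we get $T + \overline{\ell} - \ell_k = m/n + c\log m - \ell_k = T_k$, so $\Pr[k \text{ selected}] = T_k/(nT)$. Finally, $\sum_i T_i = \sum_i (m/n + c\log m - \ell_i) = n(m/n + c\log m) - n\overline{\ell} = nT$, which gives the alternative form $T_k/(\sum_i T_i)$ and incidentally confirms the probabilities sum to $1$ (a good sanity check to include).

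I do not expect a genuine obstacle here — this is a one-line conditioning argument followed by an arithmetic simplification. The only points requiring care are bookkeeping: (i) handling the diagonal draws $i=j=k$ correctly (when $i=j$, the bias term is $0$ and the "keep bin $i$" and "keep bin $j$" events coincide, so bin $k$ is selected with probability $1$, which is already correctly accounted for by the symmetric double-counting above, since $(k,k)$ contributes $\frac{1}{n^2}(\frac12+\frac12)=\frac{1}{n^2}$); (ii) making sure the direction of the bias is tracked consistently between the $(k,j)$ and $(j,k)$ cases; and (iii) noting explicitly that the hypothesis $|\ell_i-\ell_j|\le T$ guarantees all the quantities $1/2 \pm (\ell_b-\ell_a)/(2T)$ are valid probabilities, so the expression for $\Pr[k\text{ selected}]$ is meaningful. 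I would close by remarking that, as a by-product, this lemma shows \textsc{ModulatedGreedy} induces exactly the same per-insertion bin distribution as a process that picks bin $k$ with probability proportional to its remaining capacity $T_k$ — a fact that will be convenient for the later coupling to the stone game.
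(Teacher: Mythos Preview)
Your proposal is correct and follows essentially the same approach as the paper: both condition on the ordered pair $(i,j)$ drawn, sum the contributions from all pairs containing $k$, and simplify using $\sum_j(\ell_j-\ell_k)=n(\overline{\ell}-\ell_k)$ and $T+\overline{\ell}-\ell_k=T_k$. The paper's write-up is slightly terser (it folds the diagonal case $i=j=k$ directly into the sum rather than discussing it separately), but the computation is identical.
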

\begin{proof}
Let $i, j$ denote the random bin choices for the ball being inserted. The probability that a given bin $k$ is selected is given by
\begin{align*} 
& \Pr[i, j = k] + \sum_{s \neq k} \Pr[i = k, j = s] \left(\frac{1}{2} + \frac{\ell_s - \ell_k}{2T}\right) + \sum_{s \neq k} \Pr[i = s, j = k] \left(\frac{1}{2} + \frac{\ell_s - \ell_k}{2T}\right) \\
& = \frac{1}{n^2} + \frac{2}{n^2} \sum_{s \neq k} \left(\frac{1}{2} + \frac{\ell_s - \ell_k}{2T}\right)  = \frac{2}{n^2} \sum_{s = 1}^n \left(\frac{1}{2} + \frac{\ell_s - \ell_k}{2T}\right) \\
& = \frac{2}{n} \left(\frac{1}{2} + \frac{\overline{\ell} - \ell_k}{2T}\right) 
 = \frac{T + \overline{\ell} - \ell_k}{n T} 
 = \frac{T_k}{n  T}.
\end{align*}
Finally we note that $\sum_{i=1}^n T_i = \sum_{i=1}^n  (m/n + c \log m - \ell_i) = m + n c \log m - n\overline{\ell}   = nT$.
\end{proof}

\subsection{Analysis}
To analyze \textsc{ModulatedGreedy}, we begin by describing a seemingly different process (which we call the stone game) that, by design, yields to a simple combinatorial analysis. We then show that the \textsc{ModulatedGreedy} algorithm and the stone game can be \emph{coupled together} so that bounds on the behavior of the stone game directly imply bounds on the behavior of \textsc{ModulatedGreedy}.

\paragraph{Stone Game.}
In the $(Q,n)$-{\em stone game}, parameterized by $Q$ and $n$, there are $Qn$ stones which are distributed among two bags; an {\em inactive bag} and an {\em active bag}.
Initially the active bag is empty, and all the stones are in the inactive bag. 

The game supports two types of operations: the \textsc{Activate()} operation moves a \emph{random} stone from the inactive bag to the active bag; and the \textsc{Deactivate($r$)} operation examines the stones in the active bag, selects the stone that was added the $r$-th most recently, and moves it back to the inactive bag. (\textsc{Activate()} can only be called if the inactive bag is non-empty, and \textsc{Deactivate($r$)} can only be called if the active bag contains $r$ or more balls). The sequence of operations is generated by an oblivious adversary, independent of the random bits used by the game.

The stones are labeled $x_{k,q}$ for $k\in [n],q \in [Q]$. We call $k$ the \emph{color} of the stone, so that there are $Q$ stones of each color. However, the labels of the stone should be thought of as {\em hidden}, since the behaviors of \textsc{Activate()} and \textsc{Deactivate($r$)} do not depend on the labels of the stones.

We will now prove some lemmas establishing that the stone game is, by design, very well behaved.  
Our first lemma shows that, even though the adversary gets to perform activations/deactivations, it has no control over which specific stones are in the active bag.
\begin{lemma}
At any given moment, if the active/inactive bag contains $s$ stones, then these stones are a uniformly random subset of size $s$ of the stones $\{x_{k, q}\}_{k \in [n], q \in [Q]}$.
\label{lem:randomactive}
\end{lemma}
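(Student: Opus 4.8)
The plan is to induct on the number of operations performed so far, maintaining the stronger invariant that the pair (active bag contents, ordered list of stones in the active bag by insertion time) is exchangeable in the sense that, conditioned on the active bag having size $s$, its contents are a uniformly random $s$-subset of all $Qn$ stones, and moreover every ordering of that subset consistent with the operation history is equally likely. Concretely, I would track the random variable $A_t \subseteq \{x_{k,q}\}$ recording which stones are active after $t$ operations (together with their insertion order), and show by induction that $A_t$, conditioned on the sequence of operation types, is uniform over all size-$|A_t|$ subsets. The base case is immediate since $A_0 = \emptyset$.

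For the inductive step there are two cases. If the $t$-th operation is \textsc{Deactivate($r$)}, then the $r$-th-most-recently-added active stone is removed; since the operation depends only on insertion order and not on labels, and since by induction all orderings of the active set were equally likely, removing the stone in a fixed ordinal position preserves uniformity of the remaining $|A_t|-1$ stones (this is the standard fact that a uniformly random subset, with a uniformly random eligible ordering, remains uniform after deleting the element in a fixed position). If the $t$-th operation is \textsc{Activate()}, then a stone is chosen uniformly at random from the inactive bag, whose contents are (by induction, applied to the complement) a uniformly random $(Qn - |A_{t-1}|)$-subset; adding a uniformly random element of a uniformly random complement yields a uniformly random subset of size $|A_{t-1}|+1$. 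The key point making both steps go through is that the operations are \emph{label-oblivious}: \textsc{Deactivate}'s choice is a deterministic function of the (label-independent) operation history, and \textsc{Activate}'s choice is independent of everything except the current inactive set. Since the adversary is oblivious, the operation sequence is fixed in advance and carries no information about the labels, so conditioning on it does not break the symmetry among stones.

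The main obstacle, and the only place requiring care, is the \textsc{Deactivate} step: one must argue that deleting the stone in a fixed ordinal position of the active list does not bias \emph{which} stones remain. This is where the strengthened inductive hypothesis (uniformity over orderings, not just over sets) is essential — without it, the ordinal position could in principle correlate with stone identity. Once the ordering-level invariant is in place, the argument is a routine symmetry/exchangeability computation, and the statement about the \emph{inactive} bag follows by taking complements. I would present the induction cleanly, state the ordering-level invariant explicitly as the thing being proved, and dispatch the two cases in a few lines each.
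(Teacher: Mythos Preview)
Your inductive approach is correct, but it takes a more elaborate route than the paper. The paper's proof is a one-shot symmetry argument: since neither \textsc{Activate()} nor \textsc{Deactivate($r$)} ever inspects a stone's label, the entire process is invariant under any permutation $\pi$ of the label set $\{x_{k,q}\}$. Fixing the operation sequence and all of the random coin flips of \textsc{Activate()}, if the inactive bag ends up equal to some set $S$, then under a relabeling by $\pi$ it would end up equal to $\pi(S)$; since the labeled process has the same distribution for every $\pi$, all $s$-subsets are equiprobable. This sidesteps entirely the need to track orderings or to treat the two operation types separately.

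Your induction is doing the same work one step at a time, and your observation that the ordering-level invariant is what makes the \textsc{Deactivate} case go through is exactly right (without it, ``$r$-th most recent'' could in principle correlate with identity). What your approach buys is a fully explicit, mechanical verification; what the paper's approach buys is brevity---the whole proof is two sentences once one notices the relabeling invariance. If you want to shorten your write-up, you could replace the induction with the single observation that the process commutes with relabeling, which is precisely the ``label-obliviousness'' you already identified as the key point.
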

\begin{proof}
The point is that the activation/deactivation operations do not depend on the labels of the balls. 

Formally, fix any sequence of activations/deactivations and the random choices of the \textsc{Activate()} operations, and let $S$ be set of stones currently in the inactive bag (the argument for the active bag is identical).
Then for any run of the game with a random permutation $\pi$ applied to the $Qn$ labels $\{x_{k, q}\}_{k \in [n], q \in [Q]}$,
the set stones in the active bag will be $\pi(S)$. Thus, if the inactive bag contains $s$ stones, every $s$-element subset of the $nQ$ stones is equally likely.
%
%
\end{proof}

This implies that as long as the inactive bag contains a reasonably large number of stones (namely, $\Omega(n \log (nQ))$), each color is guaranteed to have roughly equal representation in the bag.

\begin{lemma}
Suppose at some given moment, the inactive bag contains $s \ge c n \log (nQ)$ stones, for some large enough constant $c$. Let $s_k$ be the number of these stones with color $k$. Then  
$s_k \in [s/2n, 3s/2n]$ for each $k \in [n]$, with probability at least $1-1/(Qn)^{\Omega(c)}$.  
\label{lem:stonesbycolor}
\end{lemma}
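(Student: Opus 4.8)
The statement is a concentration bound on the color-counts of a uniformly random subset, so the plan is to combine Lemma~\ref{lem:randomactive} with a standard Chernoff-type argument. First I would invoke Lemma~\ref{lem:randomactive}: conditioned on the inactive bag containing exactly $s$ stones, those $s$ stones form a uniformly random $s$-subset of the full collection $\{x_{k,q}\}$. Hence for a fixed color $k$, the count $s_k$ is distributed as a hypergeometric random variable: we are drawing $s$ stones without replacement from a population of $Qn$ stones, of which exactly $Q$ have color $k$. Its expectation is $\mathbb{E}[s_k] = sQ/(Qn) = s/n$.

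Next, I would apply a Chernoff/Hoeffding bound for the hypergeometric distribution (or, equivalently, note that negatively associated indicator variables satisfy the same upper-tail bounds as independent ones, and sampling without replacement is stochastically dominated in the relevant sense by sampling with replacement for both tails). This gives, for any constant $\delta \in (0,1)$ — here $\delta = 1/2$ —
\[
\Pr\bigl[\,|s_k - s/n| > \tfrac{1}{2}\cdot \tfrac{s}{n}\,\bigr] \;\le\; 2\exp\!\bigl(-\Omega(s/n)\bigr).
\]
Since $s \ge c n \log(nQ)$, we have $s/n \ge c \log(nQ)$, so the right-hand side is at most $2(nQ)^{-\Omega(c)}$. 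Thus $s_k \in [s/2n,\, 3s/2n]$ fails for a fixed color $k$ with probability at most $(nQ)^{-\Omega(c)}$.

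Finally I would take a union bound over the $n \le nQ$ colors, which costs only a factor of $n$ and is absorbed into the $(nQ)^{-\Omega(c)}$ term (after decreasing the hidden constant in $\Omega(c)$ slightly). This yields the claimed bound: all $n$ color-counts simultaneously lie in $[s/2n, 3s/2n]$ with probability at least $1 - 1/(nQ)^{\Omega(c)}$.

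The only mild subtlety — and the one point worth stating carefully rather than waving through — is justifying the Chernoff bound in the without-replacement setting. The cleanest route is to cite the standard fact (Hoeffding, or Dubhashi--Ranjan on negative association) that indicator variables arising from sampling without replacement are negatively associated, and that negatively associated variables obey the same exponential tail inequalities as sums of independent Bernoullis; applying this to the $s$ indicators "the $i$-th drawn stone has color $k$" (whose sum is $s_k$) gives exactly the bound above. No lengthy computation is needed beyond plugging $s \ge cn\log(nQ)$ into the exponent.
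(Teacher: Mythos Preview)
Your proposal is correct and essentially identical to the paper's proof: invoke Lemma~\ref{lem:randomactive} to see that $s_k$ is hypergeometric with mean $s/n$, apply a Chernoff-type bound for sampling without replacement with $\delta=1/2$, plug in $s\ge cn\log(nQ)$, and union bound over the $n$ colors. The only cosmetic difference is that the paper cites \cite{FK15} (Section 22.5) for the without-replacement tail bound, whereas you cite Hoeffding or negative association; both justifications are standard and equivalent here.
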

\begin{proof}
By Lemma \ref{lem:randomactive}, the balls $S$ in the inactive bag are a random subset of size $s$ of the $Qn$ balls $\{x_{k, q}\}$. Let $X_k = \{x_{k, 1}, \ldots, x_{k, Q}\}$ be the set of all color-$k$ balls. Then $s_k = |X_k \cap S|$, the number $s_k$ of balls of color $k$ in $S$, has the hypergeometric distribution $H(Qn,Q,s)$.

As the standard tail bounds on sampling without replacement at least as sharp as those given by Chernoff bounds for sampling with replacement \cite{FK15} (Section 22.5), and as $\E[s_i]=s/n$, we get that 
\begin{equation}
\Pr[|s_k - s/n| \geq \epsilon s/n]   \le  2 \exp(-\epsilon^2 s/3n).
\label{eq:alphaconc}
\end{equation}
Setting $\epsilon = 1/2$, and taking a union bound over the $n$ colors, gives  that $s_k \in [s/2n, 3s/2n]$  for each $k \in [n]$
with probability $1 - 2 n \exp(-\Omega(c \log Qn))$ which is $1- 1/(Qn)^{\Omega(c)}$ for large enough $c$.
\end{proof}

\subsubsection{Relating the stone game to  the balls-and-bins game} One can think of the stones in the stone game as being similar to balls in the balls-and-bins game---the active bag represents the set of balls that are present, the color of a stone dictates which ``bin'' a given ball is in, and activations/deactivations correspond to insertions/deletions.

However, there are several significant differences between the games. Notably, the whole point of the balls-and-bins game is to ensure that no single bin contains too many balls, but in the stone game, the active bag trivially (and deterministically) has at most $Q$ stones of any given color. Nonetheless, we shall now see how to couple the two games together in such a way that our analysis of the stone game yields a bound for the balls-and-bins game.

\paragraph{Mapping between instances.}
We first giving a mapping between the sequence of insertions/deletions for balls-and-bins game and the input sequence for the stone game. For any sequence $\mathcal{S}$ of insertions/deletions in balls-and-bins game, define $\phi(\mathcal{S})$ to be a corresponding sequence of activations/deactivations, where each \textsc{Insert} operation is replaced with an \textsc{Allocate} operation, and where each \textsc{Delete($x$)} operation on a ball $x$ is replaced with a \textsc{Deactivate($r$)} operation, where $r - 1$ is the number of balls in the system that were inserted after $x$.

The following key lemma shows that the random choices in the two games can be coupled.
\begin{lemma} [Coupling]
\label{lem:coupling}
Let $n \le m$ and let $\Delta = c \log m$, where $c$ is the positive constant used by \textsc{ModulatedGreedy}. Consider a sequence $\mathcal{S}$ of insertions/deletions in a balls-and-bins game on $n$ bins, where there are never more than $m$ balls present at a time. Let $G_1$ be a balls-and-bins game with operation-sequence $\mathcal{S}$ and let $G_2$ be $(Q,n)$-stone game with $Q = m/n+\Delta$ with operation sequence $\phi(\mathcal{S})$. 

If $G_1$ is implemented using \textsc{ModulatedGreedy}, then there exists a coupling between $G_1$ and $G_2$ with the following property: Up until \textsc{ModulatedGreedy} halts, the number of balls in a given bin $k$ (in the balls-and-bins game) always equals the number of stones in the active bag with color $k$ (in the stone game).
\end{lemma}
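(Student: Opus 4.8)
The plan is to build the coupling by induction on the sequence of operations, maintaining the invariant that after each operation, there is a bijection between the balls currently present in $G_1$ and the stones currently in the active bag of $G_2$ that (a) respects order of insertion (the $r$-th-most-recently-inserted ball corresponds to the $r$-th-most-recently-activated stone), and (b) sends each ball in bin $k$ to a stone of color $k$. Given such an invariant, deletions are trivial: a \textsc{Delete($x$)} in $G_1$ with $r-1$ balls inserted after $x$ maps to \textsc{Deactivate($r$)} in $G_2$, which removes exactly the stone matched to $x$, so both the bijection and the color/load correspondence are preserved. The work is entirely in handling an insertion.

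For an insertion, the key computation is already done for us: Lemma~\ref{lem:modulateddist} says that under \textsc{ModulatedGreedy}, conditioned on the current loads $\ell_1,\dots,\ell_n$ (and assuming the halting condition has not triggered), the new ball lands in bin $k$ with probability $T_k/(nT)$ where $T_k = m/n + c\log m - \ell_k = Q - \ell_k$. On the stone-game side, \textsc{Activate()} moves a uniformly random stone out of the inactive bag; by Lemma~\ref{lem:randomactive} the inactive bag is a uniformly random subset, and conditioned on the active bag containing exactly $\ell_k$ stones of color $k$ for each $k$ (which is the inductive hypothesis, via the load correspondence), the inactive bag contains exactly $Q - \ell_k$ stones of color $k$; hence the activated stone has color $k$ with probability $(Q-\ell_k)/\sum_i(Q-\ell_i) = T_k/(nT)$, matching exactly. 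So I would couple the two random draws: draw the bin $k$ for \textsc{ModulatedGreedy} (using its internal randomness), then have \textsc{Activate()} pick a uniformly random stone of that same color $k$ from the inactive bag (which is a valid sampling of a uniform stone, since within a color all stones are exchangeable — this is precisely the content of Lemma~\ref{lem:randomactive}). Append the new ball and the new stone to the respective "most recently inserted" positions, extending the bijection; the color-$k$ count in the active bag goes up by one exactly as the load of bin $k$ does.

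One subtlety to address carefully: the coupling claim only needs to hold "up until \textsc{ModulatedGreedy} halts," so I need not worry about what happens after the halt — but I do need the invariant to be well-defined at every step before the halt. Before each insertion, since \textsc{ModulatedGreedy} has not halted, the condition $(\max_k \ell_k) - (\min_k \ell_k) \le T$ holds, so in particular $0 \le \ell_k \le \overline{\ell} + T = m/n + c\log m = Q$ for every $k$ (using $\min_k \ell_k \ge 0$ and $\max_k \ell_k \le \min_k \ell_k + T \le \overline{\ell} + T$), which guarantees $T_k \ge 0$ and hence that the inactive bag really does have $Q - \ell_k \ge 0$ color-$k$ stones available to activate, and that $\sum_i T_i = nT > 0$ so the probabilities are well-defined. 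I should also note the total count matches: $G_2$ has $Qn = m + n\Delta$ stones total and at most $m$ are ever active (since at most $m$ balls are ever present in $G_1$), so \textsc{Activate()} never fails for lack of inactive stones, and \textsc{Deactivate($r$)} never fails because $r$ is by construction at most the current number of present balls $=$ current number of active stones.

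The main obstacle — really the only non-bookkeeping point — is making the insertion coupling airtight: one must verify that "pick a uniform stone, then report its color" and "pick a color with probability $T_k/(nT)$, then pick a uniform stone of that color" induce the same joint distribution on (color of activated stone, identity of activated stone) conditioned on the past, and that this conditioning is consistent with the inductive hypothesis on the active bag being a uniform subset of the appropriate size. This is exactly where Lemma~\ref{lem:randomactive} and Lemma~\ref{lem:modulateddist} get invoked in tandem, and the care required is in checking that the conditioning on current loads does not destroy the uniformity of the inactive bag within each color class — which it does not, because the loads are a deterministic function of the coupled history on the $G_1$ side, and within a color the stones were never distinguished by the stone game. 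Everything else is a routine induction, so I would state the invariant precisely, check the base case (both games empty), and then do the deletion and insertion steps as above.
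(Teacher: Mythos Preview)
Your proof is correct and follows essentially the same approach as the paper: match the bin-selection probability $T_k/(nT)$ from Lemma~\ref{lem:modulateddist} to the color-$k$ activation probability $(Q-\ell_k)/\sum_i(Q-\ell_i)$ in the stone game, couple the two draws, and observe that deletions then couple automatically via the order-respecting bijection. One minor note: your appeals to Lemma~\ref{lem:randomactive} are superfluous---the count $Q-\ell_k$ of inactive color-$k$ stones follows deterministically from the inductive invariant (there are $Q$ stones of each color in total), and the equivalence of ``pick a uniform inactive stone'' with ``pick a color proportionally to its inactive count, then a uniform stone of that color'' is elementary; the paper's proof does not invoke Lemma~\ref{lem:randomactive} here at all.
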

\begin{proof}
Let $\ell_1, \ell_2, \ldots, \ell_n$ denote the loads of the bins at any given moment.
By Lemma \ref{lem:modulateddist}, we know that, on any given insertion in which \textsc{ModulatedGreedy} does not halt, each bin $k$ is selected with probability
\begin{equation}
\frac{T_k}{nT} = \frac{T_k}{\sum_{i=1}^n T_i}.
\label{eq:modpr}
\end{equation}

Now suppose that, for each color $k$ there are $\ell_k$ stones with color $k$ in the active bag (and hence $Q-\ell_k$ such stones in the inactive bag) of the stone game. Then on any given activation, the probability of a ball with color $k$ being moved into the active bag is 
\begin{equation}
\frac{Q - \ell_k}{nQ - \sum_{i=1}^n \ell_i} = \frac{m/n + \Delta - \ell_k}{m +  n \Delta  - \sum_i \ell_i} = \frac{T_k}{\sum_{i=1}^n T_i},
\label{eq:stonepr}
\end{equation}
where the first equality uses that $Q = m/n + \Delta$.
The two probabilities \eqref{eq:modpr} and \eqref{eq:stonepr} are precisely equal. Thus, we can couple the games so that the bin selected by the insertion in the balls-and-bins game is the same as the stone color selected by the activation in the stone game. 

If we implement the insertions/activations in this way, then the deletions/deactivations also become coupled: whenever a ball is deleted from a bin $k$, a stone with color $k$ is removed from the active bag (in particular, the ball and stone were assigned to have the same bin/color when they were inserted/activated previously). Thus the proof of the lemma is complete.
\end{proof}

\paragraph{Proof of Theorem \ref{thm:modgreedy}.}
Finally, we can use the coupling in Lemma \ref{lem:coupling} to bound the probability of \textsc{ModulatedGreedy} halting and prove Theorem \ref{thm:modgreedy}.
\begin{proof} (Theorem \ref{thm:modgreedy})
Observe that, if \textsc{ModulatedGreedy} does not halt, then deterministically there are at most  $m/n + O(\log m)$ balls in any given bin.
In particular, the condition $\max_k \ell_k - \min_k \ell_k \leq T$ implies that 
$\max_k \ell_k - \overline{\ell} \leq T$. Plugging $T= m/n + c \log m - \overline{\ell}$, this gives that
$\max_k \ell_k \leq m/n + c \log m$.

Thus, it suffices to analyze the probability of halting. 

By Lemma \ref{lem:coupling}, up until \textsc{ModulatedGreedy} halts, it can be coupled to a stone game on $nQ = m + nc \log m$ balls, where the number of balls in the active bag never exceeds $m$. 
Under this coupling, the number of balls $\ell_k$ in bin $k$ satisfies $\ell_k = Q-s_k$, where $s_k$ is the number of color-$k$ stones in the inactive bag. 

The \textsc{ModulatedGreedy} algorithm halts only if 
\begin{equation}
|\ell_i - \ell_j| > T = m/n + c \log m - \overline{\ell}  = Q-\overline{\ell} 
\label{eq:haltcond}
\end{equation}
for some pair $i, j$ of bins. 
For the stone game, denoting $s= \sum_k s_k = \sum_k (Q-\ell_k) = n (Q-\overline{\ell})$,
and as $|s_i-s_j|= |\ell_i - \ell_j|$,  condition \eqref{eq:haltcond} is equivalent to \[|s_i - s_j| > s/n.\]
But we know by Lemma \ref{lem:stonesbycolor} that, w.h.p. in $m$, we have $|s_i - s_j| \le  s/n$
at all times during the stone game (since the number of balls in the inactive bag is always at least $n c \log m$). Thus, we have w.h.p. in $m$ that \textsc{ModulatedGreedy} never halts.\end{proof}

\subsection{Tightness of the Bound}
\label{sec:tightness}
Clearly, the bound of $m / n + O(\log m)$ is not optimal for all parameter regimes, since it is known that \textsc{Greedy} achieves maximum load $O(\log \log n)$ in the regime of $n = m$. We remark, however, that for parameter regimes where $m$ is much larger than $n$, or when $n$ is fixed, this bound is essentially optimal.
\begin{proposition}
Consider $m$ insertions into $4$ bins using any sequential 2-choice insertion strategy. With probability at least $1 / \poly(m)$, some bin contains at least $m / 4 + \Omega(\log m)$ balls. More generally for $n$ bins, some bin contains at least $m/n + \Omega(\log_n m)$.
\end{proposition}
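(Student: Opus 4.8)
The plan is to exhibit an explicit insertion sequence (in fact, the most naive one) and show that with probability $1/\poly(m)$ the load is already $m/n + \Omega(\log_n m)$, regardless of which $2$-choice strategy is used. The key observation is that a $2$-choice strategy has no freedom at all whenever both of its offered bins are the same bin $k$ — it must place the ball in bin $k$. So the adversary's job is to engineer a sequence in which some fixed bin receives an unusually large number of "forced" (coincidence) insertions. Concretely, with $n$ bins, a single insertion offers the pair $(i,j)$ with $i=j=k$ with probability $1/n$ (this is exactly the "view $h(x)$ as independent bins" convention stated in the Preliminaries), so over $m$ insertions the number of balls that are forced into bin $1$ stochastically dominates $\Bin(m, 1/n^2)$ conditioned on those coincidences landing on bin $1$; more cleanly, the number of insertions whose pair is exactly $(1,1)$ is $\Bin(m, 1/n^2)$.

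First I would reduce to a clean probabilistic statement: if $Z = |\{x : h(x) = (1,1)\}|$ then deterministically bin $1$ ends with load at least $Z$, since every such ball has no choice but to go to bin $1$ (and the other $\ge m - $ (stuff) balls only add to it). Actually even simpler: bin $1$'s final load is at least the number of balls placed in it, which is at least $Z$. So it suffices to show $\Pr[Z \ge m/n + \Omega(\log_n m)] \ge 1/\poly(m)$. Since $Z \sim \Bin(m, 1/n^2)$, this is a lower-tail-of-the-upper-deviation estimate: I want $\Pr[Z \ge \E[Z] + t]$ with $\E[Z] = m/n^2$ and $t = \Theta(\log_n m)$ chosen so the probability is $1/\poly(m)$. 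But wait — $\E[Z] = m/n^2$, not $m/n$, so this only gives load $\ge m/n^2 + \Omega(\log_n m)$, which is too weak when $n$ is large.

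So the construction needs to be slightly smarter: I would instead use the deviations of the ordinary single-choice-like load. Here is the fix. Run $m$ insertions where the adversary does nothing special; let $L_1$ be the final load of bin $1$. I claim that for \emph{any} strategy, $L_1$ has a reasonable chance of exceeding $m/n$ by $\Omega(\log_n m)$. The point: consider the "lazy" insertions, those with $h(x) = (1,1)$ — there are $\Bin(m,1/n^2)$ of them, call it $Z$, and all go to bin $1$. Among the remaining insertions, condition on the strategy's choices; bin $1$ receives some number $Y \ge 0$ more. We have $\E[Z] + \E[Y \mid \text{anything}]$; summing over all $n$ bins, the total is $m$, so by symmetry-of-the-hash (the hash is symmetric under relabeling bins, even though the strategy need not be) the \emph{a priori} expected load of a uniformly random bin is $m/n$. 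Then I condition on the event $\{Z \ge m/n^2 + c\log_n m\}$, which has probability $1/\poly(m)$ by the binomial anti-concentration (lower bound on upper tail, e.g. via the standard $\Pr[\Bin(m,p) \ge mp + t] \ge \exp(-O(t^2/(mp)))$ type bound, valid for $t = O(mp)$), and argue that conditioned on this, bin $1$'s load is still at least $m/n + \Omega(\log_n m)$ with constant probability — because the excess $c\log_n m$ of forced balls is "free," it adds on top of whatever bin $1$ would otherwise have gotten, and bin $1$ otherwise gets $\ge m/n - O(\sqrt{m})$ only with small probability...

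Actually the cleanest route, and the one I would write up, avoids this conditioning subtlety entirely by a \emph{union-bound / pigeonhole} argument: I would show $\Pr[\max_k L_k \ge m/n + t] \ge 1/\poly(m)$ by lower-bounding the probability via a second-moment or Paley–Zygmund-style argument on $\max_k L_k - m/n$, using that $\sum_k L_k = m$ exactly so $\E[\max_k L_k] \ge m/n$ and the fluctuations are $\Omega(\sqrt m)$ (again from the forced-coincidence balls, which are strategy-independent). The main obstacle — and the step I expect to require the most care — is precisely this: turning "the typical fluctuation is $\Omega(\sqrt m)$" into "with probability $1/\poly(m)$ the fluctuation is at least $\Omega(\log_n m)$" \emph{uniformly over all strategies}. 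The resolution is that I don't need to beat $\sqrt m$; I only need $\log_n m$, which is tiny, so I can afford to throw away almost all the probability mass. Specifically, isolate $t = \Theta(\log_n m)$ specific insertions and demand that \emph{all} of them hash to $(1,1)$: this event has probability $(1/n^2)^t = n^{-2t} = m^{-O(1)} = 1/\poly(m)$ for $t = O(\log_n m)$, and conditioned on it, bin $1$ is forced to receive those $t$ balls on top of the load it receives from the other $m - t$ insertions, whose contribution to bin $1$ has expectation $\ge (m-t)/n$ (by the bin-relabeling symmetry of the hashes on those insertions, \emph{even conditioned on the $t$ special events}, since those events only concern other insertions). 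Hence $\E[L_1 \mid \text{special event}] \ge (m-t)/n + t \ge m/n + t(1 - 1/n) \ge m/n + t/2$, and since $L_1 \le m/n + O(\log m)$ is all we are trying to contradict, a Markov-type argument on $L_1 - m/n$ (which is nonnegative after we further condition suitably, or bounded) gives $\Pr[L_1 \ge m/n + t/4 \mid \text{special event}] = \Omega(1)$. Multiplying the two probabilities yields $\Pr[L_1 \ge m/n + \Omega(\log_n m)] \ge 1/\poly(m)$, which is the claim; the $n=4$ case is just $\log_4 m = \Theta(\log m)$.
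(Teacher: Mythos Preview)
Your final approach has a genuine gap in the step where you claim that bin $1$'s expected contribution from the $m - t$ non-special insertions is at least $(m-t)/n$ ``by the bin-relabeling symmetry of the hashes.'' Symmetry of the hash \emph{distribution} does not force each bin to receive the same expected load: the strategy is allowed to treat bins asymmetrically. For a concrete counterexample with $n = 4$, take the (perfectly legal) strategy ``always place the ball in the higher-indexed of the two offered bins.'' Then bin $1$ receives a ball only when both offered bins equal $1$, so its expected load from the $m - t$ non-special insertions is $(m - t)/n^2$, not $(m - t)/n$. Conditioned on your special event, $\E[L_1]$ is then only $(m - t)/n^2 + t$, which for large $n$ is far below $m/n$, and the subsequent Markov step yields nothing. (Of course in that example bin $4$ is overloaded instead --- but your argument is tracking bin $1$, not bin $4$.)

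The paper's proof fixes exactly this gap with a pigeonhole step: it does \emph{not} commit to a target bin in advance. After the first $m - t$ insertions (with $t = \Theta(\log_n m)$), \emph{some} pair of bins $\{a, b\}$ already holds at least $(m - t)/2$ balls. One then conditions on the event that all $t$ final balls have both offered choices inside $\{a, b\}$; this event has probability $\Omega(n^{-2t}) = 1/\poly(m)$ and is independent of the first $m - t$ hashes (hence of which pair $\{a,b\}$ is heaviest), so the conditioning is clean. On this event, bins $a$ and $b$ together hold at least $(m - t)/2 + t$ balls, so one of them \emph{deterministically} has load at least $m/4 + \Omega(t)$ --- no Markov-type step is needed. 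Your construction is only one move away from this: replace ``force the $t$ special balls into bin $1$'' by ``force them into whichever bin (or pair of bins) is already heaviest after the earlier insertions.''
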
 
\begin{proof}
Let us consider the final $\log m$ insertions $x_1, \ldots, x_{\log m}$. Suppose, without loss of generality, that prior to those insertions being performed, bins $1, 2$ contain at least as many total balls as bins $3, 4$. With probability $1/\poly(m)$, all of the insertions $x_1, \ldots, x_{\log m}$ are forced to choose between bins $1$ and $2$. No matter how they are assigned, this forces at least one of bins  $1, 2$ to have load $m / 4 + \Omega(\log m)$ balls at the end of the insertions. 

Deleting all the balls, and repeating the instance again $\text{poly}(m)$ times, this event will occur for one of the instances with high probability.

The same argument also implies a $m/n + \Omega(\log_n m)$ lower bound; consider the final $k=\log_n m$ balls, and note that with probability $ \Omega(n^{-2k}) = 1/\text{poly}(m)$ the only bin choices for these balls are $1$ and $2$.
\end{proof}

\section{A Lower Bound for Greedy with Deletions}\label{sec:greedy}

This section gives a lower bound for the \textsc{Greedy} algorithm in the insertion/deletion model against an oblivious adversary, with up to $m$ balls present at a time. 
Recall that the trivial \textsc{SingleChoice} strategy achieves an overload of $O(\sqrt{(m/n) \log n})$ (w.h.p. in $m$) in this setting, so the natural question is whether \textsc{Greedy} does any better. We show that, even for $n = 4$ (meaning that \textsc{SingleChoice} has an overload of $O(\sqrt{n})$), it does not.  

\begin{theorem}
\label{thm:greedy}
Consider the insertion/deletion model on $n = 4$ bins, with the restriction that at most $m$ balls can be present at any time, and suppose that insertions are implemented using \textsc{Greedy}. There exists an oblivious sequence of $\poly(m)$ insertions/deletions such that, after the sequence is complete, we have with probability $\Omega(1)$ that some bin contains $m/4 + \Omega(\sqrt{m})$ balls.
\end{theorem}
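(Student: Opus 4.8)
The plan is to amplify a bias in a single ``underloaded'' bin through an alternating sequence of \emph{insertion phases} and \emph{deletion phases}, so that the overload grows geometrically until it reaches $\Omega(\sqrt m)$. First I would set up the key one-step primitive. Suppose at some moment bins $3,4$ are symmetric and together are ``light,'' i.e.\ $\ell_3 = \ell_4 \approx m/4 - \delta/2$ while bins $1,2$ are correspondingly ``heavy'' with $\ell_1 = \ell_2 \approx m/4 + \delta/2$, and suppose the current ball count is well below $m$ so there is room to insert. Now insert a batch of $\Theta(\delta)$ balls. Under \textsc{Greedy}, a ball whose hash is $\{i,j\}$ with $i\in\{1,2\}, j\in\{3,4\}$ always lands in the light pair; a ball with both choices in the heavy pair lands in the heavy pair; etc. Crucially, among the balls that were routed \emph{into} the light pair $\{3,4\}$, a constant fraction had a choice of the form $\{\text{heavy},\text{light}\}$ — these are ``tagged'' balls whose presence in $\{3,4\}$ is \emph{forced} by \textsc{Greedy} rather than intrinsic. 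The point is that after this insertion phase the light pair has received $\Theta(\delta)$ new balls, of which a constant fraction $\Theta(\delta)$ carry the tag ``one of my two choices is in the heavy pair.''

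Next comes the deletion phase, which is where the bias gets thrown back. Since deletions are performed by an oblivious adversary who cannot see internal state but \emph{can} see the hash of each inserted ball (the adversary designs the sequence knowing the hashes are random, and can condition the construction on them — more carefully, the adversary commits a sequence of deletions and we argue it succeeds with probability $\Omega(1)$ over the hashes and \textsc{Greedy}'s tie-breaking), I would delete precisely the tagged balls from the light pair. Because each such ball has a choice in $\{1,2\}$, \textsc{Greedy} \emph{would have} placed it in $\{1,2\}$ had $\{3,4\}$ not been underloaded — so removing these $\Theta(\delta)$ balls from $\{3,4\}$ while leaving the heavy pair untouched \emph{increases} the gap $\delta$ by a constant factor, say to $(1+c)\delta$ for some constant $c>0$. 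Then re-symmetrize (a bit of bookkeeping: rebalance within $\{1,2\}$ and within $\{3,4\}$ using further insertions/deletions that don't touch the cross-pair gap) and repeat. Each round costs $O(m)$ operations and the gap multiplies by $(1+c)$, so after $O(\log \sqrt m) = O(\log m)$ rounds the gap reaches $\Omega(\sqrt m)$, using $\poly(m)$ operations total; at that point the heavy pair has a bin of load $\ge m/4 + \Omega(\sqrt m)$.

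Two technical points need care. The first is \emph{where the randomness is}: across rounds, the set of balls present changes, and the conditioning ``ball $x$ is tagged'' must be compatible with \textsc{Greedy}'s actual routing decisions, which depend on the full history. I would handle this by arguing that in each round, conditioned on the configuration being approximately symmetric with gap $\delta$, the number of tagged balls routed into the light pair is $\Theta(\delta)$ with probability $1-o(1)$ (a binomial/Chernoff estimate, since each of the $\Theta(\delta)$ relevant insertions independently lands a tagged ball with constant probability), and these tagged balls are deletable in a way that only depends on their hashes, which is exactly the information an oblivious adversary's (fixed) deletion schedule can exploit once we reveal that the random hashes fell into the typical pattern. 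Taking a union bound over the $O(\log m)$ rounds keeps the total failure probability $o(1)$, so the whole construction succeeds with probability $\Omega(1)$ (in fact $1-o(1)$). The main obstacle — and the reason the paper devotes separate warm-up sections (\ref{sec:greedysimpler} giving only $m^{1/4}$, and \ref{sec:unevengreedy} giving the full $m^{1/2}$) — is controlling the ``noise'' that accumulates within each pair: an insertion phase does not land balls perfectly evenly within $\{3,4\}$ or within $\{1,2\}$, and these $O(\sqrt\delta)$-scale fluctuations could, if correlated adversarially against us, swamp the $\Theta(\delta)$ signal once $\delta$ is small. The fix is to run the amplification only while $\delta$ is already large enough (starting the induction at $\delta = \Theta(\sqrt m)$ is circular, so instead one bootstraps: the $m^{1/4}$ bound of Section \ref{sec:greedysimpler} is obtained by a cruder argument that tolerates the noise, and then Section \ref{sec:unevengreedy} uses an \emph{uneven} (asymmetric) variant of \textsc{Greedy}'s configuration to push past the noise floor all the way to $\sqrt m$). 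So concretely, I would present the clean geometric-amplification skeleton above, then isolate the within-pair fluctuation control as the one lemma that requires the delicate asymmetric construction of Section \ref{sec:unevengreedy}.
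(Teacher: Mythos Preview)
Your proposal has a genuine gap at its core: the deletion phase assumes the adversary can ``delete precisely the tagged balls,'' i.e., the balls whose hash was of the form $\{\text{heavy},\text{light}\}$. But an oblivious adversary commits to the entire sequence of \textsc{Insert}/\textsc{Delete} operations (by ball identity) \emph{before} any hashes are drawn; it never learns $h(x)$ for any ball $x$. So the adversary cannot filter on tags, and without that selective deletion your geometric amplification $\delta \mapsto (1+c)\delta$ has no mechanism. Your parenthetical attempt to finesse this (``deletable in a way that only depends on their hashes, which is exactly the information an oblivious adversary's fixed deletion schedule can exploit'') is exactly backwards: hash information is the one thing the schedule \emph{cannot} exploit.

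The paper's route is structurally quite different from yours. It does not amplify a gap multiplicatively. Instead it proceeds in two independent pieces. First (Proposition~\ref{prop:greedygap}), it shows that a gap of $k$ between one bin and the others can be converted into an overload of $\Omega(\sqrt{k})$: insert a fixed batch $x_1,\dots,x_k$ (all of which are deterministically biased toward the underloaded bin), fill to $m$, then delete \emph{exactly that batch} and reinsert fresh balls $z_1,\dots,z_k$. The key is that the number $A$ of the $x_i$'s that landed in bin~1 and the number $B$ of the $z_i$'s that do are independent binomials with the same mean, so $A-B=\Omega(\sqrt{k})$ with constant probability; no hash-dependent deletion is needed. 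Second (Proposition~\ref{prop:unevengreedy}), it manufactures a gap of $k=\Omega(m)$ --- not $\Omega(\sqrt{m})$ --- by using two gadgets: a ``uniform placement'' gadget (Lemma~\ref{lem:greedyuniform}) that forces \textsc{Greedy} to place a single ball into a uniformly random bin, and a ``load reduction'' gadget (Lemma~\ref{lem:greedyflush}) that empties most of the system while preserving relative loads up to $O(\log m)$ noise. Iterating these $\Theta(m^2)$ times simulates $\Theta(m^2)$ independent single-choice placements, whose fluctuations produce an $\Omega(m)$ gap with constant probability. Your reading of Section~\ref{sec:unevengreedy} as an ``asymmetric \textsc{Greedy} configuration to push past the noise floor'' does not match what is there.
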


For ease of exposition, and to keep the main ideas as clear as possible, we focus our lower bound on $n=4$ bins. We will also see, however, that for general $n$ and $m$, a similar construction gives an $m/n + \Omega(\sqrt{m}/\text{poly}(n))$ lower bound on the maximum load.

\subsection{ A Simpler $\Omega(m^{1/4})$ Bound}
\label{sec:greedysimpler}
Before proving Theorem \ref{thm:greedy}, we first describe a simpler (but already surprisingly) lower bound of $m/4 + \Omega(m^{1/4})$. Later in Section \ref{sec:unevengreedy} we build on these ideas to prove Theorem \ref{thm:greedy}.

We first describe a construction with the property that if we ever reach a state where one of the bins (say, bin 1) contains significantly fewer balls (say, $k$ fewer balls) than the other bins, then we can subsequently reach a state in which (with probability $\Omega(1)$), some bin contains at least $m/4 + \Omega(\sqrt{k})$ balls.  As we shall see later in the subsection, this can be used to directly obtain the $m/4 + \Omega(m^{1/4})$ bound.

\begin{proposition}[Gap to overload]
\label{prop:greedygap}
Consider the \textsc{Greedy} algorithm on 4 bins, on instances where at most $m$ balls can be present at a time. Suppose we begin in a state that contains at most $m - k$ balls, and where bin $1$ contains $k + 1$ fewer balls than each of bins $2, 3, 4$. Then there is an oblivious sequence of $O(m)$ insertions/deletions such that, after the sequence is complete, we have the following property with probability $\Omega(1)$: some bin contains $m/4 + \Omega(\sqrt{k})$ balls.
\end{proposition}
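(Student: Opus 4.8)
The plan rests on one structural fact about \textsc{Greedy}: while bin~$1$ is the strict minimum, \textsc{Greedy} is \emph{forced} to route into bin~$1$ every inserted ball whose two-element random pair contains bin~$1$, and (up to the $\le 1/n$ distinct-versus-independent correction noted in the preliminaries) whether a given ball's pair contains bin~$1$ is an independent event of some constant probability $p$. So by inserting a long batch while bin~$1$ is underloaded, the adversary effectively stores $\Theta(k)$ fresh coin flips inside bin~$1$; by anti-concentration the number landing in bin~$1$ exceeds its mean by $\Omega(\sqrt k)$ with probability $\Omega(1)$, and the construction converts this one surviving fluctuation into a persistent $\Omega(\sqrt k)$ overload. (If at any point some bin reaches load $m/4+\sqrt k$ we are already done, so we may assume throughout that bins $2,3,4$ never exceed $m/4+\sqrt k$; this also keeps bin~$1$'s load below $m/4$ whenever it is the minimum.)

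The construction has four phases. \textbf{Charge:} insert $N:=k$ new balls; since bin~$1$ starts at least $k+1$ below every other bin and gains at most one ball per insertion, it remains the strict minimum throughout, so \textsc{Greedy} places into bin~$1$ exactly the $Z\sim\operatorname{Bin}(N,p)$ balls of this batch $B$ whose pair contains bin~$1$; condition on the $\Omega(1)$-probability event $Z\ge pN+c\sqrt N$ for a large constant $c$. \textbf{Settle:} insert further balls with plain \textsc{Greedy} until the system contains exactly $m$ balls; the standard heavily-loaded analysis (robust to the—necessarily moderate—starting imbalance) brings all four loads to $m/4\pm O(\log m)$ w.h.p., and since no ball changes bins, $B$ still has $Z$ balls in bin~$1$ and $N-Z$ balls split among bins $2,3,4$. \textbf{Dump:} delete all of $B$; this removes $Z$ balls from bin~$1$ but only $N-Z$ from bins $2,3,4$, leaving bin~$1$ roughly $(4Z-N)/3=\Theta(N)$ below the still mutually balanced bins $2,3,4$. \textbf{Reload:} insert $N$ new balls with plain \textsc{Greedy} to restore the total to $m$; using $c$ large together with a martingale maximal inequality one checks that the $\Theta(N)$ gap shrinks only to $\Theta(\sqrt N)$ while staying positive, so bin~$1$ is the strict minimum for all of this phase and receives exactly the $W\sim\operatorname{Bin}(N,p)$ balls whose pair contains bin~$1$. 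The final gap between the (mutually balanced) bins $2,3,4$ and bin~$1$ equals $\tfrac{4}{3}(Z-W)$; since $Z$ carries the stored excess $c\sqrt N$ but $W$ is merely typical, conditioning additionally on the $\Omega(1)$-probability event $W\le pN$ gives a gap of $\Omega(\sqrt N)=\Omega(\sqrt k)$, and because the total is exactly $m$ this forces the average load of bins $2,3,4$—hence the maximum load—to be $m/4+\Omega(\sqrt k)$.

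All events conditioned on (the favorable fluctuation in Charge, the heavily-loaded bound in Settle, the maximal-inequality bound in Reload, and $W\le pN$) hold with probability $\Omega(1)$ or $1-1/\poly(m)$ and involve disjoint randomness, so the whole construction succeeds with probability $\Omega(1)$; the total number of operations is $N$ plus at most $m$ plus $N$ plus $N$, i.e. $O(m)+O(k)=O(m)$ since the hypotheses force $k=O(m)$. (When the system is already nearly saturated with bins $2,3,4$ near $m/4$, Charge/Settle/Dump nearly cancel and Reload by itself—inserting $k$ balls into the strictly-minimal bin~$1$, a constant fraction of which are forced into bins $2,3,4$—already produces an $\Omega(k)$ overload, so the claimed bound is never beaten.)

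The crux I expect is the accounting linking Settle, Dump, and Reload: (i) verifying that the split of $B$ as $Z:(N-Z)$ between bin~$1$ and $\{2,3,4\}$ genuinely survives the settling phase; (ii) proving that bin~$1$ stays the strict minimum for the entire Reload phase—this is exactly where the gap's fluctuation must be dominated by the stored excess $c\sqrt N$, which forces $c$ above an absolute constant and requires a maximal rather than a pointwise deviation bound; and (iii) checking that the residual imbalance is precisely the stored $\Theta(\sqrt k)$—neither cancelling against Reload's insertions nor blowing up. The only range needing a separate remark is $k=O(\polylog m)$, where the $O(\log m)$ settling slack could obscure $\sqrt k$; this range is not needed for the applications of the proposition.
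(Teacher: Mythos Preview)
Your construction is exactly the paper's: insert $k$ balls while bin~$1$ is the strict minimum, fill to $m$, then delete those $k$ and insert $k$ fresh ones. But your analysis makes two claims that do not hold as stated. First, Settle need not bring all loads to $m/4\pm O(\log m)$: if the system starts with exactly $m-k$ balls, then after Charge the total is already $m$ and Settle is \emph{empty}, leaving bin~$1$ near $m/4-k/2$. Your opening dichotomy (``assume no bin exceeds $m/4+\sqrt k$'') salvages only $m/4\pm O(\sqrt k)$, not $O(\log m)$. Second, keeping bin~$1$ the strict minimum throughout Reload---so that $W$ is exactly binomial---is delicate: the gap between bin~$1$ and the minimum of bins $2,3,4$ starts at $\Theta(k)$ but its drift over the $k$ Reload steps carries it down to order $\sqrt k$, so positivity throughout requires a maximal inequality whose implied constant, plus the $O(\sqrt k)$ Settle slack you now carry, must both be absorbed by your constant $c$. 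That may be feasible, but you have not actually carried it out, and the margin is tight.

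The paper sidesteps both difficulties with one observation you are missing: you do not need $W$ to be exactly binomial---an \emph{upper} bound suffices. For the Reload batch $z_1,\dots,z_k$, the number $B$ that land in bin~$1$ deterministically satisfies $B\le|\{i:1\in\{h_1(z_i),h_2(z_i)\}\}|$, regardless of which bin is currently the minimum. With that, the accounting is a one-liner: if the filled state $X_2$ has no $\Omega(\sqrt k)$ overload, then bins~$2,3,4$ together hold at least $3m/4-o(\sqrt k)$; deleting the Charge batch removes $k-A$ from those bins and inserting the Reload batch adds $k-B$, so their total becomes at least $3m/4+(A-B)-o(\sqrt k)$. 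Since $A\ge\mu+\Omega(\sqrt k)$ and $B\le\mu$ each hold with probability $\Omega(1)$ and are independent, some bin in $\{2,3,4\}$ exceeds $m/4+\Omega(\sqrt k)$. No balancing claim after Settle, no minimum-tracking during Reload, no maximal inequality.
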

\begin{proof}
Let $ X_0 $ denote the initial state of the game. Consider the sequence with the following three steps.
\begin{enumerate}
    \item Insert $k$ balls $x_1, x_2, \ldots, x_k$ to get to a state $X_1$. 
    \item Then insert $m - j$ balls $y_1, y_2, \ldots, y_{m - j}$, where $j$ is the number of balls in state $X_1$---this brings us to a state $X_2$ with $m$ balls in total.
    \item  Finally, delete the balls $x_1, x_2, \ldots, x_k$, and insert new balls $z_1, z_2, \ldots, z_k$ to reach a state $X_3$.
\end{enumerate}  We claim that, for at least one of the two states $X_2$ and $X_3$, we have with probability $\Omega(1)$ that some bin contains $m / 4 + \Omega(\sqrt{k})$ balls. 

During the insertions of $x_1, x_2, \ldots, x_k$, we are always in a state where bin 1 contains fewer balls than bins $2, 3, 4$. Thus, each insertion $x_i$ will go into bin 1 if and only if $1 \in \{h_1(x_i), h_2(x_i)\}$ (this is where we are exploiting that the \textsc{Greedy} algorithm is too aggressive). The number $A$  of balls $x_1, x_2, \ldots, x_k$ that are placed in bin 1 is therefore given by
\[A = |\{i \mid 1 \in \{h_1(x_i), h_2(x_i)\}\}|.\]
Let $\mu = \E[A]$. As $A$ is a binomial random variable with mean $\mu = \Theta(k)$, with probability $\Omega(1)$ we have 
\[A \ge \mu + \Omega(\sqrt{k}).\]
Now consider the number $B$ of balls $z_1, z_2, \ldots, z_k$ that are placed into bin $1$. We deterministically have that
\begin{equation}
\label{eq:Bh}
B \le |\{i \mid 1 \in \{h_1(z_i), h_2(z_i)\}\}|.
\end{equation}
Since the right side of \eqref{eq:Bh} is a binomial random variable with mean $\mu$, we have with probability $\Omega(1)$ that \[B \le \mu.\]
Moreover, since $A$ and $B$ are independent, the above bounds on $A$ and $B$ hold simultaneously with probability $\Omega(1)$. 

Finally, let us consider the number of balls in bins $2, 3, 4$ once we reach state $X_3$. Assume that state $X_2$ has maximum load $m / 4 + o(\sqrt{k})$, otherwise we are already done. Then, since $X_2$ contains $m$ balls in total, bins $2, 3, 4$ must contain a total of at least $3m/4 - o(\sqrt{k})$ balls. By step 3 of the input sequence above, it follows that, in state $X_3$, the total number of balls in bins $2, 3, 4$ is at least
\[ 3m/4 - o(\sqrt{k}) - (k - A) + (k - B).\]
Conditioning on the event above, and plugging in our bounds for $A$ and $B$, we see that (with probability $\Omega(1)$) this is at least
$3m/4 + \Omega(\sqrt{k}).$
Thus, at least one of bins $2, 3, 4$ must contain
$ m/4 + \Omega(\sqrt{k})$ balls, as desired.
\end{proof}

\paragraph{The lower bound.}
Using Proposition \ref{prop:greedygap}, the claimed lower bound follows quite easily. 
Consider the following input sequence, starting from an empty system. (1) Insert $m$ balls into the system; (2) delete each ball independently and randomly with probability $1/2$; and (3) apply the sequence in Proposition \ref{prop:greedygap} with $k = \sqrt{m}$.

As the deletions are random in step (2), the precondition for Proposition \ref{prop:greedygap} (i.e.,~the least loaded bin contain at least $k=\sqrt{m}$ fewer balls than every other bins) holds with probability $\Omega(1)$. So by Proposition \ref{prop:greedygap}, we can achieve $m / 4 + \Omega(\sqrt{k}) = m / 4 + \Omega(m^{1/4})$ balls in some bin, with probability $\Omega(1)$.

\paragraph{General $n$.} For $n$ bins, where $n$ is arbitrary, the same approach gives a lower bound of 
\begin{equation}
    m/n +\Omega(m^{1/4}/\sqrt{n^3\log n}).
    \label{eq:generaln}
\end{equation}

In particular, Proposition \ref{prop:greedygap} can be directly modified, in this setting, to achieve an overload of $\Omega(k^{1/2}/n)$: instead of using $k$ balls in each of steps $1$ and $3$, use $kn/100$ balls; then by the same argument as in the lemma, we have $A -B = \Omega(k^{1/2})$ with constant probability; this means that bin 1 is under-loaded by at least $\Omega(k^{1/2})$, and thus that some other bin is over-loaded by at elast $\Omega(k^{1/2} /n)$. 

To achieve \eqref{eq:generaln} using the modified Proposition \ref{prop:greedygap}, we just need to cause the smallest load to be $k = \Theta(\sqrt{m/(n\log n)})$ smaller than the other loads---this can again be achieved again by performing $m$ insertions and then deleting each ball independently with probability $1/2$. After the $m$ insertions, every bin will have essentially the same load ($\pm O(\log \log n)$ w.h.p. in $n$). Conditioning on the loads, the number of balls deleted from each bin is a Gaussian with standard deviation $\sigma = \Theta(\sqrt{m/n})$ (and the Gaussians are independent between bins). By standard estimates on order statistics, the difference in loads between the least loaded and the second least loaded bins is roughly the difference between the $1/n$-th and $2/n$-th percentile of the distribution, see e.g.,~\cite{Roy82},  which in expectation is $\Theta(\sigma/\sqrt{\log n})$ for the Gaussian $N(0,\sigma^2)$---hence an imbalance of $k = \Theta(m / (n \log n))$.

\subsection{The Stronger $\Omega(m^{1/2})$ Lower Bound}
\label{sec:unevengreedy}
We now show how to achieve the stronger bound of $m / 4 + \Omega(\sqrt{m})$ balls in some bin. 
Given Proposition \ref{prop:greedygap}, to prove Theorem \ref{thm:greedy} it suffices to show how to achieve a gap of $k = \Omega(m)$ between bin $1$ and bins $2, 3, 4$. This is accomplished in the following proposition.

\begin{proposition}
\label{prop:unevengreedy}
Consider the \textsc{Greedy} algorithm on 4 bins, with the restriction that at most $m$ balls can be present at a time. There exists an oblivious sequence of $\poly(m)$ insertions/deletions such that, after the sequence is complete, we have the following property with probability $\Omega(1)$: Bin $1$ contains $\Omega(m)$ fewer balls than each of bins $2, 3, 4$.
\end{proposition}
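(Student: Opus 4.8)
The plan is to bootstrap the ``gap-amplification'' idea behind Proposition \ref{prop:greedygap} into a self-reinforcing loop, so that a small head start in the deficit of bin $1$ can be repeatedly doubled (or at least increased by a constant factor) until it reaches $\Omega(m)$. First I would establish a base case: starting from an empty system, insert $m$ balls and then delete each independently with probability $1/2$; by the same order-statistics computation used in the ``General $n$'' paragraph (here $n=4$ is a constant, so the Gaussian order-statistic gap is $\Theta(\sigma) = \Theta(\sqrt{m})$), we get with probability $\Omega(1)$ that bin $1$ trails each of bins $2,3,4$ by some $k_0 = \Omega(\sqrt m)$ balls. This is the same precondition that Proposition \ref{prop:greedygap} uses.

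Next I would isolate the amplification step. Suppose we are in a state where bin $1$ has a deficit of $k$ relative to each of bins $2,3,4$ (and the total is at most $m-k$). Run the three-step sequence from Proposition \ref{prop:greedygap}: insert $k$ balls $x_1,\dots,x_k$ (all of which go to bin $1$ when $1$ is among their two choices, because \textsc{Greedy} is maximally biased toward the underloaded bin), top up to $m$ balls, then delete the $x_i$'s and insert fresh balls $z_1,\dots,z_k$. The key quantity is $A-B$, where $A$ is the number of $x_i$ landing in bin $1$ and $B$ is the number of $z_i$ landing in bin $1$: deterministically $A$ equals the number of $x_i$ hashing to bin $1$, which is $\Bin(k, 1/2)$ in expectation $k/2$; $B$ is stochastically dominated by $\Bin(k,1/2)$. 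The difference $A-B$ is the amount by which bin $1$'s deficit has \emph{grown}. The subtlety, and what makes this an amplification rather than the single shot of Proposition \ref{prop:greedygap}, is that instead of merely getting $A-B = \Omega(\sqrt k)$ with constant probability, we want $A - B \ge \epsilon k$ with probability bounded below by some absolute constant, so that the deficit grows multiplicatively. This does \emph{not} follow from concentration of a single $\Bin(k,1/2)$ around its mean; rather, $A$ and $B$ are independent binomials and we need the anti-concentration fact that $\Pr[A \ge k/2 + c\sqrt k] = \Omega(1)$ together with $\Pr[B \le k/2 - c\sqrt k] = \Omega(1)$ — which only buys $A - B = \Omega(\sqrt k)$. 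So the honest amplification must be additive-in-$\sqrt{k}$ per round but we need it to compound: after $t$ rounds the deficit is $\approx k_0 + \sum \sqrt{k_i}$, and since $k_i$ is itself growing this sum grows. I expect the cleanest route is to show each round multiplies the deficit by a factor $(1+\Omega(1/\sqrt{k}))$-ish — no; better, to run the round with $\Theta(k)$ fresh balls rather than exactly $k$, chosen so that the \emph{new} deficit is a constant factor larger, say $k' \ge (1+\delta)k$, whenever the (constant-probability) favorable event for that round occurs. Concretely: if the round uses $L = \gamma k$ balls, then $A$ has mean $\approx L/2 = \gamma k/2$ and $B$ has mean $\approx L/2$, but crucially after step 2 the \emph{other} bins are loaded to absorb everything, and the net bin-$1$ deficit change accounting (as in Proposition \ref{prop:greedygap}) is $-(L-A)+(L-B) = A-B$ plus the original $k$; picking $\gamma$ and the target percentile so that with constant probability $A - B \ge \delta k$ gives $k' \ge (1+\delta) k$. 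The honest version of this requires that $A - B$ be $\Omega(k)$, not $\Omega(\sqrt k)$ — and that is where I'd need to think hardest, because two independent fair binomials differ by only $\Theta(\sqrt{L})$ typically.

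This tension is, I believe, the crux, and the resolution in the paper is almost certainly to break the symmetry between the $x$-insertions and the $z$-insertions: during step 1 we are free to make the $x_i$ land in bin $1$ with probability close to $1/2$ (whenever $1$ is one of their two random choices — probability $1/2$ for $n=4$), but during step 3 we can \emph{first} arrange, via auxiliary insertions into bins $2,3,4$, that bins $2,3,4$ are themselves unbalanced so that the $z_i$ are steered \emph{away} from bin $1$ even when $1$ is one of their choices — i.e., make \textsc{Greedy}'s aggression work against bin $1$ a second time. If bin $2$ (say) is made to be the global minimum before the $z$-insertions, then a $z_i$ with choices $\{1,2\}$ goes to bin $2$, not bin $1$, so $B$ can be driven down to $\Theta(k/n)$ or lower while $A$ stays at $\Theta(k)$, yielding $A - B = \Omega(k)$ deterministically-in-expectation with constant-probability concentration. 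So the real plan is: (i) base case to get $k_0 = \Omega(\sqrt m)$; (ii) an amplification lemma that, from a deficit of $k < c m$, uses $O(m)$ insertions/deletions and a careful pre-loading of the other three bins to reach a deficit of $\ge (1+\delta)k$ with probability $\Omega(1)$; (iii) iterate $O(\log m)$ times, but — and this is a second real obstacle — the $\Omega(1)$ success probabilities across $O(\log m)$ rounds would multiply to $1/\poly(m)$, so the iteration must either be made to succeed with probability $1 - o(1)$ per round (by using concentration rather than anti-concentration once $k$ is already polynomially large, so that $A-B = \Omega(k)$ holds w.h.p. rather than w.p. $\Omega(1)$), or the whole $O(\log m)$-round block must be one of $\poly(m)$ independent attempts with cleanup deletions in between. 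I expect the paper does the former: the very first step is the only one that relies on $\Omega(1)$-probability anti-concentration; every subsequent doubling step, where $k$ is already $\Omega(\sqrt m) = \omega(\log m)$, uses standard Chernoff bounds to get $A - B \ge \delta k$ with probability $1 - 1/\poly(m)$, and then a union bound over the $O(\log m)$ rounds preserves overall success probability $\Omega(1)$. The main obstacle, then, is designing the pre-loading of bins $2,3,4$ in the amplification step so that (a) it steers the $z$-insertions away from bin $1$, (b) it does not itself violate the $\le m$ capacity constraint, and (c) the bookkeeping of loads across the top-up and deletion phases still yields a clean $(1+\delta)$-factor growth in bin $1$'s deficit; the rest is routine Chernoff estimates and a geometric-series tally.
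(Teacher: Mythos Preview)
Your amplification accounting has a genuine error that undermines the whole bootstrap. You write that after one round the new deficit is ``$A-B$ plus the original $k$,'' but this ignores what the top-up (step 2) does. During the top-up to $m$ balls, \textsc{Greedy} drives all four bin loads to within $O(\log m)$ of one another (this is exactly Lemma~\ref{lem:gaplemma}); so by the time you reach state $X_2$, the original deficit $k$ has been \emph{erased}. The delete-and-reinsert in step 3 then creates a fresh deficit of order $A-B=\Theta(\sqrt{k})$, not $k+\Theta(\sqrt{k})$. In other words, one round of the Proposition~\ref{prop:greedygap} mechanism does not amplify the gap---it \emph{replaces} a gap of $k$ by one of $\Theta(\sqrt{k})$, which is a square-root \emph{loss}. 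Iterating this cannot reach $\Omega(m)$.

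Your proposed fix---pre-loading bins $2,3,4$ so that some other bin becomes the global minimum before the $z$-insertions---does not clearly work either. After deleting the $x_i$'s (which were biased toward bin $1$), bin $1$ drops by roughly $A\approx k/2$ while the others drop by roughly $(k-A)/3\approx k/6$ each, so bin $1$ is again the minimum by $\Theta(k)$. You have no direct mechanism to push another bin below it: you cannot delete from a specified bin, and any batch of insertions you perform will, by \textsc{Greedy}, preferentially land in bin $1$.

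The paper's route is entirely different and does not try to amplify a gap at all. Instead it exploits \textsc{Greedy}'s balancing property \emph{constructively}: flood in $\epsilon_1 m$ insertions until (w.h.p.) all four loads equalize at some point, after which the final ball is placed uniformly at random by symmetry; then delete all but that final ball (Lemma~\ref{lem:greedyuniform}). This gadget costs $O(m)$ operations per uniform placement, and a companion ``flush'' gadget (Lemma~\ref{lem:greedyflush}) resets the total load to $O(\epsilon_1 m)$ while perturbing each bin's load by a mean-zero $O(\log m)$ noise. Running $\Theta(m^2)$ uniform placements (in $\Theta(m)$ phases of $m$ placements each, flushing between phases) means bin~$1$ receives $\operatorname{Bin}(\Theta(m^2),1/4)$ of them, whose standard deviation is $\Theta(m)$; ordinary anti-concentration then gives, with probability $\Omega(1)$, that bin~$1$ trails each of the others by $\Omega(m)$. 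The martingale contribution from the flush noises is only $\tilde O(\sqrt{m})$ and is negligible.
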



The rest of the section is focused on the proof of Proposition \ref{prop:unevengreedy}.

Let $0 < \epsilon_1,\epsilon_2,\epsilon_3 < 1$ be  constants, where $\epsilon_2$ is sufficiently small as a function of $\epsilon_1$, and let $\epsilon_3$ is sufficiently small as a function of $\epsilon_2$. Sometimes we will write $\epsilon_1, \epsilon_2, \epsilon_3$ inside the $O(\cdot)$ notation, to make the dependence on them explicit, while hiding fixed constants that do not depend on $ \epsilon_1, \epsilon_2, \epsilon_3$.

\subsubsection{Some basic gadgets}
We begin with a basic technical lemma establishing that \textsc{Greedy} has a tendency of eliminating imbalances over time. For brevity (and since the proof follows from standard arguments), we defer the proof of Lemma \ref{lem:gaplemma} to Appendix \ref{app:gaplemma}.

\begin{lemma}\label{lem:gaplemma}
Consider the \textsc{Greedy} algorithm on 4 bins, and fix an arbitrary initial state in which the bins have loads within $\epsilon_2 m $ of each other. If $\epsilon_1 m$ insertions are performed, then after the sequence is complete, all of the bins have loads within $O(\log m)$ of each other with high probability in $m$. Furthermore, with high probability in $m$, there is a point in time prior to the final insertion at which all of the bins have equal loads. 
\end{lemma}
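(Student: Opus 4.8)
The plan is to analyze the \textsc{Greedy} process on $4$ bins by tracking the gap between the most-loaded and least-loaded bins via a supermartingale-style drift argument. Write $L(t) = \max_k \ell_k(t)$ and $\ell(t) = \min_k \ell_k(t)$ for the loads after $t$ insertions (starting from the given state), and let $\Delta(t) = L(t) - \ell(t)$ be the gap; by hypothesis $\Delta(0) \le \epsilon_2 m$. The key observation is that when $\Delta(t)$ is large, \textsc{Greedy} has a strong \emph{negative drift}: each new ball sees a uniformly random pair $\{h_1,h_2\}$ and is placed in the less-loaded bin, so with probability bounded below by a constant (roughly the probability that the pair touches the current min bin but not the max bin, or touches the max bin together with something smaller) the gap either stays put or shrinks, while the only way the gap can grow is if a ball is placed into the current max bin, which requires \emph{both} sampled bins to be among the bins at load $L(t)$ — an event of probability at most $(|\{k : \ell_k = L\}|/n)^2 \le 9/16 < 1$. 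A cleaner route: consider the potential $\Phi(t) = \sum_k (\ell_k(t) - \overline{\ell}(t))^2$, the variance of the load vector. Since each insertion raises $\overline{\ell}$ by exactly $1/n$ and adds one ball to the chosen bin, a direct computation shows $\E[\Phi(t+1) - \Phi(t) \mid \mathcal{F}_t]$ is negative and in fact at most $-c\,\Delta(t)/n + O(1)$ for an absolute constant $c > 0$ (the $\Delta(t)$ factor comes from the fact that the ball is disproportionately likely to land in a below-average bin, and the gain is proportional to how far below average that bin is). Hence $\Phi$ contracts geometrically-in-expectation toward $O(n^2)$, i.e. toward a state where all loads are within $O(1)$ of the average — but to get the $O(\log m)$ bound w.h.p. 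I would combine this drift with a concentration argument.

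The concrete steps I would carry out are: (1) Establish the one-step drift inequality for $\Phi$ (or equivalently a drift inequality $\E[\Delta(t+1)\mid\mathcal{F}_t] \le \Delta(t) - c' \mathbf{1}[\Delta(t) \ge C\log m] + O(1)$ for suitable constants). (2) Use a standard supermartingale/Azuma argument (the increments of $\Phi$ and of the loads are bounded, since one insertion changes any load by at most $1$) to show that, starting from $\Delta(0) \le \epsilon_2 m$ and running $\epsilon_1 m$ steps with $\epsilon_2 \ll \epsilon_1$, the process reaches a gap of $O(\log m)$ within the first, say, $\epsilon_1 m / 2$ insertions with high probability in $m$ — the point is that $\epsilon_1 m$ is linearly many steps, far more than the $O(\epsilon_2 m / c')$ steps the linear drift needs to erase the initial imbalance. (3) Show a "stays small" statement: once the gap is $O(\log m)$, it remains $O(\log m)$ for the rest of the $\epsilon_1 m$ insertions w.h.p.; this is the classical heavily-loaded \textsc{Greedy} fact and follows from the same drift (when $\Delta \ge C\log m$ the drift is a negative constant, so by a union bound over the $\poly(m)$ steps and an exponential tail on excursions, the gap never exceeds $C' \log m$). (4) Deduce the "equal loads" clause: since the gap goes from $\ge 1$ (if it is $0$ we are trivially done earlier) down to $O(\log m)$ and the gap changes by at most... well, actually the gap can jump, so instead I would argue about the quantity $\max_k \ell_k - \overline{\ell}$ or track $\ell_k - \ell_j$ for a fixed pair; the slicker formulation is to note that between two consecutive insertions the sorted load profile changes in a controlled way, and to find a time where the min bin "catches up" to the others. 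Concretely, consider the first time the former min bin reaches the load of the former second-min bin: since every insertion adds exactly one ball and the min bin keeps receiving balls at a constant rate while $\Delta$ is large, such a crossing must occur, and at a crossing the two smallest loads are equal; iterating collapses all four loads to a common value at some moment.

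The main obstacle I anticipate is step (4), the "there is a moment when all loads are exactly equal" claim — drift arguments naturally control the gap but not the fine-grained event of exact equality, and one has to rule out the loads "leapfrogging" past each other without ever coinciding. The resolution is a parity/continuity argument: loads are integers changing by $\pm$ increments of size at most $1$ per step relative to each other's differences, so if $\ell_i(t) - \ell_j(t)$ changes sign it must pass through $0$; combined with the fact that the drift forces the below-average bins to be hit often enough that all pairwise differences are driven to $0$ and then oscillate near $0$, one can pin down a time when $\ell_1=\ell_2=\ell_3=\ell_4$. A secondary technical nuisance is making the drift constant $c'$ explicit enough to guarantee the erasure happens within $\epsilon_1 m$ steps for the chosen relation $\epsilon_2 \ll \epsilon_1$; this is routine but needs the requirement, stated in the lemma's setup, that $\epsilon_2$ is sufficiently small relative to $\epsilon_1$. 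Since the excerpt explicitly defers this proof to Appendix \ref{app:gaplemma} "since the proof follows from standard arguments," I would keep the write-up at the level of citing the standard heavily-loaded \textsc{Greedy} drift analysis (à la \cite{PTW10, BCSV00proc}) and filling in only the equal-loads corollary in detail.
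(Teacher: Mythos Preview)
Your approach to the first half of the lemma---getting the gap down to $O(\log m)$---is different from the paper's but perfectly workable. The paper does not use a quadratic potential; instead it fixes a pair of bins $i,j$ and tracks the signed difference $D_{i,j}(t)=\ell_i(t)-\ell_j(t)$ directly as a biased random walk on $\mathbb{Z}$: whenever $D_{i,j}\neq 0$, the event $\{h_1,h_2\}=\{i,j\}$ (which has constant probability since $n=4$) forces a step toward $0$, and all other outcomes are unbiased, so the walk hits $0$ in $O(k)$ steps and thereafter returns to $0$ every $O(\log k)$ steps w.h.p. Your $\Phi$-drift argument (in the spirit of \cite{PTW10}) gets to the same conclusion with a bit more machinery; the paper's pairwise random-walk picture is more elementary but less generalizable.

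The genuine gap is in your treatment of the ``all four loads exactly equal'' clause. Your proposed mechanism---each pairwise difference $\ell_i-\ell_j$ is an integer changing by $\pm 1$ or $0$, so if it changes sign it passes through $0$---only yields pairwise equalities at possibly different times, and your ``iterating collapses all four loads to a common value'' is a hope, not an argument: when the min bin catches the second-min, the top two bins may well have drifted apart. The paper sidesteps this entirely with a $\log^*$-style bootstrapping: once the gap is $O(\log k)$, rerun the same argument for $O(\log k)$ further steps to get gap $O(\log\log k)$, iterate $O(\log^* k)$ times (using only $o(k)$ additional insertions) until all four loads are within $O(1)$ of each other with probability $\Omega(1)$, and then observe that from any $O(1)$-gap state, the next $O(1)$ insertions produce exact equality with $\Omega(1)$ probability. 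This gives the equal-loads event with probability $\Omega(1)$; to upgrade to w.h.p., the paper partitions the last $ck/2$ insertions into $\Omega(k/\log k)$ chunks of length $O(\log k)$ each and applies the $\Omega(1)$-probability statement independently in each chunk. This ``shrink to $O(1)$ and then brute-force'' idea is what your sketch is missing.
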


Using Lemma \ref{lem:gaplemma}, we now construct a simple strategy for forcing \textsc{Greedy} to add a ball to a uniformly random bin.

\begin{lemma}[Uniform ball placement gadget]
\label{lem:greedyuniform}
Consider the \textsc{Greedy} algorithm on 4 bins, and fix an arbitrary initial state in which the bins have loads within $\epsilon_2 m$ of each other. Suppose we insert balls $x_1, \ldots, x_{\epsilon_1 m }$, and then we delete balls $x_1, \ldots, x_{\epsilon_1 m  - 1}$ (all except the last insertion). With high probability in $m$, this is equivalent to placing the ball $x_{\epsilon_1 m}$ uniformly at random into one of the bins $1, 2, 3, 4$.

\end{lemma}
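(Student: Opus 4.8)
The plan is to use Lemma \ref{lem:gaplemma} to reduce to the case where, at some convenient moment during the insertions of $x_1, \dots, x_{\epsilon_1 m}$, all four bins have \emph{exactly} equal load, and to then argue that the net effect of the leftover operations is to place $x_{\epsilon_1 m}$ into a uniformly random bin. First I would apply Lemma \ref{lem:gaplemma} with the given initial state (bins within $\epsilon_2 m$ of each other): w.h.p.\ in $m$, there is some time $t^\star$ before the final insertion $x_{\epsilon_1 m}$ at which all four bins have equal load $L$. Condition on this event and on the (random) value of $t^\star$ and load $L$; everything below is deterministic-plus-hashing given this conditioning.

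The key combinatorial observation is the following symmetry. Consider the multiset of operations performed strictly after time $t^\star$: these are the insertions of the balls $x_{t^\star+1}, \dots, x_{\epsilon_1 m}$, followed (in the overall sequence of the lemma) by the deletions of $x_1, \dots, x_{\epsilon_1 m - 1}$. Of those deletions, the ones deleting balls with index $> t^\star$ exactly cancel the corresponding post-$t^\star$ insertions, in the sense that each such ball is inserted and then later deleted; since \textsc{Greedy} is deterministic given the current loads and the ball's hash, and since deletions of balls inserted after $t^\star$ occur in a way that can be matched to their insertions, the \emph{net} contribution to every bin's load from all the post-$t^\star$ insertions and their matching deletions is zero — what remains is: (i) the state at time $t^\star$, which is perfectly balanced; (ii) the deletions of $x_1, \dots, x_{t^\star}$ (these are balls inserted \emph{before} $t^\star$, but at time $t^\star$ the bins are already equal, so removing them does not matter for the symmetry argument because we only care about the final configuration being symmetric under relabeling bins); and (iii) the single surviving ball $x_{\epsilon_1 m}$. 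I would make this precise by arguing directly that the joint distribution of the final load vector is invariant under any permutation $\sigma$ of the four bin labels: permuting the labels of the hashes $h(x_i)$ of all the post-$t^\star$ balls (including $x_{\epsilon_1 m}$) is measure-preserving, and because the configuration at $t^\star$ is $\sigma$-invariant (all loads equal) and \textsc{Greedy} commutes with relabeling, the final configuration with permuted hashes is exactly the $\sigma$-image of the final configuration. Hence the bin in which the lone surviving ball $x_{\epsilon_1 m}$ sits, relative to the balanced background, is uniform over $\{1,2,3,4\}$.

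More carefully, what I actually need is not that the full final configuration is symmetric, but that the \emph{location of $x_{\epsilon_1 m}$} is uniform. So I would set it up as: let $c$ be the bin containing $x_{\epsilon_1 m}$ at the end. Replaying the entire suffix (all operations after $t^\star$) under a uniformly random relabeling $\sigma$ of bins leaves the distribution of the operation sequence unchanged (the hashes are i.i.d.\ uniform, the deletion pattern is fixed/oblivious, and the pre-$t^\star$ state is $\sigma$-invariant), but sends $c$ to $\sigma(c)$; therefore $c$ is uniform. Combined with the w.h.p.\ event from Lemma \ref{lem:gaplemma} that such a $t^\star$ exists, and noting that conditioning on $t^\star$ and $L$ does not break the exchangeability of the relevant hashes, this yields that the gadget is equivalent to a uniform placement, up to the $1/\poly(m)$ failure probability.

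The main obstacle I anticipate is making the "matching deletions cancel post-$t^\star$ insertions" step fully rigorous: \textsc{Greedy}'s decision for a ball depends on the current loads, so an inserted-then-deleted ball does perturb the trajectory of intermediate states even though one might hope its net effect vanishes. The clean way around this — which is why I would lean on the relabeling-symmetry argument rather than a cancellation argument — is to avoid claiming any such cancellation at the level of intermediate states, and instead only use (a) existence of a perfectly balanced moment $t^\star$ (from Lemma \ref{lem:gaplemma}) and (b) the exact $S_4$-equivariance of \textsc{Greedy} together with exchangeability of the post-$t^\star$ hashes. The only real care needed is to check that conditioning on the high-probability event "$\exists\, t^\star < \epsilon_1 m$ with all loads equal" (and on the first such $t^\star$) is itself a symmetric event, so that the post-$t^\star$ hashes remain exchangeable after conditioning — this holds because the event and the choice of $t^\star$ depend on the hashes only through the sequence of \emph{load vectors}, which is equivariant under relabeling.
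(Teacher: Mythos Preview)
Your approach is essentially the paper's: invoke Lemma~\ref{lem:gaplemma} to get a time $t^\star < \epsilon_1 m$ at which all four loads are equal, then use the $S_4$-equivariance of \textsc{Greedy} together with the fact that the post-$t^\star$ hashes are i.i.d.\ uniform (and independent of the conditioning, since $t^\star$ is a stopping time for the hash sequence) to conclude that $x_{\epsilon_1 m}$ lands in a uniform bin. The paper's version is shorter because it never mentions the deletions at all---the bin of $x_{\epsilon_1 m}$ is fixed at the moment it is inserted and is never changed thereafter, so your entire discussion of matching insertions to deletions, net cancellations, and the pre-$t^\star$ deletions is unnecessary and can simply be dropped.
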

\begin{proof}
We have by Lemma \ref{lem:gaplemma} that, with high probability in $ m $, there is some insertion $x_i$, $i \in [\epsilon_1 m - 1]$, after which the bins have equal loads. It follows that, from the perspectives of insertions $x_{i + 1}, \ldots, x_{\epsilon_1 m}$, the four bins are symmetric. Thus the last insertion $x_{\epsilon_1 m}$ is equally likely to be placed into each of the bins, which establishes the lemma.
\end{proof}

Lemma \ref{lem:greedyuniform} allows for us to place a ball into a random bin, but we can only do this $O(m)$ times before there are too many balls ($> m$) in the system. 
But for the purposes of Proposition \ref{prop:unevengreedy}, we will need to do this $\Omega(m^2)$ times.
Our next lemma provides a mechanism for reducing the number of balls that are present while having only a small effect on the relative loads of the bins.

\begin{lemma}[Almost equal load reduction gadget]
\label{lem:greedyflush}
Consider the \textsc{Greedy} algorithm on 4 bins, and fix an arbitrary initial state in which the bins $1, 2, 3, 4$ have loads $\ell_1, \ell_2, \ell_3, \ell_4$ within $\epsilon_2 m$ of each other. We can construct an oblivous sequence of $O(\epsilon_1 m)$ insertions/deletions such that, after this sequence, the total number of balls in the system is at most $\epsilon_1 m$; and such that, with high probability in $m$, the new bin loads $\ell_i'$ for $i\in [4]$ satisfy
\begin{equation}
\label{eq:strongell}
\ell_i' = \ell_i - r + Y^{(i)},
\end{equation}
where $r \in \mathbb{N}$, $|Y^{(i)}| \le O(\log m)$, and $\E[Y^{(i)}] = 0$.   
\end{lemma}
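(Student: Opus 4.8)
The plan is to reduce the lemma to a single sub-gadget — call it a \emph{uniform decrement} — that (essentially) removes one ball from each of the four bins, and then apply it the right number of times. First relabel the bins so that $\ell_1 \le \ell_2 \le \ell_3 \le \ell_4$. If $\sum_i \ell_i \le \epsilon_1 m$ we take $r = 0$ and do nothing. Otherwise, since the bins are within $\epsilon_2 m$ of each other we have $\sum_i \ell_i \le 4\ell_1 + 3\epsilon_2 m$, so, using that $\epsilon_2$ is small compared with $\epsilon_1$, we may choose an integer $r$ with $\ell_1 - c'\log m \le r \le \ell_1$ and $\sum_i (\ell_i - r) \le \epsilon_1 m$; the target configuration is then $\ell_i' \approx (\ell_i - \ell_1) + \Theta(\log m) \ge 0$, of total size $\le 3\epsilon_2 m + O(\log m) \le \epsilon_1 m$. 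We then apply the uniform decrement $r$ times (plus $O(\epsilon_1 m)$ additional bookkeeping insertions/deletions to keep the ball count under control).

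To implement a uniform decrement I would combine the uniform-ball-placement gadget of Lemma~\ref{lem:greedyuniform} with Greedy's self-balancing behaviour from Lemma~\ref{lem:gaplemma}: a short burst of uniform placements lets us manufacture, and then immediately delete, a group of balls with exactly one ball in each bin, so that the decrement removes precisely one ball per bin with probability $1 - 1/\poly(m)$. (The rare failures, together with a few $O(\log m)$-sized imbalances introduced by the bookkeeping insertions, are exactly what produces the $Y^{(i)}$ term.) Two structural facts then give the stated form of the conclusion. First, the whole construction can be made symmetric under relabeling the four bins — every gadget used places a ball into a uniformly random bin, and every deletion is specified by position — so $Y^{(1)}, Y^{(2)}, Y^{(3)}, Y^{(4)}$ are identically distributed; since the net number of insertions minus deletions performed by the flush is a fixed constant, which we arrange to equal $-4r$, the deterministic identity $\sum_i Y^{(i)} \equiv 0$ holds, and by symmetry $\E[Y^{(i)}] = 0$ exactly. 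Second, on the high-probability event that every one of the $r \le \poly(m)$ decrements succeeds, the per-bin error stays $O(\log m)$: left unchecked it would random-walk to $\Theta(\sqrt m)$, but Greedy's self-correction (Lemma~\ref{lem:gaplemma}), applied only to the \emph{decrement errors} and not to the intended imbalance $\ell_i - \ell_j$, keeps it bounded, so \eqref{eq:strongell} holds with $|Y^{(i)}| \le O(\log m)$.

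The step I expect to be the main obstacle is designing the uniform-decrement sub-gadget itself — removing (close to) the same number of balls from every bin even though deletions are bin-oblivious. A uniform-at-random deletion of $\Theta(r)$ balls would remove a number from each bin fluctuating by $\Theta(\sqrt r) = \Theta(\sqrt m)$, far too much, so the deletions must come in tightly correlated batches that each touch all four bins; moreover a batch that removed \emph{more} from the fuller bins would shrink the imbalance $\ell_i - \ell_j$ irreversibly, since Greedy can only decrease (never restore) an imbalance, so the batches must be genuinely balanced. Certifying a ``one ball per bin'' quadruple out of a short run of uniform placements, bounding the accumulated error through Lemma~\ref{lem:gaplemma}, and fitting everything inside the allotted $O(\epsilon_1 m)$ operations is the technical heart of the proof.
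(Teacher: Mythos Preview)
Your plan has a genuine gap at exactly the point you flag as the obstacle, and I do not see how to close it along the lines you suggest.

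First, the operation budget. Each invocation of Lemma~\ref{lem:greedyuniform} costs $\Theta(\epsilon_1 m)$ insertions/deletions to produce a single uniformly-placed ball. You need $r = \Theta(m)$ decrements, hence $\Theta(m)$ manufactured balls, which is $\Theta(\epsilon_1 m^2)$ operations --- far outside the $O(\epsilon_1 m)$ allowed. If instead you try to manufacture many ``uniform'' balls cheaply by first running Lemma~\ref{lem:gaplemma} once to equalize the bins and then inserting, you have already destroyed the imbalance $\ell_i - \ell_j$ you are trying to preserve.

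Second, and more fundamentally, the sentence ``Greedy's self-correction, applied only to the decrement errors and not to the intended imbalance $\ell_i - \ell_j$'' is not a mechanism you can actually implement. \textsc{Greedy} sees only the current loads; it cannot distinguish the ``intended'' part of an imbalance from the ``error'' part, so any step that invokes Lemma~\ref{lem:gaplemma} to damp the accumulated $\Theta(\sqrt{m})$ error will equally damp $\ell_i - \ell_j$. Without that self-correction the errors do random-walk to $\Theta(\sqrt{m})$, as you note, which kills the $O(\log m)$ bound on $|Y^{(i)}|$.

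The paper sidesteps the whole ``remove the same amount from each bin'' difficulty with a sign-flip trick. A single sub-gadget is: insert $\epsilon_1 m$ fresh balls $x_1,\ldots,x_{\epsilon_1 m}$, then delete \emph{all} of the balls that were present before (not the $x_t$'s). By Lemma~\ref{lem:gaplemma} the post-insertion loads $q_i$ are within $O(\log m)$ of each other and (w.h.p.) pass through an exactly-equal state, so by symmetry $\E[q_i]$ is the same for all $i$. After the deletions the new load is $\ell_i' = q_i - \ell_i = r - \ell_i + Y^{(i)}$ with $r = \E[q_i]$: the imbalance is preserved in magnitude but \emph{flipped in sign}. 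Applying this sub-gadget twice flips the sign back and yields \eqref{eq:strongell}. This uses only $O(\epsilon_1 m)$ operations and never needs to remove a controlled number of balls per bin --- deleting \emph{all} original balls removes exactly $\ell_i$ from bin $i$ automatically.
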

\begin{proof}
Let us begin by describing is a sequence of $O(\epsilon_1 m)$ insertions/deletions after which (1) the total number of balls in the system is at most $\epsilon_1 m$; and (2) the new loads $\ell_i'$ of the bins satisfy (w.h.p. in $m$)
\begin{equation}
\label{eq:weakerell}
\ell_i' = r - \ell_i + Y^{(i)},
\end{equation}
where $r \in \mathbb{N}$, $|Y^{(i)}| \le O(\log m)$, and $\E[Y^{(i)}] = 0$. (Note that \eqref{eq:weakerell} is the same as \eqref{eq:strongell} but with $r$ and $\ell_i$ flipped).

The lemma  would then follow by applying the above construction twice. That is, first we obtain $\ell_1', \ell_2', \ell_3', \ell_4$ satisfying \eqref{eq:weakerell}, and then apply it again to obtain $\ell_1'', \ell_2'', \ell_3'', \ell_4''$ satisfying 
\begin{equation}
\label{eq:weakerell2}
{\ell_i}'' = r' - \ell_i' + {Y'^{(i)}},
\end{equation}
where $r' \in \mathbb{N}$, $|{Y'^{(i)}}| \le O(\log m)$, and $\E[{Y
'^{(i)}}] = 0$. Chaining together \eqref{eq:weakerell} and \eqref{eq:weakerell2}, we get relationship between $\ell_1, \ell_2, \ell_3, \ell_3$ and $\ell_1'', \ell_2'', \ell_3'', \ell_4''$ as desired by \eqref{eq:strongell}. 

Our construction for achieving \eqref{eq:weakerell} is very simple: we perform $\epsilon_1 m$ insertions $x_1, x_2, \ldots, x_{\epsilon_1 m}$, and then we delete all of the \emph{other elements} besides $x_1, x_2, \ldots, x_{\epsilon_1 m}$. Let $\ell_i$ be the load of bin $i$ before these insertions/deletions, let $q_i$ be the load of bin $i$ after the insertions are completed (but the deletions have not yet begun), and let $\ell_i'$ be the load of bin $i$ after the deletions have completed. 

By Lemma \ref{lem:greedyflush}, the quantities $q_1, q_2, q_3, q_4$ are within $O(\log m)$ of each other (w.h.p.~in $m$). Moreover, w.h.p.~in $m$, there is some point during the insertions at which all of the bins have equal loads---if we condition on this, then we have $\E[q_1] = \E[q_2] = \E[q_3] = \E[q_4]$ by symmetry. Defining $Y^{(i)} = q_i - \E[q_i]$, we have $|Y^{(i)}| \le O(\log m)$, and $\E[Y^{(i)}] = 0$.   

As $\ell_i' = q_i - \ell_i$, we have
$\ell_i' = \E[q_i] + Y^{(i)} - \ell_i$.
Setting $r = \E[q_i]$, it follows that  \eqref{eq:weakerell} holds w.h.p.~in $m$. 
\end{proof}

\subsubsection{Applying the gadgets}
We say that an application of Lemma \ref{lem:greedyuniform} or of Lemma \ref{lem:greedyflush} {\em fails} if either: the precondition of $\ell_1, \ell_2, \ell_2, \ell_4$ being within $\epsilon_2 m$ of each other fails (this is a {\em precondition failure)}; or the high-probability guarantee offered by the lemma fails (this is a {\em probabilistic failure}). 

We now describe the sequence of insertions/deletions that we use to achieve Proposition \ref{prop:unevengreedy}.
We perform $\epsilon_3 m$ phases, where phase $a \in [\epsilon_3 m]$ proceeds as follows:
\begin{itemize}
\item Apply Lemma \ref{lem:greedyuniform} $m$ times, one after another. For $b \in [m]$, use $Z_{m \cdot (a - 1) + b}$ to denote the bin that the $b$-th application of the lemma adds a ball to. If the lemma fails, then for the sake of analysis, we redefine $Z_{m \cdot (a - 1) + b}$ to be uniformly random in $[4]$. This ensures that, regardless of whether the lemma fails, the $Z_{i}$'s are independently and uniformly random in $[4]$. 
\item Apply Lemma \ref{lem:greedyflush} once to reduce the loads almost equally. Let $Y^{(1)}_a,Y^{(2)}_a,Y^{(3)}_a,Y^{(4)}_a$ denote the outcomes of $Y^{(1)},Y^{(2)},Y^{(3)},Y^{(4)}$ in that application of the lemma. If the lemma fails, then for the sake of analysis, we redefine $Y^{(1)}_a,Y^{(2)}_a,Y^{(3)}_a,Y^{(4)}_a$ to be $0$. 
\end{itemize}

To analyze the sequence of insertions/deletions, we first argue that the $Y^{(s)}_i$s have a negligible effect on the loads of the bins at any given moment.
\begin{lemma}
\label{lem:Yimpact} 
Let $s \in [4]$ and $k \in [\epsilon_3 m]$.
Then w.h.p. in $m$, it holds that  for each $k$, 
$ | \sum_{a = 1}^{k} Y^{(s)}_a | \le \tilde{O}(\sqrt{m})$,  where $\tilde{O}(\cdot)$ hides polylogarithmic factors in $m$. 
\end{lemma}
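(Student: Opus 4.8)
The plan is to view $\sum_{a=1}^{k} Y^{(s)}_a$ as a martingale and apply a concentration inequality such as Azuma--Hoeffding (or a Freedman-type bound). Fix a color $s \in [4]$. Recall from Lemma~\ref{lem:greedyflush} that, in each phase $a$, conditioned on the preceding history (i.e. the state of the system at the start of phase $a$), the random variable $Y^{(s)}_a$ is mean-zero and satisfies $|Y^{(s)}_a| \le O(\log m)$ deterministically (when the lemma does not fail, which it does w.h.p.; and when it does fail we have redefined $Y^{(s)}_a = 0$, so the bound $|Y^{(s)}_a| \le O(\log m)$ and $\E[Y^{(s)}_a \mid \mathcal{F}_{a-1}] = 0$ hold in all cases, where $\mathcal{F}_{a-1}$ is the sigma-algebra generated by everything up through the end of phase $a-1$). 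Hence the partial sums $M_k = \sum_{a=1}^{k} Y^{(s)}_a$ form a martingale with respect to the filtration $(\mathcal{F}_k)$, with bounded increments of size $O(\log m)$.

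First I would carefully justify the martingale property: the subtle point is that $\E[Y^{(s)}_a \mid \mathcal{F}_{a-1}] = 0$ requires knowing that, conditioned on $\mathcal{F}_{a-1}$, the bins are within $\epsilon_2 m$ of each other so that Lemma~\ref{lem:greedyflush}'s symmetry argument applies; but if the precondition fails we have set $Y^{(s)}_a = 0$, which is also mean-zero, so the conditional-mean-zero property holds unconditionally. Second, I would apply Azuma--Hoeffding: for a martingale with increments bounded by $c = O(\log m)$ and $k \le \epsilon_3 m \le m$ steps,
\[
\Pr\!\left[\,|M_k| \ge t\,\right] \le 2\exp\!\left(-\frac{t^2}{2 k c^2}\right) \le 2\exp\!\left(-\frac{t^2}{2 m \cdot O(\log^2 m)}\right).
\]
Choosing $t = \Theta(\sqrt{m}\,\log^{2} m)$ (say) makes this at most $2\exp(-\Omega(\log^2 m)) = 1/m^{\omega(1)}$, i.e. w.h.p. in $m$. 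Third, I would take a union bound over all $k \in [\epsilon_3 m]$ (a polynomial-in-$m$ number of events) and over the four colors $s \in [4]$; since each individual failure probability is $1/m^{\omega(1)}$, the union bound still gives a w.h.p.-in-$m$ guarantee that $|\sum_{a=1}^{k} Y^{(s)}_a| \le \tilde{O}(\sqrt{m})$ simultaneously for all $k$ and all $s$.

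I do not expect a serious obstacle here: the bound $\tilde O(\sqrt m)$ is deliberately generous (one loses a $\polylog$ factor precisely so that Azuma, together with a polynomial union bound, goes through comfortably), and all the structural inputs — bounded, conditionally-mean-zero increments — are handed to us by Lemma~\ref{lem:greedyflush}. The one place to be careful is the conditioning/filtration bookkeeping: making sure that "phase $a$'s $Y$'s depend only on phase-$a$ randomness given the earlier history" is stated precisely, and that the redefinition-to-$0$-on-failure convention is invoked so the increment bounds and conditional-mean-zero property are genuinely unconditional. A Freedman-type inequality would also work and would let one exploit that the conditional variance of each increment is $O(\log^2 m)$, but it is not needed for the stated $\tilde O(\sqrt m)$ bound, so I would stick with the simpler Azuma argument.
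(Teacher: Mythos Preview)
Your proposal is correct and follows essentially the same approach as the paper: the paper's proof simply observes that the partial sums $P_r=\sum_{a=1}^r Y^{(s)}_a$ form a martingale with increments bounded by $O(\log m)$ and then invokes Azuma's inequality. Your write-up supplies a bit more detail (the filtration, the redefinition-to-$0$ convention, and the explicit union bound over $k$ and $s$), but the argument is the same.
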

\begin{proof}
The sequence of partial sums $P_r = \sum_{a = 1}^r Y^{(s)}_a$ for $r =0,\ldots,k$ forms a martingale satisfying $|P_r - P_{r - 1}|  = O(\log m)$ deterministically for each $r \in [k]$. The lemma follows from Azuma's inequality. 
\end{proof}

Next we consider the effect of the $\epsilon_3 m^2$ insertions $Z_i$ over the $\epsilon m$ phases, and show that with probability at least $1 - \epsilon_2$, there is no point in time at which the $Z_i$'s cause an imbalance of more $\epsilon_2 m / 2$. 

For $k \in [\epsilon_3 m^2]$ and $s\in [4]$,  let \[S(k,s) = |\{i \in [k] \mid Z_i = s\}|\] denote the number insertions in bin $s$ during the first $k$ applications of Lemma \ref{lem:greedyuniform}. 

\begin{lemma}
\label{lem:nofailure1}
Let $s \in [4]$ and $\epsilon_2 = (2\epsilon_3)^{1/3}$. With probability at least $1 - \epsilon_2$, it holds (simultaneously) for all $k \in [\epsilon_3 m^2]$ that
\[| S(k,s) - k/4| \le \epsilon_2 m / 2.\]
\end{lemma}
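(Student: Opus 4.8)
The plan is to view each $S(k,s)$ as a sum of i.i.d.\ indicator variables and control the maximal deviation $\max_{k \le \epsilon_3 m^2} |S(k,s) - k/4|$ using a maximal inequality. Recall that by construction the $Z_i$ are i.i.d.\ uniform on $[4]$, so for fixed $s$ the increments $\mathbf{1}[Z_i = s] - 1/4$ are i.i.d., mean-zero, bounded in $[-1/4, 3/4]$. Hence $M_k := S(k,s) - k/4 = \sum_{i=1}^k (\mathbf{1}[Z_i=s] - 1/4)$ is a martingale with bounded increments. First I would apply a maximal version of a concentration bound — either Doob's maximal inequality combined with Azuma--Hoeffding, or more directly Kolmogorov's maximal inequality (since the increments are i.i.d.\ and square-integrable) — to get that
\[
\Pr\!\left[\max_{k \le \epsilon_3 m^2} |M_k| > \epsilon_2 m / 2\right] \le \frac{\var(M_{\epsilon_3 m^2})}{(\epsilon_2 m/2)^2} = \frac{\epsilon_3 m^2 \cdot \frac{3}{16}}{\epsilon_2^2 m^2/4} = \frac{3 \epsilon_3}{4 \epsilon_2^2}.
\]
Plugging in $\epsilon_2 = (2\epsilon_3)^{1/3}$ gives a bound of $\frac{3\epsilon_3}{4 (2\epsilon_3)^{2/3}} = \frac{3}{4} \cdot \frac{\epsilon_3^{1/3}}{2^{2/3}} \cdot \epsilon_3^{-0}$... more simply, this is $O(\epsilon_3^{1/3}) = O(\epsilon_2)$, and by taking the hidden constants into account (or tuning the relationship $\epsilon_2$ vs.\ $\epsilon_3$ slightly) one gets that the failure probability is at most $\epsilon_2$ as claimed.

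The second-moment (Kolmogorov) route is the cleanest since it needs only that the increments are i.i.d.\ with finite variance, and it directly produces the $1 - \epsilon_2$ bound with the stated choice of $\epsilon_2$ as a function of $\epsilon_3$ — indeed the whole point of setting $\epsilon_2 = (2\epsilon_3)^{1/3}$ is precisely to make $\var(M_{\epsilon_3 m^2})/(\epsilon_2 m /2)^2 \le \epsilon_2$ work out. So concretely: (i) write $M_k$ as the above sum, note it is a mean-zero random walk; (ii) compute $\var(M_{\epsilon_3 m^2}) = \epsilon_3 m^2 \cdot \frac14 \cdot \frac34 = \tfrac{3}{16}\epsilon_3 m^2$; (iii) invoke Kolmogorov's maximal inequality $\Pr[\max_{k} |M_k| \ge \lambda] \le \var(M)/\lambda^2$ with $\lambda = \epsilon_2 m/2$; (iv) simplify using $\epsilon_2^3 = 2\epsilon_3$ to conclude the bound is $\le \tfrac{3}{16} \cdot \frac{\epsilon_3}{\epsilon_2^2/4} = \tfrac{3}{4}\cdot\frac{\epsilon_3}{\epsilon_2^2} = \tfrac{3}{8}\epsilon_2 \le \epsilon_2$.

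I do not anticipate a genuine obstacle here — this is a routine application of a maximal inequality — but the one point requiring a little care is matching the constants so that the final probability is at most $\epsilon_2$ exactly (rather than $O(\epsilon_2)$); this is handled by the specific choice $\epsilon_2 = (2\epsilon_3)^{1/3}$, and if the raw constant $3/8$ were inconvenient one could instead absorb it by strengthening the relationship to, say, $\epsilon_2 = C\epsilon_3^{1/3}$ for a suitable absolute constant $C$. A secondary subtlety is that we want the bound to hold \emph{simultaneously} for all $k \in [\epsilon_3 m^2]$ rather than for a single $k$ — this is exactly what the maximal inequality buys us, so no union bound over $k$ is needed (which is important, since a naive union bound over $\epsilon_3 m^2$ values of $k$ would be far too lossy).
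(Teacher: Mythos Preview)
Your proposal is correct and follows essentially the same approach as the paper: both view $S(k,s)-k/4$ as a mean-zero random walk (martingale) with bounded increments and apply a second-moment maximal inequality (Kolmogorov/Doob) with $\lambda=\epsilon_2 m/2$, then use the relation $\epsilon_2^3 = 2\epsilon_3$ to bound the failure probability by $\epsilon_2$. Your computation is in fact slightly tighter than the paper's (you use the exact per-step variance $3/16$ rather than the cruder bound $1$), which is why your constants work out cleanly to $3\epsilon_2/8\le\epsilon_2$.
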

\begin{proof}
As $Z_i$ is equal to $s$ independently with probability $1/4$, the sequence $S(k,s)-k/4$ for $k=0,1,\ldots,\epsilon_3m^2$ forms a martingale with increments $\{-1/4,3/4\}$ (and hence variance at most $1$). By the maximal inequality for martingales, for any $\lambda >0$,
\[ \Pr\Big[ \max_{k \in [\epsilon_3 m^2]} |S(k,s)| > \lambda \Big] \leq 2 \frac{\var[S(\epsilon_3 m^2,s)]}{\lambda^2} \leq 2\frac{\epsilon_3 m^2}{\lambda^2}.\]
Setting $\lambda =  m (2\epsilon_3/\epsilon_2)^{1/2}$ so that the right hand side above is $\epsilon_2$, and choosing $\epsilon_2^3 \le 2 \epsilon_3$ so that $\lambda \ge  \epsilon_2 m$ gives the claimed result.
\end{proof}

Combining Lemmas \ref{lem:Yimpact} and \ref{lem:nofailure1}, we can bound the probability of any failures occurring during our construction.
\begin{lemma}
With probability at least $1 - \epsilon_2 - 1 / \poly(m)$, no failures (either precondition failures or probabilistic failures) occur during the construction.
\label{lem:nofailure2}
\end{lemma}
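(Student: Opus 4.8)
The plan is to bound the probability of a failure by a union bound over the two types of failures, handling probabilistic failures and precondition failures separately, and then chaining the argument phase-by-phase.

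First I would dispose of probabilistic failures. There are at most $O(\epsilon_3 m)$ applications of Lemma~\ref{lem:greedyuniform} (one per application inside each of the $\epsilon_3 m$ phases... actually $m$ per phase, so $\epsilon_3 m^2$ applications total) and $\epsilon_3 m$ applications of Lemma~\ref{lem:greedyflush}, plus the high-probability events in Lemmas~\ref{lem:Yimpact} and~\ref{lem:nofailure1}. Each of the per-application guarantees holds with high probability in $m$, i.e.\ with failure probability $1/\poly(m)$ where we may take the polynomial as large as we like; choosing it to dominate the $\poly(m)$ number of applications, a union bound gives that \emph{no} probabilistic failure occurs, and that the conclusions of Lemmas~\ref{lem:Yimpact} and~\ref{lem:nofailure1} both hold, except with probability $1/\poly(m)$.

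Next I would handle precondition failures, conditioning on the (probability-$1-1/\poly(m)$) event just described and on the event of Lemma~\ref{lem:nofailure1} with the choice $\epsilon_2 = (2\epsilon_3)^{1/3}$, which has probability at least $1-\epsilon_2$. The key observation is that, absent failures, the load of each bin $i$ after $k$ applications of Lemma~\ref{lem:greedyuniform} and the corresponding flushes is, up to additive $O(\log m)$ wiggle within each phase, equal to $S(k,s)$ shifted by a common (bin-independent) offset coming from the deterministic decrements $r$ in Lemma~\ref{lem:greedyflush}, plus the accumulated $Y$-terms $\sum_{a\le \text{current phase}} Y^{(s)}_a$. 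Hence the pairwise load imbalance at any moment is controlled by $\max_{s,s'} |S(k,s) - S(k,s')| + 2\max_s |\sum_a Y^{(s)}_a| + O(\log m)$. By Lemma~\ref{lem:nofailure1} the first term is at most $\epsilon_2 m$, and by Lemma~\ref{lem:Yimpact} the second is $\tilde O(\sqrt m) = o(\epsilon_2 m)$, and the $O(\log m)$ is negligible; so the imbalance never exceeds $\epsilon_2 m$ at any point during the construction. This is exactly the precondition needed for every application of Lemmas~\ref{lem:greedyuniform} and~\ref{lem:greedyflush}, so no precondition failure occurs either.

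The one subtlety — and the main thing to be careful about — is the circularity in the conditioning: the statement ``no precondition failure occurs'' is what licenses us to describe the loads via the $S(k,s)$ and $Y^{(s)}_a$ variables in the first place, since if some earlier application failed then its output was \emph{redefined} for the sake of analysis rather than genuinely realized. The clean way around this is an induction on the phase index $a$: assuming no failure has occurred through phase $a-1$, the redefinitions have not triggered, so the true loads genuinely equal the expressions in terms of $S$ and $Y$, whence the imbalance bound above shows the precondition holds entering phase $a$, and the probabilistic guarantees (already secured by the global union bound) show phase $a$ itself does not fail. Carrying this induction through all $\epsilon_3 m$ phases, and collecting the two error terms $1/\poly(m)$ (probabilistic) and $\epsilon_2$ (from Lemma~\ref{lem:nofailure1}), yields the claimed bound $1 - \epsilon_2 - 1/\poly(m)$.
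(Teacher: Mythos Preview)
Your proposal is correct and follows essentially the same approach as the paper: union-bound away the probabilistic failures (there are $O(m^2)$ applications, each failing with probability $1/\poly(m)$), then use Lemmas~\ref{lem:Yimpact} and~\ref{lem:nofailure1} to bound the load deviations by $\epsilon_2 m/2 + \tilde O(\sqrt m)$ and conclude the preconditions always hold. The paper's proof is terser; it writes the deviation of bin $s$ from the mean directly as $|\sum_{i\le k_1} Y^{(s)}_i + S(k_2,s) - k_2/4|$ and appeals to the two lemmas, without explicitly spelling out the induction you describe. Your observation about the circularity is well taken, but note that the paper's redefinition convention (set $Z_i$ uniform and $Y^{(s)}_a = 0$ on failure) is precisely what makes the induction clean: the variables $S$ and $\sum Y$ are defined unconditionally, so the bounds from Lemmas~\ref{lem:Yimpact} and~\ref{lem:nofailure1} hold on a single global event, and on that event an induction on the step index shows the redefinitions never trigger and the formulas describe the true loads.
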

\begin{proof}
Probabilistic failures occur with probability only $1 / \poly(m)$ per application of Lemma \ref{lem:greedyuniform} or Lemma \ref{lem:greedyflush}. Across the $O(m^2)$ applications of the lemmas, the probability of a probabilistic failure ever occurring is at most $1 / \poly(m)$. 
For the rest of the proof, we condition on no probabilistic failures occurring. 

We now bound the probability of any precondition failure. Before any particular application of Lemma \ref{lem:greedyuniform} or Lemma \ref{lem:greedyflush} (during the input sequence of insertions/deletions), for bin $s \in [4]$, the amount by which its load differs from the mean can be expressed as 
\[\Big| \sum_{i = 1}^{k_1} Y^{(s)}_i + S(k_2,s) - k_2/4 \Big|\]
for some $k_1, k_2$. By Lemmas \ref{lem:Yimpact} and \ref{lem:nofailure1}, the probability that this quantity ever exceeds $\epsilon_2 m$ (and hence any precondition failure occurring) is at most $\epsilon_2 + 1/ \poly(m)$,
which completes the proof.
\end{proof}

Finally, we argue that with probability at least $\epsilon_1$, the $Z_i$'s \emph{do cause} an imbalance of $\Omega(m)$ at the end of the construction.
In particular, bin 1 contains $\Omega(m)$ fewer balls than bins $2, 3, 4$. 
\begin{lemma}
\label{lem:Zimb}
With probability at least $\epsilon_1$, we have that
\[| S(\epsilon_3 m^2, 1)\}| < \max_{s \in \{2, 3, 4\}} | S(\epsilon_3 m^2, s)| - \Omega( \sqrt{\epsilon_3} m).\]

\end{lemma}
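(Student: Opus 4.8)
The plan is to track the four random walks $S(k,s)$ for $s\in\{1,2,3,4\}$, which are driven by the i.i.d.\ uniform choices $Z_1,\ldots,Z_{\epsilon_3m^2}\in[4]$. At time $N=\epsilon_3m^2$ each $S(N,s)-N/4$ is a centered sum of $N$ independent increments with variance $\Theta(N)=\Theta(\epsilon_3m^2)$, so it is approximately Gaussian with standard deviation $\sigma=\Theta(\sqrt{\epsilon_3}\,m)$; moreover the vector $(S(N,1),\ldots,S(N,4))$ lies on the hyperplane $\sum_s S(N,s)=N$, and by symmetry the four coordinates are exchangeable. The event I want is that coordinate $1$ is the (strict) minimum by a margin of $\Omega(\sqrt{\epsilon_3}\,m)=\Omega(\sigma)$ below every other coordinate.

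Concretely, I would isolate a clean sub-event: ask that $S(N,1)-N/4 \le -c_1\sigma$ while simultaneously $S(N,s)-N/4 \ge -c_2\sigma$ for each $s\in\{2,3,4\}$, for suitable constants $c_2 < c_1$; this forces $S(N,1) < \min_{s\in\{2,3,4\}} S(N,s) - (c_1-c_2)\sigma$, which is exactly the claimed $\Omega(\sqrt{\epsilon_3}\,m)$ gap (note the statement as written has an absolute value around $S(\epsilon_3m^2,1)$, but since $S\ge 0$ always and the other three terms dominate it suffices to compare the raw counts; I would phrase it in terms of deviations from $N/4$). To lower-bound the probability of this sub-event by a constant $\ge \epsilon_1$, I would first use a Gaussian (Berry--Esseen / CLT) approximation for the joint law of $(S(N,s)-N/4)/\sigma$: a multivariate CLT applies since each step is a fixed bounded random vector, and the limiting covariance is the standard ``multinomial'' covariance on the hyperplane $\sum_s=0$, which is nondegenerate on that $3$-dimensional space. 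The corresponding Gaussian event — one designated coordinate well below the others — has some fixed positive probability $p_0>0$ depending only on $c_1,c_2$; since $\epsilon_1,\epsilon_2,\epsilon_3$ are chosen \emph{after} the universal constants of this lemma, one takes the constants $c_1,c_2$ (hence $p_0$) first and then requires $\epsilon_1\le p_0/2$, say, so the Berry--Esseen error (which vanishes as $m\to\infty$) is absorbed. Alternatively, and perhaps more cleanly, I can avoid the CLT entirely: by exchangeability of the four coordinates, $\Pr[\text{coordinate }1\text{ is the strict minimum}] \ge \tfrac14 - o(1)$ (ties have vanishing probability), and then a separate anticoncentration argument shows that, conditioned on being the minimum, the gap to the second-smallest is $\Omega(\sigma)$ with constant probability — again because the joint distribution spreads out on scale $\sigma$. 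Either route yields a constant probability, and one sets $\epsilon_1$ below that constant.

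The main obstacle is making the ``gap is $\Omega(\sigma)$'' part rigorous with an honest constant rather than hand-waving: one needs a genuine anticoncentration statement for the \emph{spacing} between order statistics of the four correlated near-Gaussian coordinates, not just marginal anticoncentration. The cleanest fix is to go through the multivariate CLT: reduce to the exact Gaussian computation, where the probability that $N(0,\Sigma)$ has its first coordinate at least $\delta$ below all others is a fixed positive number for small enough constant $\delta$ (an explicit, if unenlightening, integral over a cone), and then quote a multivariate Berry--Esseen bound to transfer this to $S(N,\cdot)$ up to an $o(1)$ error in $m$. I should also double-check the bookkeeping that $N=\epsilon_3m^2$ is large enough that this $o(1)$ error is smaller than the slack $p_0-\epsilon_1$; since the Berry--Esseen rate is $O(1/\sqrt N)=O(1/(\sqrt{\epsilon_3}\,m))\to 0$ for fixed $\epsilon_3$, this is immediate for $m$ large. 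Everything else — that the $Z_i$ are genuinely i.i.d.\ uniform regardless of gadget failures (guaranteed by the re-definition convention in the construction), and that the increments are bounded — is already in place.
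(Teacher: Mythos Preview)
Your proposal is correct and isolates exactly the same sub-event as the paper: $S(N,1)-N/4\le -c_1\sigma$ together with $S(N,s)-N/4\ge -c_2\sigma$ for $s\in\{2,3,4\}$, with $N=\epsilon_3 m^2$ and $\sigma=\Theta(\sqrt{\epsilon_3}\,m)$. The difference is only in how you lower-bound the probability of this sub-event. You reach for a multivariate CLT/Berry--Esseen argument (or, alternatively, exchangeability plus anticoncentration for spacings), whereas the paper does something much more elementary: it observes that $X_1:=S(N,1)$ is $\mathrm{Bin}(N,1/4)$, so $\Pr[X_1\le \mu-10\sqrt{\mu}]$ is at least some fixed constant (taken to be $\ge 2\epsilon_1$); then, \emph{conditioning} on the value of $X_1$, each of $X_2,X_3,X_4$ is (marginally) binomial with mean $\mu'>\mu$, and a one-line concentration bound gives $X_s>\mu'-5\sqrt{\mu'}\ge \mu-5\sqrt{\mu}$ with probability $\ge 0.9$ each, so all three hold with probability $\ge 0.7$ by a union bound. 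Multiplying, the sub-event has probability $\ge 2\epsilon_1\cdot 0.7>\epsilon_1$.

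Both routes are valid. The paper's conditioning trick avoids any limit theorem or anticoncentration for order statistics: it decouples the ``$X_1$ is small'' part from the ``$X_2,X_3,X_4$ are not too small'' part by conditioning, and then only needs marginal binomial tail/concentration bounds. Your CLT route is more general (it would work for non-multinomial increments) but is heavier than needed here, and the ``honest constant'' issue you flag for the spacing disappears entirely in the paper's argument. If you want to streamline your write-up, simply replace the CLT step by the condition-then-union-bound computation above.
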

\begin{proof}
Let $X_s$ denote the number of such balls inserted in bin $s$. 
Then $X_1$ is a binomial random variable with mean $\mu = \Theta(\epsilon_3 m^2)$. Thus, with probability at least $2\epsilon_1$, we have that, $X_1 \le \mu - 10\sqrt{\mu}$. On the other hand, if we condition on some value $\le \mu - 10\sqrt{\mu}$ for $X_1$, then the variables $X_2, X_2, X_4$ become binomial random variables with means $\mu' > \mu$. Each $X_i$ has probability at least $0.9$ of satisfying $X_i > \mu' - 5 \sqrt{\mu'} \ge \mu - 5\sqrt{\mu}$. Thus, if we condition on $X_1 \le \mu - 10\sqrt{\mu}$, then the probability at least $0.7$, we have  $X_2, X_3, X_4 > \mu - 5 \sqrt{\mu}$. Putting these together, the probability that $\max\{X_2, X_3, X_4\} - X_1 > 5\sqrt{\mu}$ is at least 
\[\Pr[X_1 \le \mu - 10 \sqrt{\mu}] \cdot \Pr[X_2, X_3, X_4 > \mu - 5 \sqrt{\mu} \mid X_1 \le \mu - 10 \sqrt{\mu}] \ge 2 \epsilon_1 \cdot 0.7 > \epsilon_1. \qedhere \]
\end{proof}
We  can now complete the proof of Proposition \ref{prop:unevengreedy}.

\begin{proof}[Proof of Proposition \ref{prop:unevengreedy}]
We prove the proposition using the construction described in this section. Note that, by design, there are never more than $m$ balls present at a time, as Lemma \ref{lem:greedyflush} brings the number of balls back down to $\epsilon_1 m$ every $O(\epsilon_1 m)$ operations.

By Lemma \ref{lem:nofailure2}, with probability at least $1 - \epsilon_2 - 1 / \poly(n)$, all of the applications of Lemma \ref{lem:greedyuniform} and Lemma \ref{lem:greedyflush} succeed. Conditioned on this, at the end of the construction, the gap of each bin $s \in [4]$ can be expressed as 
\[\sum_{i = 1}^{\epsilon_3 m} Y^{(s)}_i + S(\epsilon_3 m^2,s) - \epsilon_3 m^2 /4.\]
By Lemma \ref{lem:Yimpact}, we have $\left|\sum_{i = 1}^{\epsilon_3 m} Y^{(s)}_i\right| \leq \tilde{O}(\sqrt{m})$
 with high probability in $m$. On the other hand, by Lemma \ref{lem:Zimb},
\[|S(\epsilon_3m^2,1)| < \max_{s \in \{2, 3, 4\}} | S(\epsilon_3m^2,s)| - \Omega(\sqrt{\epsilon_3} m)\]
with probability at least $\epsilon_1$. It follows that, with probability at least $\epsilon_1 - \epsilon_2 - 1 / \poly(n)$, 
the load of bin 1 at the end of the construction is $\Omega(\sqrt{\epsilon_3} m)$ smaller than the loads of bins $2, 3, 4$.
\end{proof}

\section{An Impossibility Result For The Deletions with Reinsertions}\label{sec:generallower}

In this section we prove an impossibility result for the reinsertion/deletion model, namely, that no ID-oblivious insertion strategy can guarantee sub-polynomial overload. 

\begin{theorem}
Consider the reinsertion/deletion model with $4$ bins, and with a limit of up to $m$ balls present at a time. Against any ID-oblivious insertion strategy, it is possible for an oblivious adversary to force a maximum load of $m / 4 + m^{\Omega(1)}$ at some point in the first $\poly(m)$ operations, with high probability in $m$. \label{thm:generallower}
\end{theorem}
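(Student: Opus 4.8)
The plan is to mimic the ``biases thrown back'' idea from the \textsc{Greedy} lower bound, but now exploit that reinsertions force the ID-oblivious strategy to reuse correlated hashes. First I would set up a base configuration of $\Theta(m)$ balls that sits roughly balanced, and then identify a large pool $P$ of balls whose hashes are all of the form $\{1,k\}$ or $\{2,k\}$ for $k \in \{3,4\}$ (equivalently, balls that must choose between a ``left'' side $\{1,2\}$ and a ``right'' side $\{3,4\}$, or are pinned to one side). Since the strategy is ID-oblivious, during any contiguous window its decision for a ball depends only on the hash and on the position among currently-present balls, so by a pigeonhole/averaging argument over the $\poly(m)$-length interaction there is a window $W$ and a side (say right) such that the strategy, \emph{in aggregate}, sends an $\Omega(1)$-fraction more of the side-ambiguous balls inserted during $W$ to the right than to the left. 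Crucially, because these balls can be \emph{reinserted} with the same hash, we get to replay exactly those balls later.

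The core step: after window $W$, delete all balls of $P$ that the strategy placed on the right, and reinsert them. Here the cyclic dependency bites the strategy in two ways. If on reinsertion the strategy again sends an $\Omega(1)$ fraction of them right, we have built up a right-side surplus of $\Omega(|P|) = \Omega(m)$ and are done immediately; if instead it sends most of them left (to avoid the surplus), then we delete the left-side balls and reinsert \emph{them}, and iterate. The point is that the strategy has no memory of ball identities: whichever side it currently favors, the adversary deletes the balls it just placed there and hands them back, and the strategy's next decision is governed by the \emph{current loads}, which the adversary has been steering. I would formalize this as a potential-function / martingale argument on the quantity $D = (\ell_1+\ell_2) - (\ell_3+\ell_4)$: each replay round is a sequence of deletions followed by reinsertions of a common batch of $\Theta(m)$ ambiguous balls, and I would show that after a deletion of a one-sided batch the strategy is faced with a large imbalance, and then argue that no reinsertion policy can simultaneously keep $|D|$ small across \emph{both} the ``delete-right-then-reinsert'' and ``delete-left-then-reinsert'' replays — one of the two must move $|D|$ up by $m^{\Omega(1)}$, because the strategy's behavior is a fixed function of the hash and of $D$ (and the number of present balls), and that function cannot push probability mass toward the lighter side in both scenarios without being self-contradictory. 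Running $\poly(m)$ such replay rounds, and using concentration (Azuma, as in Lemmas~\ref{lem:Yimpact}--\ref{lem:nofailure1}) to control the low-order fluctuations, one of the rounds witnesses $|D| \ge m^{\Omega(1)}$ w.h.p.\ in $m$, hence some bin reaches load $m/4 + m^{\Omega(1)}$.

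To make the ``delete the balls it just placed'' step legal, I need to keep the total count below $m$ throughout, so I would interleave a ``load reduction'' gadget analogous to Lemma~\ref{lem:greedyflush} — insert a fresh batch, delete everything else, repeat — being careful that these gadget operations are themselves ID-oblivious-adversary-legal (they are: they only refer to recency of insertion) and perturb $D$ by at most $\tilde O(\sqrt m)$ per application, which is swallowed by the $m^{\Omega(1)}$ target. I would also need the pool $P$ of side-ambiguous balls to be $\Omega(m)$: a random batch of $\Theta(m)$ insertions contains $\Omega(m)$ balls with hash avoiding the ``both endpoints in $\{1,2\}$ or both in $\{3,4\}$'' pattern, so this is immediate.

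\textbf{Main obstacle.} The delicate part is the averaging/pigeonhole step that extracts a \emph{single} batch $B$ of $\Omega(m)$ reinsertable balls on which the strategy's aggregate bias is provably $\Omega(|B|)$ in a \emph{reproducible} way — i.e.\ the bias must be a consequence of $(h, D, \text{count})$ rather than an artifact that disappears when the batch is replayed. The subtlety is that the strategy's decision can depend on the full current state, not just $D$, so ``replaying the batch'' does not replay the state; I expect to handle this by arguing that among a long sequence of near-identical states (all with $D$ in a narrow band and count near $m$), the strategy behaves near-identically on the ambiguous hashes, and then choosing the batch and the replay points to all lie in such a band. Pinning down this ``the strategy is essentially a function of $D$ on the relevant hash classes, in the relevant regime'' claim, and showing the resulting function cannot be simultaneously self-correcting under both replay directions, is where the real work lies.
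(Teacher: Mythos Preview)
Your proposal has a fatal gap: the step ``delete all balls of $P$ that the strategy placed on the right, and reinsert them'' --- and the subsequent case split on where the reinsertions land --- requires the adversary to \emph{observe} the algorithm's bin choices. But the adversary here is oblivious: it must fix its entire sequence of \textsc{Insert}/\textsc{Delete} calls independently of the algorithm's randomness, and that randomness includes which of $h_1(x),h_2(x)$ each ball is placed in. Your ``main obstacle'' paragraph is not the real obstacle; even granting a window with reproducible aggregate bias, you have no legal way to select ``the balls that went right'' for deletion, nor to branch on what happened after reinsertion. The load-reduction and ambiguous-pool pieces are fine, but the engine of your argument is simply unavailable to an oblivious adversary.

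The paper's mechanism is quite different and never tries to observe placements. It fixes two $k$-ball sets $A,B$ and conditions on a constant-probability event about their \emph{hash multisets}: $A$ has roughly $t=\Theta(\sqrt k)$ extra balls with hash $(1,2)$ and $t$ fewer with hash $(3,4)$. When $A\cup B$ is inserted in random order, an ID-oblivious strategy cannot tell an $A$-ball from a $B$-ball with the same hash, so the hash asymmetry alone forces $\E[v(A)-v(B)]\ge t-O(\sqrt k)$, where $v(S)$ counts $S$-balls landing in bins $\{1,2\}$. Coupled with the observation that, in a full and nearly balanced system, any swap of equal-size sets $X\to X'$ must have $|v(X)-v(X')|=O(\sqrt k)$, this yields a \emph{splitting gadget}: any set $X$ of size $\approx k^{1.5}\log k$ can be replaced (by reinserting $A\cup B$ many times) by $Y,Z$ with $\E[v(Y)+v(Z)]\approx v(X)$ and $\E[v(Y)-v(Z)]\ge k\log k$. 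Treating such sets as marbles with value $\E[v(X)]/|X|\in[0,1]$, this is exactly a split in an abstract marble-splitting game with parameter $R=\sqrt k$; a separate potential-function argument shows the adversary can drive some marble's value above $1$ in $O(R^3)$ splits, a contradiction. Taking $k=m^{\Theta(1)}$ and amplifying over $\poly(m)$ independent repetitions gives the theorem.
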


The section splits the proof of Theorem \ref{thm:generallower} into two parts. First, in Subsection \ref{sec:marblegame}, we introduce and analyze the so-called \defn{marble-splitting game}; then, in Subsection \ref{sec:gaplower} we show how to perform a sequence of insertions/deletions that simulates an instance of the marble-splitting game and forces some bin to contain load $m/4 + m^{\Omega(1)}$ with non-negligible probability.

\subsection{The Marble-Splitting Game}\label{sec:marblegame}

In this section we present and analyze a simple game, which we call the \defn{marble-splitting game}---the game plays an important role in our lower bound for balls-and-bins games with reinsertions.

In the marble-splitting game, there are two players Alice and Bob. The player Alice has two types of moves: she can perform an \textsc{Insert} operation, which adds a new marble into the game, or she can perform a \textsc{Split($x, y$)} operation, which takes two marbles $x$ and $y$ and replaces them with new marbles $x'$ and $y'$. Alice must decide her moves at the beginning of time (so she is an oblivious adversary). 

The second player Bob gets to assign a \defn{value} $v_x$ to each marble $x$, according to the following rule: whenever Alice performs an \textsc{Insert}, Bob can assign the new marble an arbitrary real-numbered value in the range $[-1, 1]$; and whenever Alice performs a \textsc{Split($x, y$)} operation, Bob assigns $x'$ and $y'$ values $v_{x'}$ and $v_{y'}$ satisfying
\begin{align}
v_{x'} + v_{y'} = v_x + v_y \pm o(R^{-2}), \quad\text{and} \quad 
v_{x'} - v_{y'} \ge 2/R.
\label{eq:splitval}
\end{align} 
Equivalently, $v_{x'} = (v_x + v_y)/2 + \Delta \pm o(R^{-2})$ and $v_{y'} = (v_x + v_y)/2 - \Delta \pm o(R^{-2})$ for some $\Delta \ge 1 / R$. 

Alice's goal is to force some marble (she need not know which one) to have a value greater than $1$ at some point within the first $O(R^3)$ steps of the game. Her disadvantage is that she does not know the precise values of marbles. Intuitively, she would like to perform split operations on marbles $x$ and $y$ that satisfy $|v(x) - v(y)| = o(1/R)$. But she might, for example, accidentally split two marbles $x$ and $y$ whose values differ considerably---this would result in $x'$ and $y'$ having values that are \emph{closer together} than $x$ and $y$ had, which is intuitively counterproductive for Alice. We shall see that, nonetheless, Alice can deterministically force a win within $O(R^3)$ steps.

In constructing Alice's strategy, we will find it helpful for accounting purposes to artificially place the following additional constraints on Alice. We think of there as being {\em bags} $0, 1, 2, \ldots$, each of which is capable of holding arbitrarily many marbles. Whenever a marble is inserted, we place it in bag $1$. Whenever a split \textsc{Split($x, y$)} operation is performed, we require that the marbles $x$ and $y$ are currently in the \emph{same} bag $i \ge 1$ as each another, and after the split, we place the new marbles $x'$ and $y'$ into bags $i + 1$ and $i - 1$, respectively. This restriction somewhat limits Alice's possible strategies, but, as we shall see, it also simplifies the task of analyzing Alice's ``progress'' over time. 

The key result of this section is the following.
\begin{proposition}
Alice can deterministically force some marble to have a value greater than $ 1 $ at some point within the first $ O (R^3)$ steps of the game. Moreover, the strategy performs only $O(R^2)$ insertions.
\label{prop:marbles}
\end{proposition}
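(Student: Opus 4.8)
The plan is to have Alice play a potential-function argument based on the bag structure. Define a potential $\Phi$ that measures, roughly, the total ``spread'' of marbles across bags---something like $\Phi = \sum_x \beta^{-\mathrm{bag}(x)}$ for a suitable constant $\beta>1$, or more robustly a weighted count that charges a marble in bag $i$ an amount that decreases geometrically in $i$. The point of the bag constraint is that a \textsc{Split} always moves one marble up a bag and one marble down, so it is exactly the kind of operation for which such a geometric potential changes in a controlled, one-directional way; and an \textsc{Insert} injects a fixed amount of potential into bag~$1$. First I would argue that, as long as Alice never manages to produce a marble of value $>1$, the values of all marbles are confined to a bounded window, and therefore every \textsc{Split} makes genuine ``progress'': the condition $v_{x'}-v_{y'}\ge 2/R$ together with near-conservation of the sum means the pair's values are pushed apart by $\ge 1/R$ on each side of the mean, so over many splits the value of whichever marble keeps drifting in one direction must escape $[-1,1]$ unless Alice's splits are ``wasted'' on pairs that are already far apart.

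The key combinatorial idea is that Alice can maintain the invariant that, within any single bag, all marbles present have \emph{nearly equal} values---specifically within $o(1/R)$ of each other. She maintains this greedily: she only ever splits two marbles sitting in the same bag, and because (by the invariant) they are within $o(1/R)$, the split genuinely separates them by $\Theta(1/R)$, sending the larger-valued one up to bag $i+1$ and the smaller-valued one down to bag $i-1$. The crucial point is that this \emph{preserves} the near-equality invariant in bags $i\pm 1$: the marble arriving in bag $i+1$ has value $(v_x+v_y)/2 + \Delta \pm o(R^{-2})$, and one shows inductively that all marbles that ever reach bag $i+1$ have values agreeing to within $o(1/R)$, because bag-$j$ values form an increasing-in-$j$ staircase with steps of size $\Theta(1/R)$ that Alice controls. (If the invariant ever looks like it might fail, Alice simply doesn't perform that split---but I'd show it never fails if she processes bags in the right order and inserts fresh marbles to keep each bag populated by pairs.) Under this invariant, the value of a marble in bag $j$ is at least roughly $j/R$ above the base value of bag $1$, so reaching bag $\Theta(R)$ forces a value exceeding~$1$.

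Concretely, the strategy is: repeatedly insert two fresh marbles into bag~$1$ and split them, pushing one marble to bag~$2$; once bag~$2$ has two marbles, split them to push one to bag~$3$; and so on, always feeding splits from the lowest bag that has a pair, so that marbles cascade upward. Each marble that climbs from bag~$1$ to bag~$k$ costs $O(k)$ split operations and the insertions needed to supply its partners; getting the first marble all the way to bag $k = \Theta(R)$ (where its value must exceed $1$) therefore costs $O(R^2)$ insertions and $O(R^3)$ total operations, matching the claimed bounds. I would make the $o(R^{-2})$ error terms harmless by noting that across the $O(R^3)$ steps the accumulated error is $O(R^3)\cdot o(R^{-2}) = o(R)$, which after rescaling by the $1/R$ step size is a lower-order perturbation to the staircase; one sets constants so the staircase step is, say, $2/R$ while the per-step error is $o(R^{-2})$, giving slack.

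The main obstacle I anticipate is rigorously handling the adversarial freedom Bob has \emph{and} the possibility that Alice is forced to split marbles that are not within $o(1/R)$: I must show that Alice's ``process bags bottom-up, keep bags populated'' discipline genuinely maintains the same-bag near-equality invariant no matter how Bob chooses the $\Delta$'s (which may vary per split, subject only to $\Delta\ge 1/R$). The subtlety is that larger $\Delta$ helps Alice reach high value faster but could, if different splits into the same bag use wildly different $\Delta$'s, break near-equality within that bag; the fix is to observe that a marble in bag $j$ always has value $\ge$ (something determined only by how many ``up-moves'' it experienced), that every marble in bag $j$ has experienced exactly $j-1$ net up-moves minus down-moves along a path Alice controls, and to track not near-equality of \emph{values} but the weaker and self-correcting fact that the \emph{minimum} value in bag $j$ grows by $\ge 1/R$ per bag---which is all that's needed, since a down-move of a high-$\Delta$ marble only ever \emph{raises} the floor of the bag below. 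I'd organize the final argument around the monotone quantity $\min\{v_x : x \text{ in bag } j\}$ as a function of $j$, showing it is $\ge (j-1)/R - o(1)$ by induction on the order of operations, and conclude once some marble reaches bag $\lceil R\rceil + 1$.
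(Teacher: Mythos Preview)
Your strategy---cascade marbles upward through the bags by repeatedly inserting pairs and splitting from the lowest occupied bag---is essentially the same as the paper's. The gap is in the analysis: the per-bag minimum invariant you propose does not survive Bob's freedom to choose $\Delta$ arbitrarily large.

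Concretely, suppose both marbles in bag $i+1$ have value close to $m_{i+1}$. On the split, Bob may set $\Delta$ so that $v_{x'}$ lands just below $1$ (so Alice has not yet won). Then
\[
v_{y'} \;=\; (v_x+v_y) - v_{x'} \pm o(R^{-2}) \;\approx\; 2m_{i+1} - 1,
\]
and this marble drops into bag $i$. For any $m_{i+1}<1$ this is \emph{strictly below} $m_{i+1}-1/R$, and for $m_{i+1}$ near $-1$ it is near $-3$; the floor of bag $i$ is lowered, not raised. Iterating, Bob can push arbitrarily negative values down into low bags, so the claim ``$\min\{v_x : x\in\text{bag }j\}\ge (j-1)/R - o(1)$'' is false in general. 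Your sentence ``a down-move of a high-$\Delta$ marble only ever raises the floor of the bag below'' has the sign backwards. Likewise, ``a marble in bag $j$ has value determined by its number of up-moves'' is not true: the value after a split is $(\text{average})\pm\Delta$, and $\Delta$ is Bob's choice, so the lineage does not pin down the value.

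The missing idea is to track a potential that is \emph{insensitive} to Bob's choice of $\Delta$. The paper uses
\[
\phi \;=\; \sum_{i\ge 0}\; i \cdot \sum_{x\in B_i} v_x .
\]
A split in bag $i$ changes $\phi$ by $(i+1)v_{x'}+(i-1)v_{y'} - i(v_x+v_y) = (v_{x'}-v_{y'}) \pm i\cdot o(R^{-2}) \ge 1/R$, \emph{regardless of $\Delta$}; an insertion decreases $\phi$ by at most $1$. With $\Theta(R^2)$ insertions and $\Theta(R^3)$ splits in $O(R)$ bags, $\phi$ ends at $\Omega(R^2)$; since at most one marble sits in each positive bag, some bag $i\le O(R)$ has $i\cdot v_x \ge \Omega(R)$, forcing $v_x>1$. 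This sidesteps exactly the difficulty your invariant runs into.
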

\begin{proof}
We begin by describing Alice's strategy. Let $ c $ be a large positive constant. She initially performs one insertion into bag $1$. She then proceeds in $c R$ phases, where at the beginning of phase $i \in \{1, 2, \ldots, cR\}$, the state of the system is as follows: bag 0 contains some arbitrary number of marbles; bags $1, 2, \ldots, i$ each contain one marble; and bags $i + 1, i + 2, \ldots$ are empty. 

The $i$-th phase consists of $(i + 1)$ sub-phases, where at the beginning of each subphase $j \in \{1, 2, \ldots, i + 1\}$, the state of the system is as follows: bag 0 contains some arbitrary number of marbles;  
and, with the exception of bag $i - j + 2$, which is empty, all of bags $2, 3, 4, \ldots, i + 1$ contain one marble (so bags $1, 2, 3, 4, \ldots, i - j + 1$ each contain one marble; bag $i - j + 2$ is empty; and bags $i - j + 3, \ldots, i + 1$ each contain one marble).

The $(i + 1)$-th subphase is special in that, all Alice does is perform one more insertion in order to reach the starting state for phase $i + 1$ (i.e., all of bags $1, 2, \ldots, i + 1$ contain 1 marble). 

For $j < i + 1$, the $j$-th subphase of phase $i$ is implemented as follows. Alice inserts one marble into bag 1. She then performs splits, one after another, on bags $1, 2, 3, \ldots, i - j + 1$. For each $t \in \{1, 2, \ldots, i - j\}$ (i.e., for every split but the final split), after she performs a split on bag $t$, the state of the system is that: bags $1, 2, \ldots, t - 1$ contain one marble each; bag $t$ is empty; bag $t + 1$ contains $2$ marbles; and bags $t + 2, t + 3, \ldots$ are as they were at the beginning of the subphase.  The final split that Alice performs (i.e., the split in $i - j + 1$) has the effect of placing a marble into the previously empty bags $i - j$ and $i - j + 2$, and leaving bag $i - j + 1$ as the solitary empty bag out of bags $0, 1, 2, \ldots, i + 1$. Thus we reach the starting state for the $(j + 1)$-th subphase. 

\paragraph{Analysis of the strategy.} 
The analysis will need only the following basic facts about Alice's strategy:
(1) it performs a total of $O(c^2R^2)$ \textsc{Insert} operations and $\Omega(c^3R^3)$ \textsc{Split} operations; (2) it only places marbles in bags $i \le cR + 1$; and (3)  at the end of the game, there is at most 1 marble in each bag $i$ for $i>0$.

Let $B_i$ denote the marbles in bag $i$ at any given moment, and define the potential function
\[\phi = \sum_{i = 0}^\infty i \cdot \sum_{x \in B_i} v_x.\]
We will prove the proposition by analyzing how $\phi$ evolves over time. 

Each time that an \textsc{Insert} is performed, $\phi$ may decrease by up to $1$, as $v_x \in [-1,1]$ and the marble is inserted in bag 1. During the entire game, this leads to a decrease of at most 
$O(c^2R^2)$.

Each time that a \textsc{Split} is performed, two marbles $x$ and $y$ in some bag $i$ are replaced by $x'$ and $y'$ with values given by \eqref{eq:splitval}.
Removing $x$ and $y$ decreases $\phi$ by $i \cdot (v_x + v_y)$ and inserting $x'$ and $y'$ increases $\phi$ by
\[(i+1) v_{x'} + (i-1) v_{y'} = i \cdot (v_x + v_y) + (v_{x'}-v_{y'}) \pm o(i R^{-2}) \geq  i \cdot (v_x + v_y) + 1/R.\]
The net effect of a split is therefore to increase $\phi$ by at least $1 /R$. As there are $\Omega(c^3 R^3)$ split operations across the entire game, this increases $\phi$ by 
$\Omega(c^3 R^2)$.

Combining the bounds for \textsc{Insert} and \textsc{Split} operations, we have that, at the end of the game,
\[\phi \ge \Omega(c^3 R^2) - O(c^2R^2) = \Omega(c^3 R^2).\]
But this means that some bin $i \le cR + 1$ must satisfy
$i \cdot  \sum_{x \in B_i} v_x > \Omega(c^2 R)$,
and thus that
$\sum_{x \in B_i} v_x > \Omega(c)$.
As $|B_i| \le 1$, this implies that there is a ball $x$ with $v_x > 1$.
\end{proof}
\noindent{\bf Remark.} It is worth noting that, in the strategy in Proposition \ref{prop:marbles}, we could have alternatively performed all of the insertions into bag 1 up front (i.e., at the beginning of the game), and then applied the appropriate \textsc{Split} operations without performing any further insertions---each marble would simply remain in bag $1$ until it was used for the \textsc{Split} operations involving it. This perspective will be convenient in our application of marble splitting. 

\subsection{Proof of Theorem \ref{thm:generallower}}\label{sec:gaplower}

We will now derive a sequence of insertions/deletions that can be used to establish Theorem \ref{thm:generallower}.

As notation, let $Q =  \{(i, j) \mid i, j \in [4], i \neq j\}$, and let $h$ be a fully independent hash function mapping each ball $x$ to a uniformly random pair $h(x) = (h_1(x), h_2(x)) \in Q$. Notice that $|Q|=12$.

insertion
\begin{align}
 \label{eq:E1}
 & |    \{x \in A \mid h(x) = (1, 2)\}| = k / 12 + t \pm O(\sqrt k) \\ 
 \text{and} \quad 
&  |\{x \in A \mid h(x) = (3, 4)\}| = k / 12 - t \pm O(\sqrt k). \label{eq:E2}
 \end{align} 
%
Note that $\mathcal{E}$ only depends on the hash values for balls in $A$.

We will show that, if we condition on $\mathcal{E}$ occurring, and if $t$ is moderately large (i.e., $c\sqrt{k}$ for some sufficiently large positive constant $c$), then we can perform a sequence of insertions/deletions that make use of the sets $A$ and $B$ in order to defeat any ID-oblivious insertion strategy. 
While $\mathcal{E}$ only has a small constant probability of occurring, this can be amplified by repeating the strategy multiple times.

As a final but crucial piece of notation, for any set $S$ of balls present in the system, define the \defn{value} $v(S)$ to be the number of balls $x \in S$ that reside in bins $1, 2$. The ultimate structure of our analysis will be to show that, if an ID-oblivious algorithm guarantees a maximum load of $m/4 + m^{o(1)}$ (with high probability), then we can construct a set $S$ for which we can derive the clearly false assertion that $\E[v(S)] > |S|$.

\subsubsection{Some basic gadgets}
We will now prove a series of lemmas showing how to construct a malicious sequence of insertions/deletions using the sets $A$ and $B$ (and conditioned on $\mathcal{E}$). We begin by observing what happens if we simply insert the elements $A \cup B$ in a random order.

\begin{lemma}
\label{lem:insertAB}
Consider a balls-and-bins game with $4$ bins, starting from an arbitrary state. Suppose balls are allocated to bins using an arbitrary ID-oblivious insertion strategy that has already been shown the sets $A, B$ (i.e., the algorithm can depend on the multisets $\{h(x) \mid x \in A\}$ and $\{h(x) \mid x \in B\}$). Condition on event $\mathcal{E}$, and suppose that we insert the balls $A \cup B$  in a random order. Then, after the insertions are completed, we have 
\[\E[v(A) - v(B)] \ge t - O(\sqrt{k}).\]
\end{lemma}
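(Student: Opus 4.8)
The plan is to split $v(A)-v(B)$ into a part pinned down by $\mathcal{E}$ and a part the strategy controls, then show the latter is negligible. Call a ball $x$ \emph{aligned} if both coordinates of $h(x)$ lie in $\{1,2\}$, \emph{anti-aligned} if both lie in $\{3,4\}$, and \emph{straddling} otherwise; of the $12$ pairs in $Q$, $2$ are aligned, $2$ anti-aligned, and $8$ straddling. An aligned ball is forced into a bin of $\{1,2\}$ and so always contributes $1$ to $v(\cdot)$; an anti-aligned ball always contributes $0$; only straddling balls leave the strategy a choice. So I would write $v(A)=a(A)+\rho(A)$, where $a(A)=|\{x\in A: x\text{ aligned}\}|$ and $\rho(A)=\sum_{x\in A,\ x\text{ straddling}}\mathbf{1}[x\text{ placed in }\{1,2\}]$, and likewise for $B$. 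The aligned term is easy: by $\mathcal{E}$ the $(1,2)$-count of $A$ is $k/12+t\pm O(\sqrt{k})$, while the $(2,1)$-count of $A$ (which $\mathcal{E}$ affects only through the multinomial constraint on $A$) and both such counts of $B$ are $k/12\pm O(\sqrt{k})$ in expectation, using that $\mathcal{E}$ has probability $\Omega(1)$ and involves only $A$. Hence $\E[a(A)-a(B)\mid\mathcal{E}]=t\pm O(\sqrt{k})$, and it remains to show $\E[\rho(A)-\rho(B)\mid\mathcal{E}]=\pm O(\sqrt{k})$.

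The core of the proof is a symmetry step: once the insertion order is uniformly random, an ID-oblivious strategy cannot tell an $A$-ball from a $B$-ball. I would condition on the entire combined hash multiset, i.e.\ on all counts $n^A_h=|\{x\in A: h(x)=h\}|$ and $n^B_h=|\{x\in B: h(x)=h\}|$. For any ball $x$ with $h(x)=h$, the triple (position of $x$ in the order, ordered list of hashes inserted before $x$, strategy's internal coins) has a distribution depending only on $h$ and the combined multiset: the position of $x$ is uniform among all inserted balls, and given the position the preceding hash sequence is a uniform ordered sample without replacement from the combined multiset with one copy of $h$ removed. None of this refers to whether $x\in A$ or $x\in B$, and the strategy's decision for $x$ is a function only of $h$, of the state that those preceding hashes and coins induce, and of fresh coins. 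Therefore $\Pr[x\text{ placed in }\{1,2\}\mid\text{counts}]=g(h)$ for a function $g$ of $h$ alone. By linearity, $\E[\rho(A)-\rho(B)\mid\text{counts}]=\sum_{h\text{ straddling}}(n^A_h-n^B_h)\,g(h)$, whose absolute value is at most $\sum_{h\text{ straddling}}|n^A_h-n^B_h|$ since $0\le g\le 1$.

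To finish I would take expectations over the counts conditioned on $\mathcal{E}$. For each of the $8=O(1)$ straddling $h$, both $n^A_h$ (conditioned on $\mathcal{E}$, again exploiting that $\mathcal{E}$ is a probability-$\Omega(1)$ event linked to the straddling counts only through the multinomial constraint) and $n^B_h$ (unconditionally, as $\mathcal{E}$ involves only $A$) are concentrated around $k/12$ with expected deviation $O(\sqrt{k})$; so $\E[\,|n^A_h-n^B_h|\,\mid\mathcal{E}]=O(\sqrt{k})$, and summing gives $\E[\rho(A)-\rho(B)\mid\mathcal{E}]=\pm O(\sqrt{k})$. Combined with the aligned estimate this yields $\E[v(A)-v(B)\mid\mathcal{E}]\ge t-O(\sqrt{k})$.

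I expect the symmetry step to be the main obstacle: carefully arguing that randomizing the insertion order blinds an ID-oblivious strategy to which balls carry $A$'s bias. This is exactly the phenomenon underlying the whole reinsertion/deletion impossibility result, and everything after it is routine binomial bookkeeping, the only real care being that conditioning on $\mathcal{E}$ is harmless because $\Pr[\mathcal{E}]=\Omega(1)$.
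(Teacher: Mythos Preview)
Your proposal is correct and essentially identical to the paper's proof: both condition on the hash-count profile, use the random insertion order to argue that an ID-oblivious strategy assigns the same marginal placement probability to every ball with a given hash (your $g(h)$ is exactly the paper's $\E[p_{i,j}\mid\text{counts}]$), and then bound the non-forced contribution by $\sum_h \E[\,|n^A_h-n^B_h|\,\mid\mathcal{E}]=O(\sqrt{k})$ via the same multinomial/Chernoff bookkeeping. The only cosmetic difference is your aligned/anti-aligned/straddling grouping versus the paper's treatment of each of the twelve pairs separately, singling out only $(1,2)$ and $(3,4)$.
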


The intuition behind Lemma \ref{lem:insertAB} is quite simple. For $(i, j) \in Q$, define $A_{i, j}$ (resp. $B_{i, j}$) to be the set of balls in $A$ (resp. $B$) that hash to the bin pair $(i, j)$. Due to event $\mathcal{E}$, we have that $\E[|A_{1, 2}| - |B_{1, 2}|] \ge t - O(\sqrt{k})$, so this immediately gives $A$ an extra $t - O(\sqrt{k})$ balls (in expectation) in bins $1, 2$ that $B$ doesn't get. On the other hand, for each $(i, j) \in Q \setminus \{(1, 2), (3, 4)\}$, we expect the number of balls from $A_{i, j}$ that are in bins $1, 2$ to be roughly the same as the number of balls from $B_{i, j}$ that are in bins $1, 2$, hence the conclusion of the lemma. Formalizing this argument requires some  care as the algorithm can try to distinguish the balls in $A$ from those in $B$ based on the differences between $|A_{i, j}|$ and $|B_{i, j}|$, for $(i, j) \in Q$. Thus we defer the full proof of the lemma to Appendix \ref{app:insertAB}.

Our next lemma makes a simple observation about what happens when we remove a set $X$ of balls and replace it with a set $X'$ of balls, in a balls-and-bins game that is at capacity (i.e., contains $m$ balls). 

\begin{lemma}
\label{lem:swapout}
Consider a balls-and-bins game with 4 bins, starting with $m$ balls in the system. Let $X$ be a set of $r$ balls that are present. Suppose that we delete the balls $X$, and then insert new balls $X'$, where $|X'| = r$. Then one of the following events must occur:
\begin{itemize}
\item there is some point in time at which some bin contains $m / 4 + \omega(\sqrt{k})$ balls;
\item or, $|v(X) - v(X')| = O(\sqrt{k})$. 
\end{itemize}
\end{lemma}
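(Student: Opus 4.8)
The plan is to track the total number of balls in bins $1,2$ throughout the swap, and use the near-capacity constraint to convert changes in $v$ into changes in the load of individual bins. Let $L_{12}$ denote the number of balls in bins $1,2$ and $L_{34}$ the number in bins $3,4$, so that $L_{12}+L_{34}$ equals the current ball count. Just before the deletions we have $m$ balls present, and by definition $v(X)$ of the deleted balls sit in bins $1,2$. So after deleting $X$, the load in bins $1,2$ is (old $L_{12}$) $-\,v(X)$, and the total ball count is $m-r$. After inserting $X'$, exactly $v(X')$ of the new balls land in bins $1,2$ (this is the value of $X'$ \emph{after} the insertions, which is well-defined since $v$ is measured on the current state), so the final load in bins $1,2$ is (old $L_{12}$) $-\,v(X)+v(X')$, and the ball count is back to $m$.

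First I would suppose, for contradiction, that the first bullet fails, i.e.\ at every point in time during this process every bin has load at most $m/4 + o(\sqrt k)$ — more precisely, at most $m/4 + c\sqrt k$ for an adjustable constant absorbed into the $O(\sqrt k)$ in the conclusion; this in particular applies both at the start (with $m$ balls) and at the end (again with $m$ balls). At the start, since all four bins have load at most $m/4 + O(\sqrt k)$ and they sum to $m$, each bin also has load \emph{at least} $m/4 - O(\sqrt k)$; hence $L_{12} \in [m/2 - O(\sqrt k),\, m/2 + O(\sqrt k)]$. The identical argument at the end gives that the final $L_{12} = $ (old $L_{12}$) $-\,v(X)+v(X')$ also lies in $[m/2-O(\sqrt k), m/2+O(\sqrt k)]$. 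Subtracting, $|v(X')-v(X)| = |(\text{final }L_{12}) - (\text{old }L_{12})| \le O(\sqrt k)$, which is exactly the second bullet. So one of the two events must hold.

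The only subtlety — and the step I would be most careful about — is making sure the two quantities being subtracted really are the load in bins $1,2$ at two genuine "at capacity" moments governed by the same adversary run, so that the lower bounds "each bin $\ge m/4 - O(\sqrt k)$" are valid at both endpoints; this is immediate here because deletions and insertions don't change which bins the \emph{untouched} balls occupy, and we have assumed the no-overflow event holds at all times, in particular at the two moments when exactly $m$ balls are present. A second minor point is that the constant hidden in "$m/4 + \omega(\sqrt k)$" in the first bullet is what lets us choose the threshold $c\sqrt k$ large enough that its negation gives clean $O(\sqrt k)$ bounds on each bin's deviation; this is purely bookkeeping. No concentration inequalities or probabilistic reasoning are needed — the lemma is a deterministic pigeonhole-style statement about loads summing to $m$.
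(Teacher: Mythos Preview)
Your proof is correct and follows essentially the same approach as the paper: assume no bin ever exceeds $m/4 + O(\sqrt{k})$, deduce that at both full-capacity moments the load in bins $1,2$ is $m/2 \pm O(\sqrt{k})$, and observe that the net change in that load is exactly $v(X') - v(X)$. The paper's version is terser but the argument is identical.
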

\begin{proof}
Suppose that they are never more than $m / 4 + \Omega(\sqrt{k})$ balls in any given bin. This means that, whenever there are $m$ balls in the system, the number of balls in bins $1, 2$ must be within $O(\sqrt{k})$ of $m/2$. 

When we remove balls $X$, we decrease the number of balls in bins $1, 2$ by $v(X)$. When we insert balls $X'$, we increase the number of balls and bins $1, 2$ by $v(X')$. In total, we must change the load of bins $1, 2$ by $O(\sqrt{k})$, meaning that $|v(X) - v(X')| = O(\sqrt{k})$.
\end{proof}

\paragraph{Gadget for splitting.}
By combining the previous two lemmas in the right way, we can construct a sequence for {\em splitting} a set $X$ of size $\poly(k)$ into two sets $Y$ and $Z$ such that $v(Y) + v(Z) = (1 \pm o(k^{-1})) v(X)$ and $\E[v(Y) - v(Z)] \ge \Omega(|X| / \sqrt{k})$. 

\begin{lemma}[Splitting gadget]
\label{lem:splittingballs}
Consider a balls-and-bins game with $4$ bins, starting from an arbitrary state with $m$ balls, and where balls are allocated to bins using an ID-oblivious insertion strategy that, as in Lemma \ref{lem:insertAB}, has already been shown the sets $A, B$, and that keeps the load of each bin below $m/4 + O(\sqrt{k})$ w.h.p. in $m$. Finally, condition on event $\mathcal{E}$ with $t = c \sqrt{k}$ for some sufficiently large constant $c>0$.

Let $X$ be a set of $q = k^{1.5} \log k$ balls that are currently present in the system. There exists a sequence of $\poly(k)$ insertions/deletions that (without ever placing more than $m$ balls in the system at a time) replaces $X$ with $q/2$-element sets $Y, Z$ satisfying
\begin{equation} \label{eq:sp1}
\E[v(Y) - v(Z)] \ge k \log k,
\end{equation}
and satisfying
\begin{equation}\label{eq:sp2}
\E[v(Y) + v(Z)] = v(X) \pm O(\sqrt{k}).
\end{equation}
\end{lemma}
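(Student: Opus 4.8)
The plan is to construct $Y$ and $Z$ incrementally over $N := q/(2k)$ rounds (we may assume $2k$ divides $q$; this is immaterial), routing in each round a fresh copy of the biased pair $(A,B)$ into $k$-element chunks of $Y$ and $Z$ respectively, and relying on reinsertions to reuse $A$ and $B$ across rounds. Concretely, partition $X$ into chunks $X_1,\dots,X_N$ of size $2k$, and in round $i$ perform: (1) delete the chunk $X_i$; (2) insert the balls of $A\cup B$ in a fresh, independent, uniformly random order; (3) delete $A$ and insert a fresh set $Y_i$ of $k$ brand-new balls; (4) delete $B$ and insert a fresh set $Z_i$ of $k$ brand-new balls. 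Throughout round $i$ the number of balls present oscillates among $m, m-2k, m, m-k, m, m-k, m$, so it never exceeds $m$; the sets $A,B$ are available again at the start of round $i{+}1$; and the whole construction uses $O(k)$ operations per round, hence $\poly(k)$ in total. At the end we output $Y := Y_1\cup\cdots\cup Y_N$ and $Z := Z_1\cup\cdots\cup Z_N$, each of size $Nk = q/2$, which have replaced $X$.

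For the analysis, condition on the high-probability event that no bin ever exceeds $m/4 + O(\sqrt k)$ during the $\poly(k)$ operations (this follows from the algorithm's guarantee and a union bound; further conditioning on $\mathcal{E}$, whose probability is $\Omega(1)$, keeps the failure probability $1/\poly(m)$). On this event, whenever the system is at capacity $m$ the number of balls in bins $1$ and $2$ is $m/2 \pm O(\sqrt k)$, since each bin then holds $m/4 \pm O(\sqrt k)$ balls. Applying Lemma~\ref{lem:insertAB} in step (2) of round $i$ (with the configuration after step (1) as the ``arbitrary state'', and our algorithm as the ``ID-oblivious strategy already shown $A,B$'') gives $\E[v(A)-v(B)] \ge t - O(\sqrt k)$ right after those insertions; applying Lemma~\ref{lem:swapout} to the size-preserving swaps in steps (3) and (4) (whose overload alternative is excluded by the conditioning) gives $v(Y_i) = v(A) \pm O(\sqrt k)$ and $v(Z_i) = v(B) \pm O(\sqrt k)$. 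Hence $\E[v(Y_i)-v(Z_i)] \ge t - O(\sqrt k) = (c - O(1))\sqrt k$ up to the negligible contribution of the failure event, and by linearity $\E[v(Y)-v(Z)] = \sum_{i=1}^N \E[v(Y_i)-v(Z_i)] \ge N(c-O(1))\sqrt k$. Since $N\sqrt k = q/(2\sqrt k) = \tfrac12 k\log k$, choosing $c$ a sufficiently large constant makes this at least $k\log k$, proving \eqref{eq:sp1}.

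For \eqref{eq:sp2}, observe that every ball we ever delete lies in $X\cup A\cup B$ and every ball we ever insert lies in $A\cup B\cup Y\cup Z$, so the set $W$ of balls present at the start other than $X$ is never touched; since balls never move once placed, $v(W)$ is a fixed constant throughout. Comparing the initial configuration (where $X\cup W$ is at capacity $m$, so $v(X) + v(W) = m/2 \pm O(\sqrt k)$) with the final configuration (where $Y\cup Z\cup W$ is at capacity $m$, so $v(Y)+v(Z)+v(W) = m/2 \pm O(\sqrt k)$) yields $v(Y)+v(Z) = v(X) \pm O(\sqrt k)$ deterministically on the conditioned event. Since $v(Y),v(Z)\le q \le \poly(m)$ and the conditioned event fails with probability only $1/\poly(m)$, taking expectations gives $\E[v(Y)+v(Z)] = v(X) \pm O(\sqrt k)$.

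The main obstacle is the tension between signal and accumulated noise: each round contributes only $\Theta(\sqrt k)$ of signal to $\E[v(Y)-v(Z)]$, but also $O(\sqrt k)$ of noise from the two swap-outs (Lemma~\ref{lem:swapout}), and over the $N = \Theta(\sqrt k\log k)$ rounds this noise totals $\Theta(k\log k)$ --- the same order as the target. This is exactly why $t$ is taken to be $c\sqrt k$ with $c$ a large constant: the per-round signal $t - O(\sqrt k)$ must strictly dominate the per-round noise. The other point requiring care is that the bound on $v(Y)+v(Z)$ must \emph{not} be obtained round-by-round (that would accumulate error); instead it follows from the global bins-$\{1,2\}$ bookkeeping above, comparing only the first and last configurations and exploiting that the untouched balls $W$ contribute a constant.
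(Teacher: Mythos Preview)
Your proposal is correct and follows essentially the same construction and argument as the paper: partition $X$ into $q/(2k)$ pairs of $k$-chunks, in each round delete one such pair, insert $A\cup B$ in random order, then swap out $A$ for a fresh $Y_i$ and $B$ for a fresh $Z_i$, using Lemma~\ref{lem:insertAB} and Lemma~\ref{lem:swapout} exactly as you do. For \eqref{eq:sp2}, the paper simply invokes Lemma~\ref{lem:swapout} globally with $X' = Y\cup Z$, which is the same reasoning as your explicit $v(W)$-is-constant bookkeeping (both amount to comparing the bins-$\{1,2\}$ count at the two moments of capacity $m$).
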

\begin{proof}
Roughly speaking, the goal is to transfer the {\em imbalance} between the sets $A$ and $B$ (in how they allocate balls to bins 1,2 vs.~3,4) to the set $X$, so that the resulting sets $Y$ and $Z$ have similar {\em relative} imbalance to what $A$ and $B$ have. Of course, $A$ and $B$ have size $k$ each, while $X$ has size $q= k^{1.5}\log k$, so the imbalance between $A$ and $B$ needs to be amplified in order to get the same relative imbalance between $Y$ and $Z$. As we shall see, this is where we crucially make use of the ability to delete and \emph{reinsert} $A \cup B$ multiple times.\footnote{The other place where we make use of reinsertions is that, ultimately, we will apply Lemma \ref{lem:splittingballs} multiple times, and we will continue to reuse $A$ and $B$ across those multiple applications.}

Let us partition $ X $ into sets $X_1, X_2, \ldots, X_{q / k}$  of size $k$ each. For each $i\in [q/2k]$, we will replace  $X_{2i-1}$ by a new set $Y_i$ and $X_{2i}$ by a new set $Z_i$, in such a way that the relative imbalance between $Y_i$ and $Z_i$ is similar to that between $A$ and $B$. 
This is accomplished by performing the following sequence of insertions and deletions:
\begin{enumerate}
\item Delete the balls $X_{2i - 1} \cup X_{2i}$.
\item Insert the balls $A \cup B$ in a random order.
\item Delete the balls of $A$, and replace them with a set $Y_{i}$ of $k$ elements.
\item Delete the balls of $B$, and replace them with a set $Z_{i}$ of $k$ elements.
\end{enumerate}
By Lemma \ref{lem:insertAB}, we have after Step (2) that
\[\E[v(A) - v(B)] \ge t - O(\sqrt{k}).\]
By Lemma \ref{lem:swapout} (and since the insertion strategy keeps bin loads of $m/4 + O(\sqrt{k})$ with high probability in $m$), we then have that $\E[v(Y_i)]$ and $\E[v(Z_i)]$ are within $O(\sqrt{k})$ of $\E[v(A)]$ and $\E[v(B)]$, respectively. Thus
\[\E[v(Y_i) - v(Z_i)] \ge t - O(\sqrt{k}) \ge  2\sqrt{k},\]
where the final inequality uses the fact that $t = c \sqrt{k}$ for a sufficiently large positive constant $c$.

Summing over $i \in \{1, 2, \ldots, q / (2k)\}$, and denoting $Y = \cup_i Y_i$ and $Z = \cup_i Z_i$, we get the claimed bound
\[\E[v(Y) - v(Z)] = \sum_i \E[v(Y_i) - v(Z_i)] \ge k \log k.\]
Next, applying Lemma \ref{lem:swapout} with $X'= Y \cup Z$, we have that either
\[v(Y) + v(Z) = v(X) \pm O(\sqrt{k}),\]
or that there is some point in time at which a bin has load $m/4 + \omega(\sqrt{k})$. Since the latter event is assumed to occur with probability at most $1 / \poly(m)$, this completes the proof of the lemma.
\end{proof}

\subsubsection{Connection to marble-splitting}
We are now ready to prove Theorem \ref{thm:generallower}. We begin by proving a slightly weaker version of the theorem, namely that no ID-oblivious insertion strategy can offer a high-probability guarantee of achieving overload $m^{o(1)}$. 

\begin{proposition}
Consider the reinsertion/deletion model with $4$ bins, and with a limit of up to $m$ balls present at a time. Suppose there is an ID-oblivious bin-allocation algorithm that, for the first $\poly(m)$ steps, bounds the load of each bin by $m/4 + f(m)$ with high probability in $m$. Then $f(m) = m^{\Omega(1)}$.
\label{prop:generallower}
\end{proposition}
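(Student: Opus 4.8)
The plan is to reduce the balls-and-bins problem to the marble-splitting game of Proposition~\ref{prop:marbles}, using the splitting gadget of Lemma~\ref{lem:splittingballs} to simulate Alice's \textsc{Split} operations. Suppose for contradiction that some ID-oblivious strategy keeps every bin's load below $m/4 + f(m)$ w.h.p.\ for the first $\poly(m)$ steps, and suppose $f(m) = m^{o(1)}$. Set $R$ to be a suitable small power of $m$ (large enough that $f(m) = o(R)$, i.e. so that the adversary's ``slack'' $f(m)$ is negligible on the scale $1/R$ at which marbles get split apart, but small enough that $R^3$ and the per-split cost $\poly(k)$ with $k \approx R^2$ stay within $\poly(m)$). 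Choose $k$ so that the ``imbalance'' $t = c\sqrt k$ used in event $\mathcal{E}$ scales correctly against $R$; concretely we want the per-split relative separation $\E[v(Y)-v(Z)]/|X| = \Omega(1/\sqrt k)$ to match the lower bound $2/R$ required in \eqref{eq:splitval}, so $\sqrt k \approx R$, i.e. $k \approx R^2$. Then condition on $\mathcal{E}$ (which holds with constant probability, amplified by repetition) and on the high-probability load bound holding throughout.

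Next I would set up the correspondence. A marble $x$ in the marble game corresponds to a fixed-size block $S_x$ of balls currently present in the system (all blocks of the same size $q = k^{1.5}\log k$, as in Lemma~\ref{lem:splittingballs}); the marble's value $v_x$ corresponds to the normalized value $v(S_x)/q$, the fraction of balls in $S_x$ sitting in bins $1,2$. Since $v(S_x) \in [0,q]$ deterministically, $v_x \in [0,1]$ --- after an affine shift this sits in $[-1,1]$ as the game requires (I'd track the shift or simply rescale the game's target accordingly; this is a routine cosmetic adjustment). An \textsc{Insert} of a marble is simulated by inserting $q$ fresh balls (the ID-oblivious strategy may place them anywhere, so Bob --- i.e.\ the adversary trying to \emph{lower-bound} values --- gets an adversarially chosen value in the allowed range, exactly matching the marble game where Bob picks the inserted value). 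A \textsc{Split}$(x,y)$ is simulated by taking $X = S_x \cup S_y$ (size $2q$) and invoking Lemma~\ref{lem:splittingballs} --- wait, Lemma~\ref{lem:splittingballs} splits a single set of size $q$ into two of size $q/2$; so I would instead let each marble correspond to a $q$-block and split $X = S_x \cup S_y$ of size $2q$ into $Y, Z$ of size $q$ each by an analogous gadget (or apply the lemma's argument with the block size doubled --- the proof goes through verbatim with $q$ replaced by $2q$). Lemma~\ref{lem:splittingballs} then gives $\E[v(Y)+v(Z)] = v(X) \pm O(\sqrt k) = (v_x+v_y)q \pm O(\sqrt k)$, i.e.\ $v_{x'}+v_{y'} = v_x+v_y \pm O(\sqrt k)/q = v_x+v_y \pm o(R^{-2})$ once $q = k^{1.5}\log k$ and $k\approx R^2$ (here $O(\sqrt k)/(k^{1.5}\log k) = O(1/(k\log k)) = o(R^{-2})$, so the bound is comfortable); and $\E[v(Y)-v(Z)] \ge k\log k$, so $v_{x'}-v_{y'} \ge (k\log k)/q = 1/\sqrt k \ge 2/R$ for the right choice of constants. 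These are exactly the two conditions \eqref{eq:splitval} of the marble game (the split condition holds in expectation rather than surely, which is fine --- Bob assigns values equal to the conditional expectations $\E[v(Y)]/q$ etc., and we carry expectations through linearly; the bag bookkeeping of Proposition~\ref{prop:marbles} is purely an accounting device on Alice's side and imposes no real constraint on the simulation).

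Then I would run Alice's deterministic strategy from Proposition~\ref{prop:marbles}: it uses $O(R^2)$ inserts and $O(R^3)$ splits, so the total number of balls-and-bins operations is $O(R^3)\cdot\poly(k) = \poly(m)$, and the number of balls present at any time is $O(R^2)\cdot q = \poly(m)$ --- here I need $R$ chosen small enough (a sufficiently small power of $m$) that this stays below $m$; that's where I fix the exponent of $R$. Proposition~\ref{prop:marbles} guarantees that at the end, \emph{in the value assignment that is the conditional expectation}, some marble $x$ has $v_x > 1$, i.e.\ $\E[v(S_x)] > q = |S_x|$. But $v(S_x)$ is by definition at most $|S_x|$ with probability $1$ (a set of $q$ balls can have at most $q$ of them in bins $1,2$), so $\E[v(S_x)] \le |S_x|$ always --- contradiction. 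This contradiction was derived under the assumption that the load stayed below $m/4 + f(m)$ throughout with high probability and that $\mathcal{E}$ held; since $\mathcal{E}$ holds with constant probability (boosted arbitrarily high by independent repetitions with fresh $A, B$ and fresh hash values), the only escape is that at some point some bin exceeded $m/4 + \omega(\sqrt k) = m/4 + m^{\Omega(1)}$, which is exactly the conclusion of Proposition~\ref{prop:generallower} (and, via the strengthening to high probability by repetition, of Theorem~\ref{thm:generallower}).

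The main obstacle I expect is the bookkeeping of error terms through the composition: Lemma~\ref{lem:splittingballs}'s guarantees are (i) only in expectation and (ii) carry additive $O(\sqrt k)$ slack and a $1/\poly(m)$ failure event per application, and the marble game is played for $\Theta(R^3)$ splits. I need to verify that (a) working entirely with conditional expectations $\E[v(\cdot)]$ as the marble ``values'' is legitimate --- it is, because both conditions in \eqref{eq:splitval} and the potential argument in Proposition~\ref{prop:marbles} are linear/affine in the values, so expectations propagate, and crucially the potential $\phi$ is linear in the $v_x$; (b) the $O(\sqrt k)$ additive error in each split, accumulated over $O(R^3)$ splits, stays $o(R^{-2})$ \emph{per split} (which is what \eqref{eq:splitval} demands, and which holds as computed above since the per-split relative error $O(\sqrt k)/q = o(R^{-2})$), so there is no accumulation issue at all --- each split independently satisfies the hypothesis; and (c) the $1/\poly(m)$ per-application failure probabilities union-bound to $o(1)$ over the $\poly(m)$ applications, which is immediate by choosing the polynomial in the high-probability hypothesis large enough. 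The one genuinely delicate point is matching the constant $c$ in $t = c\sqrt k$ and the constant in $R$'s exponent so that simultaneously $1/\sqrt k \ge 2/R$, $O(\sqrt k)/q = o(R^{-2})$, $R^2 q \le m$, and $R^3\poly(k)\le\poly(m)$ --- but all of these are satisfied by taking $k = \Theta(R^2)$, $q = k^{1.5}\log k$, and $R = m^{\delta}$ for a small enough constant $\delta>0$, so it is just a matter of writing the inequalities out.
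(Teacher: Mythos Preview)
Your proposal is correct and follows essentially the same route as the paper: reduce to the marble-splitting game with $R=\Theta(\sqrt{k})$, let marbles be fixed-size blocks of balls with value $\E[v(S)]/|S|$, implement \textsc{Split} via Lemma~\ref{lem:splittingballs}, and derive the contradiction $\E[v(S)]>|S|$. The only cosmetic deviation is your handling of \textsc{Insert} (inserting fresh balls rather than pre-designating all marble blocks among the initial $m$ balls, as the paper does via the Remark after Proposition~\ref{prop:marbles}); since Lemma~\ref{lem:splittingballs} requires the system to sit at exactly $m$ balls, the paper's pre-insertion is the cleaner fix, but this does not affect the argument.
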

\begin{proof}
Set $k = m^\epsilon$ for a positive constant $ \epsilon$ to be selected later in the proof, and suppose for contradiction that $f(m) = O(\sqrt{k})$.

Let $A$ and $B$ be disjoint sets of $k$ balls each. Let $c$ be a sufficiently large positive constant, and set $t = c \sqrt{k}$. Finally, let $\mathcal{E}$ be the event that \eqref{eq:E1} and \eqref{eq:E2} hold. Note that $\mathcal{E}$ occurs with probability $\Omega(1)$; for the rest of the proof, condition on $\mathcal{E}$. 

Let $X_1, X_2, \ldots, X_{ck}$ be disjoint sets of $(k^{1.5} \log k) / 2$ balls each. To begin, insert $m$ balls into the system, where those balls include $X_1, X_2, \ldots, X_{ck}$. The sets $X_1, X_2, \ldots, X_{ck}$ will act as \emph{marbles} in a marble-splitting game. There are two types of operations that we will perform in this game: an \textsc{Insert} operation, which adds one of the sets $X_1, X_2, \ldots, X_{ck}$ as a new marble in the game; and a \textsc{Split$(X, Y)$} operation, which takes two sets $X$ and $Y$ of size $(k^{1.5} \log k) / 2$ balls each, and applies Lemma \ref{lem:splittingballs} to replace them with sets $X', Y'$ (also of $(k^{1.5} \log k)/2$ balls each) satisfying
\begin{align*}
& \frac{\E[v(X')]}{|X'|} -  \frac{\E[v(Y')]}{|Y'|} \ge 2 / \sqrt{k}, \text{ \phantom{ffffffffffffffffffffffffffffffffffffffffff} (by \eqref{eq:sp1})} \\
& \frac{\E[v(X)]}{|X|} + \frac{\E[v(Y)]}{|Y|} = \frac{\E[v(X')]}{|X'|} + \frac{\E[v(Y')]}{|Y'|} \pm o(1/k). \text{ \phantom{ffffffffffff} (by \eqref{eq:sp2})}
\end{align*}

If we define $v_X := \frac{\E[v(X)]}{|X|}$ for each set $X$ of $k^{1.5}/ 2$ balls, it follows that we are playing a marble-splitting game with $R = \sqrt{k}$, and where marbles correspond to sets of $(k^{1.5} \log k) / 2$ balls. By Proposition \ref{prop:marbles}, there is an $O(R^3) = O(k^{1.5})$-step strategy that results in some marble $X$ satisfying $v_X > 1$. This is a contradiction, since $v_X$ must deterministically be in the range $[0, 1]$.

Note that the marble-splitting game requires $O(R^2) = O(k)$ marbles at a time, each of which consists of $O(k^{1.5} \log k)$ balls. Thus, the entire game uses $O(k^{2.5} \log k)$ balls, meaning that we can set $k = m^{1/{2.5} - o(1)}$. We can therefore conclude that $f(m)$ must be at least $m^{1/5 -o(1)}$.
\end{proof}

Finally, we prove Theorem \ref{thm:generallower} by applying a basic amplification argument to Proposition \ref{prop:generallower}.  

\begin{proof}[Proof of Theorem \ref{thm:generallower}]
By Proposition \ref{prop:generallower}, there exists a parameter $s \in \poly(m)$ such that, within $\poly(m)$ operations, an oblivious adversary can achieve maximum load $m/4 + m^{\Omega(1)}$ with probability $1/s$. By independently repeating this construction $\Theta(s \log n) = \poly(m)$ times, the probability of achieving a load of $m / 4 + m^{\Omega(1)}$ at some point during the sequence becomes 
$$1 - (1 - 1 / s)^{\Theta(s \log n)} = 1 - 1 / \poly(n),$$
as desired.
\end{proof}

\section{Generalizations of \textsc{ModulatedGreedy }}\label{sec:applications}

We now generalize the \textsc{ModulatedGreedy} algorithm from Section \ref{sec:upper} in  several interesting ways: 

\begin{enumerate}\setlength{\itemsep}{0pt}%
    \setlength{\parskip}{0pt}%
\item We give guarantees over an infinite time horizon, instead of $\poly(m)$ steps.
\item  We allow $m$ (the maximum number of balls present in the system) to increase with time, and only require an a-priori bound $M$ on $m$.
\item  We consider the more general $(1+\beta)$-choice and the graphical 2-choice settings 
(defined in Section \ref{sec:ext}) and extend the previous results for these settings (which were insertion-only) to also handle deletions.
\end{enumerate}



These generalizations require extending both the algorithm and the analysis techniques. We begin in Subsection \ref{sec:algoverview} by describing the algorithm and giving an overview of the key ideas; we then present the analysis and applications in Subsections \ref{sec:analysisalg} and \ref{sec:ext}.


\subsection{The  Algorithm and Overview}\label{sec:algoverview}

The algorithm, which we call \textsc{GeneralizedModulatedGreedy}, is described as Algorithm \ref{alg:generalized} below. Its key properties are summarized in the following theorem.

\begin{theorem}
\label{thm:generalizedupper}
Consider the insertion/deletion model with $n$ bins, and an arbitrarily long sequence of insertions/deletions, with no more than $M$ balls present at a time.
Suppose the parameters $n, M, \epsilon$ are known to the algorithm.
Then the \textsc{GeneralizedModulatedGreedy} algorithm satisfies the 
following guarantees:
\begin{itemize}
\item {\em Bounded Load: } At any given moment, every bin has load at most $m/n + O(\epsilon^{-1} \log M)$ with high probability in $M$,
where $m$ is the largest number of balls that were ever present so far.
\item {\em Bounded Bias: } For any given insertion, if $i$ and $j$ are the two bins being chosen between, then each bin is selected with a probability in the range $[1/2 - \epsilon, 1/2 + \epsilon]$. 
\end{itemize}
\end{theorem}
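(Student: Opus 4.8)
\textbf{Proof plan for Theorem \ref{thm:generalizedupper}.}
The plan is to mimic the coupling-to-a-stone-game strategy of Section \ref{sec:upper}, but to make it robust to two new difficulties: (a) the request sequence is infinite, so I cannot take a single union bound over all $\poly(m)$ steps; and (b) $m$ grows over time, so the ``capacity'' $Q$ that the stone game uses cannot be fixed in advance. For (a), I would break time into epochs and argue that within each epoch the stone game only needs a union bound over the number of \emph{distinct subsets} of stones that can appear in the inactive bag that are relevant to the concentration event, or --- more simply --- re-derive the concentration statement of Lemma \ref{lem:stonesbycolor} in a form that holds \emph{for all times simultaneously} via a martingale/maximal-inequality argument over the sequence of activations and deactivations (as was done in Lemma \ref{lem:nofailure1}), paying only a polynomial-in-$M$ factor. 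The \textsc{GeneralizedModulatedGreedy} algorithm must therefore be defined with the threshold $T$ built from $M$ (and $\epsilon$) rather than from $m$, e.g.\ $T = M/n + \Theta(\epsilon^{-1}\log M) - \overline{\ell}$, so that the bias is at most $\epsilon$ whenever the bin loads are within $O(\epsilon^{-1}\log M)$ of each other.

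The key steps, in order, would be: (1) State the algorithm: on an insertion with random bins $i,j$, set $T$ as above, assign to bin $i$ with probability $1/2 + (\ell_j-\ell_i)/(2T)$, and halt if some $|\ell_i-\ell_j| > T$. The Bounded Bias property is then immediate from the algorithm's definition, since $|\ell_j - \ell_i|/(2T) \le 1/2$ always and, when loads are within $O(\epsilon^{-1}\log M)$, one has $|\ell_j - \ell_i|/(2T) \le \epsilon$ by the choice of the $\Theta(\epsilon^{-1}\log M)$ term in $T$ --- I would make this the first, easy half of the proof. (2) Prove a closed-form selection probability exactly as in Lemma \ref{lem:modulateddist}: bin $k$ is chosen with probability $T_k/(nT)$ where $T_k = M/n + \Theta(\epsilon^{-1}\log M) - \ell_k$. (3) Couple to a $(Q,n)$-stone game with $Q = M/n + \Theta(\epsilon^{-1}\log M)$, verifying as in Lemma \ref{lem:coupling} that the activation probability $(Q-\ell_k)/(nQ - \sum_i \ell_i)$ equals $T_k/(\sum_i T_i)$; this step is unchanged because $Q$ only depends on $M$, not on the current number of present balls. (4) Bound the halting probability: halting requires $|s_i - s_j| > s/n$ in the stone game, where $s = \sum_k s_k$ is the number of inactive stones; since at most $m \le M$ stones are ever active, we always have $s \ge nQ - M = \Theta(n\epsilon^{-1}\log M) \ge cn\log(nQ)$ for a suitable constant, so Lemma \ref{lem:stonesbycolor} applies --- but I must now invoke it \emph{at every step}, which is where the martingale-over-all-times version from point (a) is needed. (5) Conclude the Bounded Load guarantee: if the algorithm never halts then deterministically $\max_k \ell_k \le \overline{\ell} + T = Q \le m/n + O(\epsilon^{-1}\log M)$ once we also observe that $\overline{\ell} \le m/n$. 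For the $m/n$ (rather than $M/n$) bound I would additionally note that, by the stone-game coupling, the number of active stones equals the number of present balls, and $\overline{\ell}$ is at most the running maximum $m$ of that quantity divided by $n$; combined with the concentration on each color this upgrades $M/n$ to $m/n$ in the final load bound.

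The main obstacle I expect is step (4)/(a): making the concentration of per-color counts hold \emph{simultaneously over an unbounded time horizon} while only losing a $\poly(M)$ factor in the failure probability. A naive union bound over infinitely many steps fails. The fix I would pursue is to observe that the relevant quantity $s_k$ (number of color-$k$ stones inactive) changes by $\pm 1$ per operation, so for any fixed window of $\poly(M)$ operations a maximal inequality controls its deviation from $s/n$; and then to argue that the state of the system (the multiset of colors in each bag, up to which specific labels, which Lemma \ref{lem:randomactive} tells us is irrelevant) effectively ``resets'' often enough --- e.g.\ whenever the active bag empties or reaches a canonical configuration --- that the whole infinite sequence decomposes into $\poly(M)$-length blocks each contributing failure probability $M^{-\omega(1)}$, or, failing that, to simply restart the analysis afresh from each such moment. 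Handling the case where no such reset ever occurs (the active bag stays large forever) requires a separate argument that large inactive-bag size \emph{by itself} gives exponentially strong per-step concentration, so that even a union bound over $\exp(\Theta(\epsilon^{-1}\log M))$-many steps is affordable; carefully chaining these two regimes is the technical heart of the proof. The remaining generalizations ($(1+\beta)$-choice, graphical 2-choice, and $m$ increasing up to $M$) I would defer to Sections \ref{sec:analysisalg} and \ref{sec:ext}, noting only that the stone-game coupling is flexible enough to absorb a $(1-\beta)$-fraction of \textsc{SingleChoice} insertions by treating them as activations with the uniform (bias-free) color distribution, which rescales $T$ by $\beta^{-1}$.
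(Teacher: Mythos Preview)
Your skeleton (couple to a stone game, prove concentration of per-color counts, read off the load bound) matches the paper's, but you are missing the two technical devices that actually make the argument go through for infinite sequences and for the $m/n$ (rather than $M/n$) bound.

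\textbf{The corruption/color mechanism.} The paper's algorithm does \emph{not} halt. Instead, it assigns each ball both a \emph{bin} and a \emph{color}; normally these coincide, but when $(\max_k \ell_k) - (\min_k \ell_k) > \epsilon T$ the ball is declared \emph{corrupted}: it is placed uniformly in $\{i,j\}$ (so Bounded Bias holds deterministically, by construction) while its \emph{color} is drawn from the distribution $T_k/(nT)$ that the stone game requires. All of the algorithm's bookkeeping ($\ell_k$, $T$, etc.) is then done in terms of colors, not bins. The payoff is that the coupling with the stone game holds \emph{forever, deterministically}, and so for any \emph{fixed} moment one only needs a union bound over the at most $M$ balls currently present to argue that none of them are corrupted (Lemma~\ref{lem:corrupted}), after which colors equal bins and the load bound follows. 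Your halting-plus-resets approach cannot do this: once you halt the analysis is dead, and the martingale/epoch argument you sketch for ``all times simultaneously'' is both incomplete (as you admit) and unnecessary. Relatedly, your Bounded Bias argument is circular --- you only get bias $\le \epsilon$ conditionally on loads being close, whereas the theorem requires it for every insertion unconditionally; the corruption mechanism is precisely what hardcodes this.

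\textbf{The generalized stone game.} To obtain the $m/n$ bound (with $m$ the running maximum), the paper does \emph{not} take $Q = M/n + \Delta$ fixed. Instead it introduces a stone game in which, whenever the inactive bag drops below $\Delta n$ stones, a new \emph{batch} of $n$ stones (one of each color) is added; this keeps $Q = \lceil m/n \rceil + \Delta$ tracking the running maximum. The concentration lemma (Lemma~\ref{lem:generalizedstonebound}) then uses that stones within a batch are exchangeable, so $s_k$ is a sum of independent indicators with mean $|I|/n \ge \Delta$. Your fixed-$Q$ approach does not give $m/n$: from $s_k \ge (1-\epsilon/2)s/n$ and $s \ge nQ - m$ you only get $\ell_k \le m/n + (\epsilon/2)(Q - m/n)$, which can be as large as $m/n + \Theta(\epsilon M/n)$ --- not $m/n + O(\epsilon^{-1}\log M)$. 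The hand-wave in your step~(5) does not fix this. Finally, note that the paper takes $\Delta = c\epsilon^{-2}\log M$ (not $\epsilon^{-1}$): the extra $\epsilon^{-1}$ is needed so that Chernoff gives $(1\pm\epsilon/2)$-concentration; the final load bound is then $m/n + \epsilon\Delta = m/n + O(\epsilon^{-1}\log M)$.
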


\begin{algorithm}[h!]
\begin{algorithmic}
\Procedure{GeneralizedModulatedGreedy}{} 
\State For $k \in [n]$, let $\ell_k$ denote \# balls with color $k$. Let $\overline{\ell} = \frac{1}{n}\sum_k \ell_k$.
\State Let $m$ be the largest number of balls that have been present in the system at once thus far.
\State Let $\Delta = c \epsilon^{-2} \log M$. 
\State Set $T = \lceil m/n \rceil + \Delta - \overline{\ell}$. 
\State Select two bins $i, j \in [n]$ independently and uniformly at random. 
\If{$\left(\max_k \ell_k\right) - \left( \min_k \ell_k \right) \le \epsilon T$}
        \State{With probability $1/2 + \frac{\ell_j - \ell_i}{2T}$, assign the ball to bin $i$ and assign it color $i$.}
        \State{Otherwise, assign the ball to bin $j$ and assign it color $j$.}
\Else
   \State{Declare the ball to be \defn{corrupted}}.
   \State{Select $\rho \in [n]$ such that, for each $k \in [n]$, \[\Pr[\rho = k] = \frac{\lceil m/n \rceil + \Delta - \ell_k}{n \cdot T}.\]}
   \State{Assign the ball uniformly at random in $\{i, j\}$ and assign it color $\rho$.}
\EndIf
\EndProcedure
\end{algorithmic}
\caption{The \textsc{GeneralizedModulatedGreedy} algorithm. 
The algorithm has parameters $M$ (an upperbound on the number of balls that will ever be present) and $\epsilon$, and makes use of a sufficiently large constant $c>0$. The algorithm outputs a bin and a color for the ball being inserted.}
\label{alg:generalized}
\end{algorithm}

Notice that the algorithm assigns a ball both a bin and a color. Typically, the color is the same as the bin to which the ball is assigned, but occasionally a ball will get \emph{corrupted}, in which case the bin and color may differ.Moreover, at any time, the maximum load is bounded with respect to $m/n$ (instead of $M/n$).

Before giving the detailed analysis, we briefly describe the new ideas we need over those in Section \ref{sec:upper}.

\paragraph{Infinite time horizon.}
A key feature of the algorithm is that it offers guarantees on an infinite time horizon. To achieve this we explicitly incorporate the coupling with the stone game into the design of the algorithm. In particular, whenever there is an insertion that \textsc{ModulatedGreedy} would have been at risk of halting on, \textsc{GeneralizedModulatedGreedy} instead declares that ball to be \defn{corrupted}. The algorithm then ``fudges'' its bookkeeping: it treats the corrupted ball as being placed into whichever bin is necessary to maintain the coupling with the stone game.

More concretely, we assign each ball both to a bin (where it truly resides) and to a color (which, if the ball is corrupted, may differ from the ball's bin). The algorithm makes all of its decisions based on ball colors (and ignores the actual bins that balls reside in). This allows for the algorithm to maintain a coupling forever between the colors of its balls and the colors of the balls in the stone game.

\paragraph{Increasing $m$.}
Another interesting feature is that the algorithm allows for $m$ to grow over time, subject only to the constraint $m \le M$. To handle this, \textsc{GeneralizedModulatedGreedy} bases its allocation decisions on the largest value of $m$ that it has witnessed so far.
At first glance, this seems to significantly break the relationship between the balls-and-bins game and the stone game, and indeed Lemma \ref{lem:randomactive} no longer holds---however, as we shall see, the stone game and its analysis can be modified to also handle the incremental growth in $m$ over time.

\paragraph{Bias, $(1+\beta)$-choice and graphical process.}
Finally, a third feature of the algorithm is that it introduces a new variable $\epsilon$ that constrains the amount of bias that the algorithm is permitted to exhibit. We will see at the end of the section that this seemingly minor modification allows us to extend the algorithm to the $(1+\beta)$-choice and the graphical $2$-choice process, both of which are generalizations of the classical 2-choice process.
Moreover, the guarantees of the resulting algorithms matches the previous known results for the insertion-only case for these settings.


\subsection{Algorithm Analysis}\label{sec:analysisalg}
We now turn to proving Theorem \ref{thm:generalizedupper}. We begin by defining the generalized stone game, which extends the stone game in Section \ref{sec:upper}.
Then we show how this game is closely related to the balls and bins game and use this relationship to analyze \textsc{GeneralizedModulatedGreedy}.

\subsubsection{The generalized stone game}
The \textsc{$\Delta$-generalized stone game} has an inactive bag and an active bag.
The inactive bag is initialized to contain $\Delta \cdot n$ stones $x_{k, j}$ for $k \in [n]$ and  $q \in [\Delta]$, and the active bag is initialized to be empty. 
We say that the ball $x_{k, q}$ has {\em color} $k \in [n]$.
The game supports two operations that are performed by an oblivious adversary: \textsc{Activate()} and \textsc{Deactivate$(r)$}. 

The \textsc{Activate()} operation (described formally in Algorithm \ref{alg:activate}) takes two steps: First, the operation moves a random stone from the inactive bag to the active bag. Second, if there are fewer than $\Delta \cdot n$ stones in the inactive bag, then it computes the number $Q \cdot n$ of stones currently in the system (active and inactive bags), and it adds $n$ new stones $\{x_{k, Q + 1}\}_{k \in [n]}$, one of each color, to the inactive bag.
This second step is different from the standard stone game in Section \ref{sec:upper}, and in particular, the total number of stones now can increase over time (in increments of $n$).

The \textsc{Deactivate($r$)} operation works exactly as before---it takes whichever stone was added to the active bag $r$-th most recently, and moves that stone back to the inactive bag. 

\begin{algorithm}
\begin{algorithmic}
\Procedure{Activate}{} 
\State Move a random stone from the inactive bag to the active bag.
\If{Inactive bag contains fewer than $\Delta \cdot n$ balls} 
   \State{Let $Q \cdot n$ be \# stones currently in the system}
    \State Add a \defn{batch} $B_{Q + 1} = \{x_{k, Q + 1}\}_{k \in [n]}$ of $n$ new balls to the inactive bag. 
\EndIf
\EndProcedure
\end{algorithmic}
\caption{The \textsc{Activate} method for the generalized stone game. The algorithm has parameter $\Delta$.  The moves a random stone from the inactive bag to the active bag, and then (possibly) adds additional stones to the inactive bag. }
\label{alg:activate}
\end{algorithm}

 We begin by proving a basic fact about the generalized stone game.

\begin{lemma}
\label{lem:generalizedstonebound}
Let $c>0$ be a sufficiently large constant, and let $\epsilon, M$ be parameters. Fix any time in the $(c\epsilon^{-2}\log M)$-generalized stone game, and for $k\in [n]$, let $s_k$ denote the number of stones with color $k$ in the inactive bag. With probability $M^{-\Omega(c)}$, for each $k\in [n]$, we have that 
\[(1 - \epsilon/2) \E[s_k] \le s_k \le (1 + \epsilon/2) \E[s_k].\]
\end{lemma}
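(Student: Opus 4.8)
\textbf{Proof plan for Lemma \ref{lem:generalizedstonebound}.}

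The plan is to reduce this to the same hypergeometric concentration argument used in Lemma \ref{lem:stonesbycolor}, but now accounting for the fact that stones are added to the system over time in color-balanced batches $B_{Q+1} = \{x_{k,Q+1}\}_{k\in[n]}$. The key structural observation I would establish first is an analogue of Lemma \ref{lem:randomactive} for the generalized game: conditioned on the total number $Qn$ of stones currently in the system and on which batches have been added, the set of stones in the inactive bag is a uniformly random subset of the current $Qn$ stones, subject to the constraint that it contains \emph{all} of the stones of every batch that has not yet had any stone activated. More carefully, I would argue that the labels within each color class are interchangeable: since \textsc{Activate()} picks a uniformly random stone from the inactive bag and \textsc{Deactivate($r$)} and the batch-addition step are both label-oblivious, a random permutation of the labels $\{x_{k,\cdot}\}$ within each color commutes with the whole process. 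Hence, conditioned on the number $s_k$ of color-$k$ stones in the inactive bag and the number $Q$ of color-$k$ stones total, \emph{which} color-$k$ stones are inactive is a uniformly random $s_k$-subset — but this is automatically true and is not quite what we need; what we actually need is a statement about the joint distribution across colors.

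The cleanest route, I think, is to not track labels at all and instead directly set up a martingale/exchangeability argument on counts. Fix the time in question; let $Qn$ be the number of stones then present and $s = \sum_k s_k$ the number in the inactive bag, and condition on the entire sequence of adversary operations and on the values $Q$ and $s$ (all the randomness that remains is the internal coin flips of \textsc{Activate()}). I would show that under this conditioning the vector $(s_1,\dots,s_n)$ is distributed as follows: there is a "core" of fully-inactive recent batches (contributing equally to every color), and the remainder of the inactive bag is a uniform sample without replacement from the older stones, which again is color-symmetric. In either case the distribution of $(s_1,\dots,s_n)$ is \emph{exchangeable in the colors} and each $s_k$ is a sum of negatively-associated indicators, so $\E[s_k] = s/n$ and $s_k$ has the concentration of a hypergeometric: $\Pr[|s_k - s/n| \ge \epsilon s/(2n)] \le 2\exp(-\Omega(\epsilon^2 s/n))$, exactly as in \eqref{eq:alphaconc}. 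Since the inactive bag always contains at least $\Delta n = c\epsilon^{-2}(\log M) n$ stones by construction of \textsc{Activate()}, we have $s/n \ge c\epsilon^{-2}\log M$, so $\epsilon^2 s /n \ge c \log M$, and the tail bound is $M^{-\Omega(c)}$; a union bound over the $n \le M$ colors keeps it at $M^{-\Omega(c)}$.

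The main obstacle I anticipate is making the color-exchangeability claim fully rigorous in the presence of the "locked" batches — the batches that have been created but from which no stone has yet been activated must be entirely in the inactive bag, which introduces a deterministic color-balanced component. The resolution is that this deterministic component contributes \emph{exactly} $b$ to every color $s_k$ (where $b$ is the number of such locked batches), so writing $s_k = b + s_k'$ where $s_k'$ counts color-$k$ inactive stones among the "unlocked" stones, the vector $(s_1',\dots,s_n')$ is genuinely exchangeable and concentrates hypergeometrically around $(s - bn)/n$; adding back the constant $b$ to both $s_k$ and its mean preserves $\E[s_k] = s/n$ and only tightens the deviation. A secondary, minor subtlety is that $\E[s_k]$ in the lemma statement is the \emph{unconditional} expectation, so after proving the conditional bound one notes it holds for every realization of the conditioning and hence integrates to the claimed unconditional statement (with $\E[s_k] = s/n$ on the conditioned event, and one observes $\E[s_k]=\E[s]/n$ overall by symmetry). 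I would also double-check the edge case where $s$ is small relative to $n$ — but the construction guarantees $s \ge \Delta n$, so this never arises.
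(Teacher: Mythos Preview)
Your proposal correctly identifies color-exchangeability as the essential structure, but the specific claim that ``the remainder of the inactive bag is a uniform sample without replacement from the older stones'' is false. Batches that were added earlier have been exposed to more \textsc{Activate} calls than batches added later, so they are systematically under-represented in the inactive bag relative to a uniform sample from the union of all unlocked stones. (Concretely: add batch $B_1$, activate once, add batch $B_2$, activate once; the resulting inactive bag is not a uniform $(2n-2)$-subset of $B_1\cup B_2$, since with probability $(n-1)/(2n-1)\approx 1/2$ all of $B_2$ is still inactive, whereas a uniform subset would give this probability $\approx 1/4$.) This breaks the hypergeometric structure you want to invoke, and your fallback to ``negatively-associated indicators'' is asserted rather than proved; the indicators you need are $I_{k,q}$ for fixed $k$ and varying $q$, and their joint law depends on the full activation history in a way your locked/unlocked decomposition does not control.

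The paper's argument sidesteps this by conditioning on finer information. Instead of conditioning only on $s$ and on which batches are locked, it conditions on the full vector $(a_1,\dots,a_Q)$, where $a_q$ is the number of stones from batch $B_q$ currently in the inactive bag. The crucial observation is that the $n$ stones in each batch $B_q$ enter the system at the same instant and are treated identically by every operation, so conditioned on $a_q$ of them being inactive, those $a_q$ are a uniformly random $a_q$-subset of $B_q$; hence the color-$k$ stone of $B_q$ is inactive with probability exactly $a_q/n$, and these events are \emph{independent across $q$}. Thus $s_k$ is a sum of independent Bernoulli indicators with $\E[s_k]=\sum_q a_q/n = s/n \ge \Delta = c\epsilon^{-2}\log M$, and a direct Chernoff bound gives the $M^{-\Omega(c)}$ tail. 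No hypergeometrics, no negative association, no locked/unlocked case split --- the per-batch conditioning is the missing idea.
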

\begin{proof}
Let $Q \cdot n$ be the number of stones currently in the system. For each $q \in \{1, 2, \ldots, Q\}$, define  $B_q = \{x_{k, q}\}_{k \in [n]}$. The $n$ stones in $B_q$ are all inserted into the system in the same instant  and are indistinguishable from one another 
in terms of how they interact with the sequence of operations being performed. If there are $a_k$ balls from $B_k$ in the inactive set, then the probability that any of them have color $i$ is simply $a_k / n$.

Thus, if we fix some outcome for the values of the $a_k$'s, then we can write
$s_k =\sum_{q = 1}^Q A_{k}$,
where $A_k$ are independent indicator random variables with $\Pr[A_{q} = 1] = a_q / n$. Using $I$ to denote the set of balls in the inactive set,  the expected value of $s_k$ evaluates to
\[\E[s_k]=\sum_{q = 1}^Q a_q/n = |I|/n.\]

By design, however, the inactive set always at least $|I| \geq \Delta \cdot n = c \epsilon^{-2} n\log M$ balls, so that $\E[s_k] \ge  \Omega(c \epsilon^{-2} \log M)$. Applying a Chernoff bound (and as $c$ is a large constant), for each $k\in [n]$, $s_k$ lies between $(1 - \epsilon/2) \E[s_k]$ and $(1 + \epsilon/2) \E[s_k]$ with probability $M^{-\Omega(c)}$. 
\end{proof}

\subsubsection{Coupling with \textsc{GeneralizedModulatedGreedy}}
Next we establish the connection between the generalized stone game and  the \textsc{GeneralizedModulatedGreedy} algorithm. 

First, as in Section \ref{sec:upper}, the oblivious sequences of insertion/deletions for the balls-and-bins game maps to an input sequence of the  $\Delta$-generalized stone game as follows:
each insertion in the balls-and-bins game causes an activation in the stone game, and each deletion \textsc{Delete($x$)} in the balls-and-bins game causes a deactivation \textsc{Deactivate($r$)}, where $r - 1$ is the number of balls present in the balls-and-bins game that were inserted after $x$.

The following key lemma shows that the random choices in the two games can be coupled.

\begin{lemma}[Coupling]
Consider a sequence $\mathcal{S}$ of insertions/deletions in a balls-and-bins game on $n$ bins, with no more than $M$ balls present at a time. Let $G_1$ be a balls-and-bins game with operation-sequence $\mathcal{S}$, let $\Delta = c \epsilon^{-2} \log M$,
and let $G_2$ be $\Delta$-generalized stone game with operation sequence $\phi(\mathcal{S})$. 

If $G_1$ is implemented using the \textsc{GeneralizedModulatedGreedy} algorithm with parameters $M,c$ and $\epsilon$, then there exists a coupling between $G_1$ and $G_2$ such that: (1)  the number of balls with a given color $k \in [n]$ in $G_1$ always equals the number of active-bag stones with color $k$ in $G_2$; and (2) the total number $n \cdot Q$ of stones in $G_2$ always satisfies $Q = \lceil m / n \rceil + \Delta$, where $m$ is the largest number of balls ever present at once so far in the balls-and-bins game. 
\label{lem:coupling2}
\end{lemma}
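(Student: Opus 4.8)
The plan is to mirror the proof of Lemma~\ref{lem:coupling} (the coupling for the basic \textsc{ModulatedGreedy}), but now tracking \emph{colors} rather than bins, and carefully handling the two new features: the occasional \emph{corrupted} balls, and the growth of $m$ (and hence $Q$) over time. I would proceed by induction on the operations in $\mathcal{S}$, maintaining as the inductive hypothesis the two claimed invariants: (1) for every color $k \in [n]$, the number of $G_1$-balls of color $k$ equals the number of active-bag stones of color $k$ in $G_2$; and (2) the total stone count satisfies $nQ = n(\lceil m/n\rceil + \Delta)$ where $m$ is the running maximum ball count. Note invariant~(1) implies that the per-color active counts $\ell_k$ in the algorithm are exactly the per-color stone counts, so the quantities $T$, $T_k := \lceil m/n\rceil + \Delta - \ell_k$ appearing in Algorithm~\ref{alg:generalized} coincide with the corresponding quantities computed from $G_2$'s state.

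First I would handle an \textsc{Insert}/\textsc{Activate} step in the \emph{uncorrupted} case. By invariant~(1), the loads $\ell_k$ fed to \textsc{GeneralizedModulatedGreedy} equal the active-bag color counts, so repeating the computation in Lemma~\ref{lem:modulateddist} (which is purely algebraic and only uses $\sum_k T_k = nT$) shows that color $k$ is assigned by the algorithm with probability $T_k/(nT)$. On the stone-game side, the \textsc{Activate()} operation first moves a uniformly random inactive stone to the active bag; since there are $Q n - \sum_i \ell_i$ inactive stones of which $Q - \ell_k$ have color $k$, the probability that the activated stone has color $k$ is $(Q-\ell_k)/(Qn - \sum_i \ell_i) = T_k/(nT)$, exactly as in \eqref{eq:stonepr}. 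So we can couple the algorithm's color choice with the activated stone's color, preserving invariant~(1). In the \emph{corrupted} case, the algorithm \emph{explicitly} draws $\rho = k$ with probability $(\lceil m/n\rceil + \Delta - \ell_k)/(nT) = T_k/(nT)$ and assigns that color — which is precisely the activation distribution again — so the same coupling works verbatim; the fact that the ball's physical bin is chosen arbitrarily in $\{i,j\}$ is irrelevant since invariant~(1) only concerns colors. The second step of \textsc{Activate()} (topping up the inactive bag with a fresh batch of $n$ stones when it drops below $\Delta n$) must be shown to keep invariant~(2): since the physical ball count in $G_1$ just increased by one, $m$ either stays the same or increases by one; using $m \equiv 0 \pmod{?}$ — more precisely, one checks that the inactive bag has $nQ - (\text{active count})$ stones, the active count equals the current $G_1$ ball count $\le m \le M$, and a batch is added exactly when this would otherwise drop below $\Delta n$, which forces $Q$ to tick up to $\lceil m/n \rceil + \Delta$ in lockstep with $\lceil m/n\rceil$ increasing. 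I would verify this arithmetic carefully — this is the main obstacle (see below).

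Next, a \textsc{Delete($x$)}/\textsc{Deactivate($r$)} step: the mapping $\phi$ was designed so that $r-1$ is the number of currently-present balls inserted after $x$, hence \textsc{Deactivate($r$)} removes from the active bag exactly the stone that was coupled (at its activation) to ball $x$. By the inductive coupling, that stone has the same color as $x$'s color in $G_1$, so both counts for that color drop by one and invariant~(1) is preserved; $Q$ is unchanged and $m$ (a running max) is unchanged, so invariant~(2) holds trivially. Finally I would note the base case (empty system, $G_2$ has $n\Delta$ stones all inactive, $Q = \Delta = \lceil 0/n\rceil + \Delta$, both active-bag and ball counts zero for every color) is immediate, completing the induction.

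\textbf{Main obstacle.} The delicate point is invariant~(2) and, relatedly, the fact that \textsc{GeneralizedModulatedGreedy} bases its computation on the running maximum $m$ of \emph{physical} balls present, while the stone game grows $Q$ in discrete batches of $n$ triggered by the inactive bag emptying. I need to check that these two "clocks" stay synchronized: that a batch is added to the inactive bag on exactly the activations where $\lceil m/n\rceil$ increments, so that $Q = \lceil m/n\rceil + \Delta$ is maintained exactly (not just approximately), and that this in turn guarantees the inactive bag always holds at least $\Delta n$ stones — which is what Lemma~\ref{lem:generalizedstonebound} needs downstream. The bookkeeping is elementary but must be done with care about floors and the off-by-one in $\phi$'s definition of $r$; everything else is a direct transcription of the Lemma~\ref{lem:coupling} argument with "bin" replaced by "color."
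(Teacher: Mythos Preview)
Your proposal is correct and follows essentially the same approach as the paper's proof: match the color-assignment distribution $T_k/(nT)$ (in both the uncorrupted and corrupted branches of the algorithm) to the activation distribution $(Q-\ell_k)/(nQ-\sum_i \ell_i)$ in the stone game, and then observe that deletions/deactivations stay coupled automatically via $\phi$. The paper is terser about invariant~(2), simply asserting ``by design'' that $Q = \lceil m/n\rceil + \Delta$ always holds, whereas you (rightly) flag the ceiling/batch-synchronization arithmetic as the one place that needs a careful check---but that check is elementary and your outline of it is sound.
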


\begin{proof}
Let $\ell_k$ denote the number of balls with color $k$ at any given moment and let $\overline{\ell} = \sum_k \ell_k / n$.
By Lemma \ref{lem:modulateddist} (modified so that $T = \lceil m/n \rceil + \Delta - \overline{\ell}$ and $T_k = \lceil \frac{m}{n} \rceil + \Delta - \ell_k$), we know that, on any given insertion in which \textsc{GeneralizedModulatedGreedy} does not create a corrupted ball, each color $k$ is selected with probability
\begin{equation}
\frac{T_k}{n \cdot T} = \frac{\lceil\frac{m}{n}\rceil + \Delta - \ell_k}{n \cdot T}.
\label{eq:modpr2}
\end{equation}
On the other hand, on insertions that do create corrupted balls, we have by design that \eqref{eq:modpr2} is still the probability of color $ k $ being selected. Thus, \eqref{eq:modpr2} is always the probability of any given color $ k $ being selected on any given insertion.

Next we turn our attention to the generalized stone game. By design, the number $n \cdot Q$ of stones in the generalized stone game at any given moment satisfies $Q = \lceil m / n \rceil + \Delta$, where $m$ is the largest number of balls that have ever been present at once in the balls-and-bins game. Suppose that, for each color $k$ there are $\ell_k$ stones with color $ k $ in the active set of the stone game. Then on any given activation, the probability of a ball with color $k$ being moved into the active set is 
\begin{equation}
\frac{Q - \ell_k}{n \cdot Q - \sum_i \ell_i} = \frac{\lceil\frac{m}{n}\rceil + \Delta - \ell_k}{n \cdot (\lceil\frac{m}{n}\rceil + \Delta - \overline{\ell})} = \frac{\lceil\frac{m}{n}\rceil + \Delta - \ell_k}{n \cdot T}.
\label{eq:stonepr2}
\end{equation}

The two probabilities \eqref{eq:modpr2} and \eqref{eq:stonepr2} are precisely equal. Thus, we can couple the games so that the color selected by the insertion in the balls-and-bins game is the same as the stone color selected by the activation in the stone game. 

If we implement the insertions/activations in this way, then the deletions/deactivations also become coupled: whenever a ball is deleted with a color $k$, a stone with color $k$ is removed from the active bag (in particular, the ball and stone were assigned to have the same color when they were inserted/activated previously). Thus the proof of the lemma is complete.
\end{proof}

Combining Lemmas \ref{lem:coupling2} and \ref{lem:gaplemma2}, we can bound the probability that a given ball is corrupted.

\begin{lemma}[Corruption probability]
Consider a sequence of insertions/deletions in a balls-and-bins game on $n$ bins with no more than $M$ balls ever present at a time, and suppose that insertions are implemented using the \textsc{GeneralizedModulatedGreedy} algorithm with parameters $M$ and $\epsilon$. For any given insertion, the probability that the ball being inserted is corrupted is at most $1 / \poly(M)$.
\label{lem:corrupted}
\end{lemma}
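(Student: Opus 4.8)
The plan is to reduce to the $\Delta$-generalized stone game via Lemma~\ref{lem:coupling2} and then quote the color-balance estimate of Lemma~\ref{lem:generalizedstonebound}, mirroring the way Theorem~\ref{thm:modgreedy} was derived from the basic stone game. Recall that \textsc{GeneralizedModulatedGreedy} marks the inserted ball \emph{corrupted} precisely when $(\max_k \ell_k) - (\min_k \ell_k) > \epsilon T$, where $\ell_k$ is the number of color-$k$ balls present just before the insertion, $\overline{\ell} = \frac1n\sum_k \ell_k$, $\Delta = c\epsilon^{-2}\log M$, and $T = \lceil m/n\rceil + \Delta - \overline{\ell}$. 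So it suffices to show that this event has probability $1/\poly(M)$ on any given insertion.

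First I would invoke the coupling of Lemma~\ref{lem:coupling2} to pass to the stone game: there, $\ell_k$ equals the number of color-$k$ stones in the active bag, and $Q = \lceil m/n\rceil + \Delta$. Since each batch contributes exactly one stone of every color, there are exactly $Q$ stones of each color, so the inactive count satisfies $s_k := Q - \ell_k = \lceil m/n\rceil + \Delta - \ell_k$. Summing over $k$ yields $\sum_k s_k = nT$, i.e.\ $T = |I|/n$ where $|I| = \sum_k s_k$ is the inactive-bag size --- a deterministic function of the (fixed, oblivious) operation sequence, since the total number of balls present and the value of $m$ are deterministic. Because the map $\ell_k \mapsto s_k$ reverses order, the corruption event is exactly $\{(\max_k s_k) - (\min_k s_k) > \epsilon\,|I|/n\}$.

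Next I would apply Lemma~\ref{lem:generalizedstonebound} with the present choice $\Delta = c\epsilon^{-2}\log M$: with probability $1 - M^{-\Omega(c)}$ we have $s_k \in [(1-\epsilon/2)\E[s_k],\,(1+\epsilon/2)\E[s_k]]$ for every $k$, and (as its proof shows) $\E[s_k] = |I|/n$ for every color. On that event $(\max_k s_k) - (\min_k s_k) \le \epsilon\,|I|/n = \epsilon T$, so no corruption occurs; hence the corruption probability on the given insertion is at most $M^{-\Omega(c)}$, which is $1/\poly(M)$ once $c$ is chosen large enough relative to the target polynomial.

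I expect essentially all of the substance to already reside in Lemmas~\ref{lem:coupling2} and \ref{lem:generalizedstonebound}, so the main thing requiring care is the bookkeeping of \emph{when} the relevant quantities are read off: one must check that the state $(\ell_k)$ used in the corruption test, the value of $m$ (hence $Q$ and $|I|$), and the concentration statement all refer to the same instant of the coupled game, accounting for the fact that an \textsc{Activate()} step can append a fresh batch of $n$ stones to the inactive bag. Lemma~\ref{lem:coupling2} is set up precisely so that these align, so modulo that check the proof is the short calculation above.
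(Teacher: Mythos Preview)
Your proposal is correct and follows essentially the same route as the paper: couple to the $\Delta$-generalized stone game via Lemma~\ref{lem:coupling2}, translate the corruption condition on the $\ell_k$'s into the equivalent condition on the inactive counts $s_k = \lceil m/n\rceil + \Delta - \ell_k$, and then invoke the concentration estimate of Lemma~\ref{lem:generalizedstonebound} (with $\E[s_k] = |I|/n = T$) to rule it out with probability $1 - M^{-\Omega(c)}$. Your extra care about the timing of when $(\ell_k)$, $m$, $Q$, and $|I|$ are read off is a welcome sanity check but not a departure from the paper's argument.
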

\begin{proof}
For $k \in [n]$, let $\ell_k$ denote the number of balls with color $k$. Let $\overline{\ell} = \sum_k \ell_k/n$ and let $\Delta = c \epsilon^{-2} \log M$, where $c$ is the constant used by \textsc{GeneralizedModulatedGreedy}. In order for the inserted ball to be corrupted, we would need
\begin{equation}
\label{eq:corrupted}
\left(\max_k \ell_k\right) - \left( \min_k \ell_k \right) > \epsilon T = \epsilon (\lceil m/n \rceil + \Delta - \overline{\ell}).
\end{equation}
If we couple the process to a $\Delta$-generalized stone game as in Lemma \ref{lem:coupling2}, then we have (1) that the number of balls with each color $k$ in the active bag of the generalized stone game is $\ell_k$; and (2) that the total number of stones in the generalized stone game is $n (\lceil m/n \rceil + \Delta)$. It follows by Lemma \ref{lem:gaplemma2} that, w.h.p. in $M$, 
\[(1 - \epsilon/2) \E[s_k] \le s_k \le (1 + \epsilon/2) \E[s_k],\]
where $s_k = \lceil m/n \rceil + \Delta - \ell_k$ and $\E[s_k] = \lceil m/n \rceil + \Delta - \overline{\ell}$. That is, each  $s_k$ deviates by at most $\frac{1}{2} \epsilon (\lceil m/n \rceil + \Delta - \overline{\ell})$ from its mean. The same holds for each $\ell_k$ (as $\ell_k+s_k$ is fixed), which implies that \eqref{eq:corrupted} does not occur.
\end{proof}

Finally, we can prove Theorem \ref{thm:generalizedupper}.
\begin{proof}[Proof of Theorem \ref{thm:generalizedupper}]
It suffices to prove the Bounded Load guarantee, since the Bounded Bias guarantee is hardcoded into the \textsc{GeneralizedModulatedGreedy} algorithm by design.
In particular, given the bin choices $i,j$, if the ball is not corrupted then $|\ell_i-\ell_j| \leq \epsilon T$ and it is assigned to bin $i$ with probability $1/2 + (\ell_j-\ell_i)/2T \leq 1/2 +\epsilon/2$. On the other hand if it is corrupted, then it is assigned uniformly.

Let $\Delta = c \epsilon^{-2} \log M$. Couple the balls-and-bins game to the $\Delta$-generalized stone game as in Lemma \ref{lem:coupling2}, and consider the state of both systems at some fixed point in time. 

By Lemma \ref{lem:corrupted}, we have with high probability in $M$ that there are no corrupted balls in the balls-and-bins game. Thus the number of balls in any given bin $k$ (in the balls-and-bins game) is equal to the number of active-bag stones with color $ k $  (in the generalized stone game). Moreover, if $m$ is the most balls that were ever present in the balls-and-bins game, the number of stones in the generalized stone game is $\lceil m / n \rceil + \Delta$.

Using $\ell_k$ to be the number of active-bag stones with color $k$, and $s_k$ to be the number of inactive-bag stones with color $k$, by Lemma \ref{lem:gaplemma2} we have that $s_k > (1 - \epsilon) \E[s_k] \ge (1 - \epsilon)  \Delta$,
which gives the desired bound 
\[\ell_k = \lceil m / n \rceil + \Delta - s_k \le \lceil m / n \rceil + \epsilon \Delta = m/n + O(\epsilon^{-1} \log M).\qedhere\]
\end{proof}

\subsection{Extensions}
\label{sec:ext}
We conclude the section with applications of \textsc{GeneralizedModulatedGreedy} to several more general settings.

\paragraph{$(1+\beta)$-choice process.}
The $(1 + \beta)$-choice setting was proposed by Peres, Talwar, and Wieder \cite{peres2010} as a useful generalization of the 2-choice process, where each insertion selects a random bin with probability $(1 - \beta)$, and gets to choose between two random bins $i, j$ with probability $\beta$. For any fixed $\beta <1$, they showed that in the insertion-only case, the \textsc{Greedy} algorithm achieves maximum load $m/n + \Theta(\beta^{-1} \log n)$ with high probability in $n$; this load becomes $m / n + \Theta(\beta^{-1} \log m)$  
if one wishes for a high-probability guarantee in $m$. They further proved that these bounds are optimal for any $(1 + \beta)$-choice insertion strategy.

We can directly use \textsc{GeneralizedModulatedGreedy} to construct an optimal $(1 + \beta)$-choice insertion strategy for the insertion/deletion model. 
\begin{theorem}
\label{thm:beta}
Consider a balls-and-bins game with $n$ bins and with no more than $m$ balls present at a time. In the insertion/deletion model, there exists a $(1 + \beta)$-choice algorithm that at any given moment, with probability in $m$, has maximum load
\[m/n + O(\beta^{-1} \log m).\]
\end{theorem}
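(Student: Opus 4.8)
The plan is to realize \textsc{GeneralizedModulatedGreedy} \emph{inside} the $(1+\beta)$-choice model, exploiting the fact that the ``wiggle room'' available to any $(1+\beta)$-choice strategy — namely, the set of achievable probabilities for placing a ball into one of its two candidate bins — is exactly $[\tfrac{1-\beta}{2},\tfrac{1+\beta}{2}]$, while \textsc{GeneralizedModulatedGreedy} run with bias parameter $\epsilon=\beta$ never exhibits a bias larger than $\beta/2$, so its per-insertion distribution fits \emph{exactly} inside what the $(1+\beta)$-choice mechanism can produce. First I would record the elementary ``wiggle room'' fact: a $(1+\beta)$-choice insertion can be viewed as receiving two uniformly random bins $\{i,j\}$ together with a coin that is \emph{free} with probability $\beta$ and \emph{forced} with probability $1-\beta$; on a forced coin the ball goes to the bin $h_1$, which conditioned on the unordered pair is uniform in $\{i,j\}$, and on a free coin the strategy may place the ball in either bin. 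Hence for any target $p\in[\tfrac{1-\beta}{2},\tfrac{1+\beta}{2}]$ there is a legal $(1+\beta)$-choice rule placing the ball into bin $i$ with probability exactly $p$: follow $h_1$ on a forced coin, and on a free coin place the ball into $i$ with probability $r=(p-\tfrac{1-\beta}{2})/\beta$, which lies in $[0,1]$ precisely when $p$ lies in the stated interval.

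Next I would run \textsc{GeneralizedModulatedGreedy} with parameters $M=m$ and $\epsilon=\beta$ as a black box that, given the pair $\{i,j\}$, outputs a distribution over (bin, color) pairs, and have the $(1+\beta)$-choice algorithm simply \emph{sample from this distribution using the $(1+\beta)$-choice coin (plus independent internal randomness) as its source of randomness}. When the ball is not corrupted, \textsc{GeneralizedModulatedGreedy} wants bin $=$ color $=i$ with probability $p_i=\tfrac12+\tfrac{\ell_j-\ell_i}{2T}$; since the corruption test forces $|\ell_i-\ell_j|\le\epsilon T=\beta T$, we have $p_i\in[\tfrac{1-\beta}{2},\tfrac{1+\beta}{2}]$, so by the previous paragraph this (bin, color) distribution is realizable (set color $=$ bin in both coin branches; the marginal works out because $(1-\beta)\cdot\tfrac12+\beta r=p_i$). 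When the ball is corrupted, \textsc{GeneralizedModulatedGreedy} picks its bin uniformly in $\{i,j\}$ independently of the separately drawn color $\rho$, and a uniform-in-$\{i,j\}$ bin is producible on either coin outcome, so the joint distribution is again reproduced exactly (the $i=j$ case from the distinct-pair reset is trivial). Consequently the sequence of (bin, color) pairs produced by this $(1+\beta)$-choice strategy is identical in distribution to a native run of \textsc{GeneralizedModulatedGreedy}.

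It then follows directly from Theorem~\ref{thm:generalizedupper} that, at any fixed moment, with high probability in $m$ no ball is corrupted, so every bin's load equals its color count, which is at most $m/n+O(\epsilon^{-1}\log M)=m/n+O(\beta^{-1}\log m)$, proving Theorem~\ref{thm:beta}. The one point I expect to require genuine care — and the only real obstacle — is verifying the tightness of the fit: with $\epsilon=\beta$ the inequality $p_i\le\tfrac{1+\beta}{2}$ holds with equality at the extreme, so there is no slack, and one must check (rather than hand-wave) that the forced-coin branch is consistent with the \emph{color} marginal required by \textsc{GeneralizedModulatedGreedy}, not just its bin marginal; everything else (the bounded-bias bookkeeping, the coupling to the generalized stone game, the high-probability statement) is inherited verbatim from Section~\ref{sec:applications}. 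I would also remark that the resulting bound $m/n+O(\beta^{-1}\log m)$ matches the optimal insertion-only bound of Peres, Talwar and Wieder~\cite{peres2010}, hence is tight.
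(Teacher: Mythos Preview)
Your proposal is correct and takes essentially the same approach as the paper: realize \textsc{GeneralizedModulatedGreedy} as a $(1+\beta)$-choice strategy by exploiting its bounded-bias guarantee, then invoke Theorem~\ref{thm:generalizedupper}. The only minor difference is the parameter choice: the paper sets $\epsilon=\beta/2$ and appeals directly to the black-box Bounded Bias statement $[1/2-\epsilon,1/2+\epsilon]$, whereas you set $\epsilon=\beta$ and rely on the sharper fact (visible from the algorithm and noted in the proof of Theorem~\ref{thm:generalizedupper}) that the actual bias is at most $\epsilon/2$; both yield the same $O(\beta^{-1}\log m)$ bound, and the paper's choice sidesteps the ``tightness of fit'' concern you flag.
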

\begin{proof}
If we set $\epsilon = \beta / 2$, then \textsc{GeneralizedModulatedGreedy} selects between bins $i, j$ with a probabilities in the range $1/2 \pm \epsilon$; this is equivalent to selecting a random bin (i.e., a random one of $i, j$) with probability $1 - 2\epsilon = 1 - \beta$, and then selecting between bins $i, j$ with a probabilities in the range $[0, 1]$. 

\end{proof}

\paragraph{Graphical-Allocation.} Graphical allocation is another generalization of the $2$-choice model, introduced by Kenthapadi and Panigrahy \cite{KP06}. 
 Here we are given an arbitrary fixed $d$-regular graph $G$ on $n$ vertices (i.e., bins). To assign a ball to a bin, we select a uniformly random edge $e = (v_1, v_2)$ choose one of bins $v_1, v_2$. The classic $2$-choice process corresponds to the complete graph $G=K_n$.
 
 Bansal and Feldheim \cite{bansal2021well} showed that, in the insertion-only case, it is possible to guarantee a maximum load of $m / n + O((d / k) \log^4 n \log \log n)$ w.h.p. in $n$, where $k$ is the edge-connectivity of $G$. The linear dependence on $(d/k)$ is necessary and the bound becomes $m/n O((d / k) \log m\log^3 n \log \log n)$ if one requires the bound to be w.h.p. in $m$.

Their algorithm reduces the problem, in a black-box manner, to that of constructing a $(1 + \beta)$-choice strategy on two bins (in particular, where the two ``bins'' represent sibling sets in a binary hierarchical decomposition of the vertices of $G$, and the different sibling pairs use different choices for $\beta$, see \cite{bansal2021well}). In the insertion-only case \cite{bansal2021well}, they use the \textsc{Greedy} $(1 + \beta)$-choice strategy---to extend this to  handle deletions, we can simply use \textsc{GeneralizedModulatedGreedy} instead (as in Theorem \ref{thm:beta}). Together with the framework developed in \cite{bansal2021well}, this gives the following result.
\begin{theorem}
Consider a graphical process where, given a $k$-edge-connected $d$-regular graph $G$ on $n$ vertices (i.e., bins), the two bin choices for each ball ball are given by the endpoints of a uniformly random  edge $e = (v_1, v_2)$ of $G$.
Consider any sequence of insertions/deletions where the number of balls in the system never exceeds $m$. Then it is possible to guarantee a maximum load of $m / n + O((d/k) \log m \log^3 n \log \log n)$ w.h.p. in $m$, at any given moment. 
\end{theorem}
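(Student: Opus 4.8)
The plan is to invoke the black-box reduction of Bansal and Feldheim~\cite{bansal2021well} and simply substitute their insertion-only $(1+\beta)$-choice subroutine with \textsc{GeneralizedModulatedGreedy} (Theorem~\ref{thm:beta}), which continues to work in the presence of deletions. Recall the structure of~\cite{bansal2021well}: one fixes a balanced binary hierarchical decomposition of the $n$ bins, so that each internal node $v$ of the decomposition tree owns a set of bins that is split evenly between its two children; to each such sibling pair one associates a ``crossing probability'' $\beta_v$, which is lower-bounded in terms of the edge-connectivity $k$ and degree $d$ of $G$. A random edge $e=(v_1,v_2)$ of $G$ then induces, at each sibling pair $v$ whose two sides separate $v_1$ from $v_2$, a genuine two-choice event, and, at every other sibling pair, a single (by regularity, uniformly random) choice of side---i.e.\ exactly a $(1+\beta_v)$-choice process at each $v$. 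Their framework shows that if every sibling pair runs a $(1+\beta_v)$-choice allocation whose guarantee is a time-uniform high-probability bound on the discrepancy between its two children's loads of the form $O(\beta_v^{-1}\,\polylog)$, then the global maximum load is $m/n + O((d/k)\log m\log^3 n\log\log n)$; in the insertion-only setting they instantiate this subroutine with \textsc{Greedy}.

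First I would verify that the sequence of operations induced at each fixed sibling pair $v$ is a valid \emph{oblivious} $(1+\beta_v)$-choice insertion/deletion sequence. This is immediate: an insertion of a ball $x$ in the graphical game corresponds to one insertion at $v$ (whose two side-choices are distributed as a $(1+\beta_v)$-choice step, where the $\beta_v$ fraction that ``cross the cut'' get the choice made by \textsc{GeneralizedModulatedGreedy}, using fresh randomness); a deletion of $x$ corresponds to the deletion of the same ``ball'' at $v$; and the adversary generating the top-level sequence, being oblivious to all internal randomness, in particular generates a sequence that is oblivious with respect to the randomness used at $v$. Since at most $m$ balls are ever present below $v$, I would run \textsc{GeneralizedModulatedGreedy} at $v$ with parameters $M=m$ and $\epsilon=\beta_v/2$; exactly as in the proof of Theorem~\ref{thm:beta}, one checks that peeling off the $1-\beta_v$ ``forced uniform'' mass leaves a realizable choice distribution, so the combined process at $v$ coincides with an honest run of \textsc{GeneralizedModulatedGreedy} on the two children. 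By Theorem~\ref{thm:beta} (specialized to two bins), it then keeps the discrepancy between $v$'s two children at $O(\beta_v^{-1}\log m)$ at every moment, w.h.p.\ in $m$---precisely the guarantee the reduction asks for. Taking a union bound over the $O(n)$ sibling pairs preserves a $1/\poly(m)$ failure probability, and feeding these per-pair guarantees into the accounting of~\cite{bansal2021well} yields the claimed bound $m/n + O((d/k)\log m\log^3 n\log\log n)$.

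The main obstacle is confirming that the reduction of~\cite{bansal2021well} is truly black-box in the deletion setting---that is, that nowhere in their argument (in how per-level discrepancy bounds are composed into a global bound, in the choice of hierarchical decomposition, or in the lower bounds on the $\beta_v$, including how the coupling of crossing events across sibling pairs through a single edge is absorbed) is insertion-only structure used beyond the $(1+\beta)$-choice subroutine itself. One point needing care is that the discrepancy guarantee handed up is relative to the \emph{current} number of balls below $v$, which fluctuates under deletions; but Theorem~\ref{thm:beta} already bounds the maximum load of each child of $v$ by (current load below $v)/2 + O(\beta_v^{-1}\log m)$ at every moment, which is exactly the time-uniform, load-relative statement the composition step consumes. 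A secondary point is matching the probability regime: \textsc{GeneralizedModulatedGreedy} gives bounds w.h.p.\ in $M=m$, which is at least as strong as the claimed w.h.p.-in-$m$ statement, and the $\log m$ (rather than $\log n$) factor in the final bound is precisely the price of this strengthening, consistent with the theorem as stated.
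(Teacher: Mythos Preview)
Your proposal is correct and follows essentially the same approach as the paper: the paper's argument consists entirely of invoking the black-box reduction of \cite{bansal2021well} to per-sibling-pair $(1+\beta)$-choice allocation on two bins, and swapping in \textsc{GeneralizedModulatedGreedy} (via Theorem~\ref{thm:beta}) in place of \textsc{Greedy} to handle deletions. You have simply fleshed out the verification steps (obliviousness of the induced per-pair sequence, the time-uniform load-relative discrepancy guarantee, the union bound over $O(n)$ pairs) that the paper leaves implicit.
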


\appendix

\section{Proof of Lemma \ref{lem:gaplemma}}\label{app:gaplemma}
We prove Lemma \ref{lem:gaplemma}, reformulated here to use a constant $c$ in place of constants $\epsilon_1, \epsilon_2$, and to use a variable $k$ in place of $\epsilon_2 m$:
\begin{lemma}[Lemma \ref{lem:gaplemma} reformulated]
Let $c>0$ be a sufficiently large constant. Consider the \textsc{Greedy} algorithm on 4 bins, and fix an arbitrary initial state in which the bins have loads within $k$ of each other. If $ck$ insertions are performed, then after the sequence is complete, all of the bins have loads within $O(\log k)$ of each other with high probability in $k$. Furthermore, with high probability in $k$, there is some intermediate point in time during which all of the bins have equal loads. 
\label{lem:gaplemma2}
\end{lemma}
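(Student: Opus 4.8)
The statement has two parts: (a) the gap $\max_k\ell_k-\min_k\ell_k$ shrinks to $O(\log k)$ within $ck$ insertions, and (b) the configuration passes through a perfectly balanced state at some intermediate time. Both follow from standard potential / drift arguments; here is how I would organize them.

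For part (a) I would run a hyperbolic-cosine potential argument. Write $u_i^{(t)}=\ell_i^{(t)}-\overline\ell^{(t)}$ for the centered load of bin $i$ after $t$ insertions, fix a sufficiently small absolute constant $\alpha>0$, and set $\Phi_t=\sum_{i=1}^4\cosh\!\big(\alpha u_i^{(t)}\big)$, so that the hypothesis ``loads within $k$'' gives $\Phi_0\le 4\cosh(\alpha k)\le 4e^{\alpha k}$. The heart of the argument is a one-step drop inequality $\E[\Phi_{t+1}\mid\mathcal F_t]\le(1-\gamma)\Phi_t+b$ for absolute constants $\gamma\in(0,1)$ and $b>0$. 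To establish it, condition on the loads and expand: inserting a ball into bin $W$ shifts the mean by $1/4$, so $u_W\mapsto u_W+3/4$ and $u_i\mapsto u_i-1/4$ for $i\ne W$; the resulting change in $\Phi$ splits into a ``mean-shift'' part, which for a \emph{uniformly} random $W$ would inflate $\Phi$ by a factor $1+\Theta(\alpha^2)$, plus a ``bias'' part coming from the fact that under \textsc{Greedy} the probability $p_i$ of hitting bin $i$ is nonincreasing in the rank of $\ell_i$. Since $p_i$ and $u_i$ are oppositely ordered one gets $\E[e^{\alpha u_W}]\le\frac14\sum_i e^{\alpha u_i}-\frac3{16}\big(e^{\alpha\max_i u_i}-e^{\alpha\min_i u_i}\big)$, and symmetrically for $e^{-\alpha u_W}$, so the bias part is at most $-\Theta(\alpha)\Phi_t+\Theta(\alpha)$, which for $\alpha$ small enough dominates the $\Theta(\alpha^2)\Phi_t$ inflation. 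Iterating, $\E[\Phi_t]\le(1-\gamma)^t\Phi_0+b/\gamma$, so for $c$ a large enough constant (larger than $\alpha/\gamma=\Theta(1)$) we get $\E[\Phi_{ck}]=O(1)$; Markov's inequality then gives $\Phi_{ck}\le k^{O(1)}$ w.h.p.\ in $k$, and inverting $\cosh$ turns this into $\max_i|u_i^{(ck)}|=O(\log k)$, i.e.\ a gap of $O(\log k)$. Moreover $\E[\Phi_t]=O(1)$ holds for \emph{every} $t\in[ck/2,ck]$, so a union bound over these $\poly(k)$ times shows that the gap stays $O(\log k)$ throughout the second half of the sequence.

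For part (b) I would argue, in the ``balanced regime'' guaranteed by part (a), that the process recurs to the all-equal state quickly; condition on the (w.h.p.) event that the gap is at most $g=O(\log k)$ at every time in $[ck/2,ck]$. It is convenient here to track the simple potential $\Psi_t=\sum_k\big(\ell_k^{(t)}-\overline\ell^{(t)}\big)^2$; a direct computation gives $\E[\Psi_{t+1}-\Psi_t\mid\mathcal F_t]\le-\tfrac38\big(\max_k\ell_k-\min_k\ell_k\big)+\tfrac34$, and in this regime $\Psi_t\le 4\big(\max_k\ell_k-\min_k\ell_k\big)^2\le 4g^2$. Two facts then drive the argument: (i) from any state with $\Psi_t\le 4g^2$, within $O(g^2)=O(\log^2 k)$ steps $\Psi$ drops below some absolute constant $c_0$ --- hence the gap drops below an absolute constant $D_0$ --- with probability at least $\tfrac12$, by optional stopping on the supermartingale $\Psi_t+\delta t$ for a suitable constant $\delta>0$; and (ii) from any state with gap at most $D_0$, the perfectly balanced configuration is reached within $O(1)$ further steps with probability at least an absolute constant $\epsilon_0$ --- one waits at most three steps until the total number of balls is divisible by $4$, and then routes each of the next $O(D_0)=O(1)$ insertions into a specific currently-below-average bin, each routed insertion landing correctly with probability $\ge\tfrac1{16}$ (force both random choices onto that bin). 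Combining (i) and (ii), every $O(\log^2 k)$-step stretch that begins with gap at most $g$ reaches the all-equal state with probability at least a positive constant; partitioning the $\Omega(k)$ remaining insertions into $\Omega(k/\log^2 k)$ disjoint such stretches, the probability that none of them succeeds is $(1-\Omega(1))^{\Omega(k/\log^2 k)}=1/\poly(k)$.

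I expect the drop inequality in part (a) to be the main technical obstacle: one has to cleanly separate the mean-shift and bias contributions to $\E[\Phi_{t+1}\mid\mathcal F_t]$ and verify that the $\Theta(\alpha)$ gain from \textsc{Greedy}'s monotone bias really does beat the $\Theta(\alpha^2)$ loss from recentering the loads, which is exactly what pins $\alpha$ down to a small absolute constant (and hence forces $c$ to be large). The only other delicate point is the interaction in part (b) between the high-probability event ``the gap never exceeds $O(\log k)$'' and the excursion counting; this is handled cleanly by running the recurrence argument on the process stopped at the first time the gap would leave the $O(\log k)$ window, where the Markov property makes the independent-Bernoulli-trial estimate legitimate.
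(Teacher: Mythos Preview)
Your proposal is correct, but the route is quite different from the paper's, and considerably heavier.

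The paper exploits $n=4$ in a very direct way. For part~(a) it simply looks at each pairwise difference $D_{i,j}=\ell_i-\ell_j$ separately: whenever $D_{i,j}\neq 0$, the pair $\{i,j\}$ comes up as the two choices with probability $\Omega(1)$ (since $n=4$), in which case $|D_{i,j}|$ decreases by~$1$; on all other choice pairs the drift is still toward~$0$. So each $D_{i,j}$ is a biased $\pm 1$ walk that hits~$0$ in $O(k)$ steps and thereafter stays within $O(\log k)$ of~$0$, all w.h.p.\ in~$k$. For part~(b) the paper does not introduce a second potential; instead it \emph{iterates} part~(a) in a $\log^*$ recursion: after $ck$ steps the gap is $T_1=O(\log k)$, after $cT_1$ more steps it is $T_2=O(\log T_1)$, and so on, reaching gap $O(1)$ after $(c+o(1))k$ steps with probability $\Omega(1)$; from gap $O(1)$ one equalizes in $O(1)$ further steps with constant probability. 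The amplification to w.h.p.\ is then the same as yours, chopping the second half of the run into $\Omega(k/\log k)$ chunks.

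Your approach---the $\cosh$ potential for the contraction and the quadratic potential plus optional stopping for the recurrence---is the ``general-$n$'' machinery (essentially the Peres--Talwar--Wieder argument), and it does go through here. What it buys is that nothing in your argument is specific to $n=4$; the paper's pairwise-walk trick and the $\log^*$ bootstrap really lean on the number of bins being a constant. What the paper's argument buys is brevity and elementarity: no drop inequality to verify, no $\alpha$ to tune, no optional stopping---just a biased walk and a recursion. Since the lemma is only ever invoked for $n=4$, the paper's choice is the natural one, but yours is a perfectly valid alternative.
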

We break the proof of this lemma into a few simple claims.
\begin{claim}
\label{clm:logload}
Given an arbitrary initial state with bin loads within $k$ of each other, if $j \ge c k$ insertions are performed, then at end of the sequence, the bin loads will be within $O(\log k)$ of each other, w.h.p. in $k$.
\end{claim}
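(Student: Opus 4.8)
The plan is to track the \emph{spread} $\Phi_t = (\max_k \ell_k^{(t)}) - (\min_k \ell_k^{(t)})$ of the four bin loads over the $j$ insertions and argue it shrinks to $O(\log k)$ within $ck$ steps. The key mechanism is that \textsc{Greedy}, on four bins with spread $\Phi$, has a constant probability per insertion of \emph{strictly decreasing} $\Phi$ (and never increases it by more than $1$), because whenever the two sampled bins include the current maximum-load bin and a strictly-less-loaded bin, the ball avoids the max, and a single such hit on a unique maximizer chips the top down; symmetrically for the minimum. More carefully, I would set up a supermartingale-type argument: condition on the current load vector, and observe that (i) $\Phi_{t+1} \le \Phi_t + 1$ always, and (ii) whenever $\Phi_t \ge 1$, with probability $\Omega(1)$ the insertion lands in a way that makes $\Phi_{t+1} \le \Phi_t$ and, moreover, with probability $\Omega(1/\Phi_t)$ — or even $\Omega(1)$ when the extreme bin is unique — the quantity $\Phi$ actually drops. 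Since on four bins the number of bins achieving the max (or the min) is small, a standard potential/coupon-collector style accounting shows that, starting from spread at most $k$, after $\Theta(k)$ insertions the spread has contracted to $O(\log k)$ except with probability $1/\poly(k)$.

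Concretely, the cleanest route is to bound the number of steps needed to bring the spread from $k$ down to $O(\log k)$ using a union bound over ``bad stretches.'' For a threshold $s$, consider the first time the spread is $\le s$; I claim that from any state with spread in $[s, 2s]$, a window of $C s$ insertions reduces the spread below $s$ with probability $1 - e^{-\Omega(s)}$. This follows because over $Cs$ steps the number of ``corrective'' insertions (those that hit the lone top bin, or the lone bottom bin, alongside a strictly interior bin) is a sum of indicator variables each of constant expectation once the spread is positive, so by a Chernoff bound at least $\Omega(Cs)$ corrections occur; and $\Omega(Cs)$ corrections suffice to collapse a spread of $2s$ down past $s$ since each correction permanently removes one unit from the extreme gap (here one must be slightly careful that corrections to the max and to the min both count, and that a correction is never undone — the max never increases except by absorbing a ball, which happens only when \textsc{Greedy} is \emph{forced}, i.e.\ both sampled bins are the max, an $O(1/16)$ event that the Chernoff slack absorbs). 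Summing the failure probabilities over the dyadic scales $s = k, k/2, k/4, \dots, \Theta(\log k)$ gives total failure $\sum_i e^{-\Omega(k/2^i)} = 1/\poly(k)$, and the total number of insertions consumed is $\sum_i C k/2^i = O(k) \le ck$ for $c$ large enough.

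The main obstacle I anticipate is making the ``each correction permanently shrinks the spread, and corrections are not undone'' bookkeeping rigorous on four bins, where the identity of the max/min bin changes over time and there can be ties (two bins tied for the max). The standard fix is to use a smoother potential than the raw spread — for instance $\Psi_t = \sum_k (\ell_k^{(t)} - \overline{\ell})^2$ or $\sum_k \max(\ell_k^{(t)} - \overline{\ell} - \tfrac{1}{2}\log k, 0)$ — and show it is a supermartingale with bounded increments that has negative drift $-\Omega(1)$ whenever the spread exceeds $\Theta(\log k)$; then Azuma/Freedman gives concentration and hence that $\Psi$ (and thus $\Phi$) is $O(\log k)$ after $O(k)$ steps w.h.p. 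Since the excerpt explicitly says this lemma ``follows from standard arguments,'' I would present the dyadic-window version as the main line and remark that the potential-function version is an alternative; either way the quantitative content is that \textsc{Greedy} on $O(1)$ bins has $\Omega(1)$ contraction rate on the spread.
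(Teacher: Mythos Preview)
Your approach is workable but more involved than the paper's. Rather than tracking the global spread $\Phi_t = \max_k \ell_k - \min_k \ell_k$, the paper fixes a pair $i \neq j$ and tracks the signed pairwise difference $D_{i,j} = \ell_i - \ell_j$. Whenever $D_{i,j} \neq 0$, a single insertion has $\Omega(1)$ drift toward zero: with probability $\Omega(1)$ the sampled bin-pair is exactly $\{i,j\}$, in which case \textsc{Greedy} surely shrinks $|D_{i,j}|$ by $1$, and the remaining cases contribute nonpositive drift away from zero. So $D_{i,j}$ is just a one-dimensional biased random walk on $\mathbb{Z}$: it hits $0$ within $O(k)$ steps w.h.p.\ in $k$, and thereafter each excursion away from $0$ has length $O(\log k)$ w.h.p. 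A union bound over the $\binom{4}{2}$ pairs gives the claim. This neatly sidesteps the very bookkeeping you flagged as the main obstacle --- ties at the extremes and the shifting identity of the max/min bin --- because once $(i,j)$ is fixed those issues simply do not arise. Your dyadic-contraction argument (and the potential-function alternative) would also go through with enough care, but the pairwise-difference reduction is the trick that lets the paper dispatch the claim in a few lines.
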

\begin{proof}
Let $D_{i, j}$ be the difference between the loads of the $i$-th and $j$-th bins (where $i \neq j$). It suffices to show that, after the insertions are complete, $D_{i, j} \le O(\log k)$ with high probability in $k$. 

Notice that whenever $D_{i, j} \neq 0$ and we insert a ball, $D_{i, j}$ has a random increment with $\Omega(1)$ bias towards $0$ (it surely decreases by $1$ when $i,j$ are the two choices, which has $\Omega(1)$ probability as $n=4$, and has zero bias otherwise). 
So starting at $|D_{i, j}| \le  k$,  w.h.p. in $k$ that the random walk thus reaches $0$ within $O(k) \le c k$ steps. Moreover, each time that the random walk hits $0$, w.h.p. in $k$ it will hit $0$ again within $O(\log k)$ steps.
Thus, after the $c k$ insertions are performed, we have $|D_{i, j}| = O(\log k)$ w.h.p. in $k$. 
\end{proof}

Next we show that, during the insertions, 
the loads become equal at some point with probability $\Omega(1)$.
\begin{claim}
\label{clm:equalloads}
Given any arbitrary initial state the bin loads within $k$ of each other, if $2 c k$ insertions are performed, then with probability at least $\Omega(1)$ there is some time at which all the $4$  bins have equal loads.
\end{claim}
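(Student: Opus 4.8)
The plan is to split the argument into two phases: first drive the configuration to within a \emph{constant} of perfectly balanced, and then show that from there a single short ``lucky'' burst of insertions already produces a balanced moment.

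\textbf{Phase 1 (reaching near-balance).} First I would show that with probability $1-o(1)$, after at most $ck$ of the $2ck$ insertions the four bin loads differ by at most some absolute constant $C_0$. This follows by iterating Claim~\ref{clm:logload}: starting from spread at most $k_0=k$, after $ck_0$ insertions the spread is $\le k_1 = O(\log k_0)$ w.h.p.; feeding this back in, after a further $ck_1$ insertions the spread is $\le k_2 = O(\log k_1)$; and so on. Since $k_i$ shrinks like an iterated logarithm, the spread drops to $\le C_0$ after $O(\log^* k)$ rounds using a total of $c(k_0+k_1+k_2+\cdots) = ck + O(\log k) < 2ck$ insertions, and a union bound over the few rounds keeps the overall success probability at $1-o(1)$. (Alternatively one can prove Phase~1 by a drift argument on $\Phi = \sum_i (\ell_i - \overline\ell)^2$: adding a ball to a bin with deviation $d_i$ changes $\Phi$ by $2d_i + \tfrac34$, and since \textsc{Greedy} takes the less loaded of two random bins, $\E[\Delta\Phi] = \tfrac34 - \E[|d_a - d_b|] \le \tfrac34 - \Omega(\sqrt\Phi)$, which is negative once $\Phi$ exceeds a constant; as $\sqrt\Phi$ has $O(1)$ increments, a standard supermartingale/Azuma argument gives $\Phi = O(1)$ within $O(k)$ steps w.h.p.)

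\textbf{Phase 2 (finishing from near-balance).} Let $t_0 \le ck$ be the first time the spread is at most $C_0$ (well defined on the high-probability event from Phase~1). Conditioned on $\mathcal{F}_{t_0}$, I would argue that the next $4C_0 = O(1)$ insertions conspire to equalize all loads with constant probability: at each step, while the loads are not all equal, let $M$ be the current maximum, fix a bin $i$ with load below $M$ and a bin $j$ with load $M$, and call the step \emph{good} if the two random choices are exactly $\{i,j\}$ --- a good step has probability $\Omega(1)$ (namely $1/\binom{4}{2}=1/6$ when the choices are distinct), and on a good step \textsc{Greedy} puts the ball into bin $i$, so $\sum_k (M-\ell_k)$ drops by exactly $1$ while $M$ is unchanged. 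Since $\sum_k (M-\ell_k)\le 4C_0$ at time $t_0$, at most $4C_0$ consecutive good steps make all loads equal $M$, so the probability that a balanced configuration appears in $[t_0, t_0+4C_0]$ is at least $(1/6)^{4C_0} = \Omega(1)$, independent of $\mathcal{F}_{t_0}$ by the Markov property. Multiplying the $1-o(1)$ from Phase~1 by this $\Omega(1)$ gives the claim; the step budget is comfortable since $ck + 4C_0 \le 2ck$ for $k$ above a constant, and bounded $k$ is handled by Phase~2 alone.

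\textbf{Main obstacle.} The delicate part is Phase~1: Claim~\ref{clm:logload} only yields spread $O(\log k)$, whereas Phase~2 genuinely needs spread $O(1)$ --- a $\Theta(\log k)$-long lucky burst would have probability only $k^{-\Theta(1)}$, not $\Omega(1)$. Closing this $\log k$-to-constant gap (by the iterated application above or by the potential-function drift), within the $O(k)$ step budget and with failure probability $o(1)$, is where the real work lies; the rest is bookkeeping together with an elementary, coupling-free case analysis of \textsc{Greedy} near balance.
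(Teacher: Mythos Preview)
Your proposal is correct and follows essentially the same approach as the paper: iterate Claim~\ref{clm:logload} through an $O(\log^* k)$-length tower to reduce the spread to $O(1)$, then finish with a constant-probability ``lucky burst'' argument. One small overstatement: the iterated application only gives Phase~1 success probability $\Omega(1)$, not $1-o(1)$, because the late rounds have $T_i = O(1)$ and ``w.h.p.\ in $T_i$'' degenerates to a constant---but since the claim only asks for $\Omega(1)$, this is harmless.
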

\begin{proof}
This follows by iterated applications of Claim \ref{clm:logload}. After $ck$ insertions, all the $4$ the bins have loads within $T_1 = O(\log k)$ of each other, w.h.p. in $k$. After $c T_1$ further insertions, the bins have loads within $T_2 = O(\log T_1)$ of each other, w.h.p. in $T_1$. After $c T_2$ further insertions, the bins have loads within $T_3 = O(\log T_2)$ of each other,  w.h.p. in $T_2$. Continuing like this, after $c(k + T_1 + T_2 + \cdots + T_{O(\log^* n)}) = (c + o(1))k$ insertions, we reach a state where all bin loads are within $O(1)$ of each other with probability $\Omega(1)$. Once this occurs, we have with probability $\Omega(1)$ that during the next $O(1)$ insertions after that, there is a point at which the $4$ bins have equal loads. 
\end{proof}

Finally, we amplify Claim \ref{clm:equalloads} in order to achieve a high-probability bound.
\begin{claim}
Given an arbitrary initial state with bin loads within $k$ of each other, if $c k$ insertions are performed, then w.hp. in $k$ there is some time when all the bins have equal loads.
\label{clm:equalloads2}
\end{claim}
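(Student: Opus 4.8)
The plan is to amplify Claim \ref{clm:equalloads}, which gives only an $\Omega(1)$ probability of reaching an equal-load state, into a high-probability statement by breaking the insertions into roughly $k/\log k$ short ``attempts,'' each of which succeeds with probability $\Omega(1)$ conditionally on the past. First, a single application of Claim \ref{clm:logload} to the initial $O(k)$ insertions brings the bins to within $O(\log k)$ of each other, w.h.p.\ in $k$. I would then partition the remaining $\Theta(k)$ insertions into $N=\Theta(k/\log k)$ consecutive windows, each of length $L=\Theta(\log k)$ (taking the global constant $c$ large enough that all of this fits inside $ck$ insertions), and apply Claim \ref{clm:equalloads} to each window.

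The technical heart is to show that, w.h.p.\ in $k$, \emph{throughout} the process (after the initial $O(k)$ insertions) every pairwise load gap $D_{i,j}$ stays below $C\log k$ for a suitable constant $C$; this is what guarantees that each window begins from a state of width $O(\log k)$, so that Claim \ref{clm:equalloads} applies to it with its $\Omega(1)$ guarantee. As observed in the proof of Claim \ref{clm:logload}, whenever $D_{i,j}\neq 0$ an insertion changes $D_{i,j}$ by at most $1$ and with $\Omega(1)$ drift toward $0$ (it decreases by $1$ whenever $i,j$ are the two choices), so $|D_{i,j}|$ is dominated by a lazy random walk on $\mathbb{Z}_{\ge 0}$ with bounded steps and constant negative drift. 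Started from $|D_{i,j}|\le k$, such a walk drops below $C\log k$ within $O(k)$ steps and then, by the same exponential-supermartingale reasoning already used in Claim \ref{clm:logload} to control returns to $0$, has no excursion above $C\log k$ during the remaining $\poly(k)$ steps --- both events holding w.h.p.\ in $k$. A union bound over the $\binom{4}{2}=6$ pairs gives the desired width-control event $\mathcal{W}$, with $\Pr[\overline{\mathcal W}]\le 1/\poly(k)$.

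It remains to combine the pieces. Condition on $\mathcal W$. For each window $i\in[N]$, let $\mathcal G_i$ be the event that at some moment during window $i$ all four bins have equal loads. On $\mathcal W$, the state at the start of window $i$ has width $\le C\log k$, so Claim \ref{clm:equalloads} (applied with its parameter ``$k$'' set to $C\log k$, over $L=\Theta(\log k)$ insertions) gives $\Pr[\mathcal G_i\mid \mathcal F_{i-1}]\ge p$ for a constant $p=\Omega(1)$, where $\mathcal F_{i-1}$ is the history up to the start of window $i$ (the starting-width condition is $\mathcal F_{i-1}$-measurable). Iterating this bound --- a standard iterated-expectation argument over the $N$ windows --- yields
\[
\Pr\Big[\textstyle\bigcap_{i=1}^N \overline{\mathcal G_i}\ \cap\ \mathcal W\Big]\ \le\ (1-p)^N\ =\ \exp\!\big(-\Omega(k/\log k)\big)\ \le\ 1/\poly(k),
\]
which together with $\Pr[\overline{\mathcal W}]\le 1/\poly(k)$ shows that w.h.p.\ in $k$ some window contains an equal-load moment, proving the claim.

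The step I expect to be the main obstacle is the width-control event $\mathcal W$: one genuinely needs the pairwise gaps to avoid $\Theta(\log k)$-sized excursions over the \emph{entire} $\poly(k)$-step horizon with failure probability $1/\poly(k)$, and this does \emph{not} follow from merely iterating Claim \ref{clm:logload} window by window (that would only re-establish width $O(\log k)$ with probability $1-1/\polylog(k)$ per window, far too weak to union-bound over $\approx k/\log k$ windows). Establishing it requires the random-walk domination together with a maximal inequality, which is why I isolate it as a separate step.
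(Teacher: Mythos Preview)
Your proposal is correct and follows the same high-level structure as the paper: reduce to width $O(\log k)$, chop the remaining $\Theta(k)$ insertions into $\Theta(k/\log k)$ windows of length $\Theta(\log k)$, and apply Claim~\ref{clm:equalloads} to each window to get the $(1-p)^{\Theta(k/\log k)}$ failure bound.

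Where you diverge from the paper is in the step you flag as the ``main obstacle,'' the width-control event $\mathcal W$. Your random-walk/maximal-inequality argument is valid, but unnecessary. The paper handles $\mathcal W$ by a direct application of Claim~\ref{clm:logload} \emph{with its original parameter $k$}: for every moment $j$ in the final $ck/2$ insertions, at least $ck/2$ insertions have already been performed from the initial width-$k$ state, so Claim~\ref{clm:logload} (applied to that prefix, still with parameter $k$) gives width $O(\log k)$ at time $j$ with failure probability $1/\poly(k)$; a union bound over the $O(k)$ such moments yields $\Pr[\overline{\mathcal W}]\le 1/\poly(k)$. Your worry that ``iterating Claim~\ref{clm:logload} window by window'' only gives $1-1/\polylog(k)$ per window is correct for the iteration that re-bases the parameter to $C\log k$ at each window start, but the paper avoids this entirely by always basing the claim at the initial state with the original $k$. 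So your detour through an explicit supermartingale/maximal inequality is sound but buys nothing here.
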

\begin{proof}
By Claim \ref{clm:logload}, w.h.p. in $k$) the loads are within $T = O(\log k)$ of each other during each of the final $ck / 2$ insertions. Break these insertions into $\Omega(k / \log k)$ chunks of size $2cT$. Within each chunk, we have by Claim \ref{clm:equalloads} that the loads equalize (at some point) with probability at least $\Omega(1)$. Thus, the probability that the loads stay unequal during all  $\Omega(k / \log k)$ chunks is $\exp(-\Omega(k/\log k))$. 
\end{proof}

Combined, Claims \ref{clm:logload} and \ref{clm:equalloads2} imply Lemma \ref{lem:gaplemma2}.

\section{Proof of Lemma \ref{lem:insertAB}}\label{app:insertAB}
For $(i, j) \in Q$, define $A_{i, j}$ (resp. $B_{i, j}$) to be the set of balls in $A$ (resp. $B$) that hash to the bin pair $(i, j)$. Let $a_{i, j} = |A_{i, j}|$ and $b_{i, j} = |B_{i, j}|$. Let
\[p_{i, j} = \frac{v\left(A_{i, j} \cup B_{i, j}\right)}{|A_{i, j} \cup B_{i, j}|}\]
denote the (random) fraction of balls in $A_{i, j} \cup B_{i, j}$ that are placed into bins $1, 2$. 

We remark that there are two sources of randomness in this lemma: the first, which we denote by $\mathcal{R}_1$, is the outcome of the hashes of the balls in $A$ and $B$ (i.e., the random bits that determine $\{a_{i, j}\}$ and $\{b_{i, j}\}$); the second, which we denote by $\mathcal{R}_2$, is the random order in which the balls $A \cup B$ are inserted into the system. 

Note that, from the perspective of the ID-oblivious insertion strategy, the balls $A_{i, j}$ are indistinguishable from the balls $B_{i, j}$ (this is due to the randomness from $\mathcal{R}_2$). Thus we have that, for any fixed outcome of $\mathcal{R}_1$, 
\[\E\left[v(A_{i, j}) - v(B_{i, j}) \mid \mathcal{R}_1\right] = \E[p_{i, j} (a_{i, j} - b_{i, j}) \mid \mathcal{R}_1].\]
Summing over $(i, j) \in Q$, we have that (again for any fixed outcome of $\mathcal{R}_1$)
\[\E[v(A) - v(B) \mid \mathcal{R}_1] = \sum_{(i, j) \in Q} \E\left[v(A_{i, j}) - v(B_{i, j}) \mid \mathcal{R}_1\right] = \sum_{(i, j) \in Q} \E[p_{i, j}(a_{i, j} - b_{i, j}) \mid \mathcal{R}_1].\]
Considering all outcomes for $\mathcal{R}_1$ that satisfy $\mathcal{E}$, it follows that
\[\E[v(A) - v(B) \mid \mathcal{E}] = \sum_{(i, j) \in Q} \E[p_{i, j}(a_{i, j} - b_{i, j}) \mid \mathcal{E}].\]
Thus, to prove the lemma, it suffices to show that
\[\E\left[\sum_{(i, j) \in Q} p_{i, j}(a_{i, j} - b_{i, j}) \mid \mathcal{E}\right] \ge t - O(\sqrt{k}).\]
Note that $p_{(1, 2)} =1$ and $p_{(3, 4)} = 0$ deterministically. Moreover, 
\[\E[a_{1, 2} - b_{1, 2} \mid \mathcal{E}] \ge \E[k/12 + t - O(\sqrt{k}) - b_{1, 2}] = t - O(\sqrt{k}) - \E[b_{1, 2} - k/12] = t - O(\sqrt{k}).\]
Thus  
\begin{align*}
 \E\left[\sum_{(i, j) \in Q} p_{i, j}(a_{i, j} - b_{i, j}) \mid \mathcal{E}\right] 
& =\E[a_{1, 2} - b_{1, 2} \mid \mathcal{E}] + \E\left[\sum_{(i, j) \in Q \setminus \{(1, 2), (3, 4)\}} p_{i, j} (a_{i, j} - b_{i, j})  \mid \mathcal{E}\right] \\
& = t - O(\sqrt{k}) + \E\left[\sum_{(i, j) \in Q \setminus \{(1, 2), (3, 4)\}} p_{i, j} (a_{i, j} - b_{i, j}) \mid \mathcal{E}\right] \\
& \ge t - O(\sqrt{k}) - \sum_{(i, j) \in Q \setminus \{(1, 2), (3, 4)\} } \E[|a_{i, j} - b_{i, j}| \mid \mathcal{E}].
\end{align*}
To complete the proof, it suffices to show that for each $(i, j) \in Q \setminus \{(1, 2), (3, 4)\}$, we have
\[\E[|a_{i, j} - b_{i, j}| \mid \mathcal{E}] \le O(\sqrt{k}).\]
Let $\alpha_{i, j} = \E[a_{i, j} \mid \mathcal{E}]$ and $\beta_{i, j} = \E[b_{i, j} \mid \mathcal{E}]$. By Chernoff bounds, we know that 
$\E[|a_{i, j} - \alpha_{i, j}| \mid \mathcal{E}]\le O(\sqrt{k})$ and $\E[|b_{i, j} - \beta_{i, j}| \mid \mathcal{E}] \le O(\sqrt{k})$.
Thus, it suffices to show that
\[|\alpha_{i, j} - \beta_{i, j}| = O(\sqrt{k}).\]
For each ball $x \in A$ with $h(x) \notin \{(1,2),(3,4)\}$, we have that $h(x)$ is random among the $|Q| - 2 = 10$ pairs in $Q \setminus \{(1,2),(3,4)\}$; and for each ball $x \in B$, we have that $h(x)$ is random among the $|Q| = 12$ pairs in $Q$. 
Thus $\alpha_{i, j} = \E[\frac{1}{10} (k - a_{1, 2} - a_{3, 4}) \mid E]$ and $\beta_{i, j} = k / 12$. Finally, as $a_{1, 2} + a_{3, 4} = k / 6 \pm O(\sqrt{k})$ (conditioned on event $\mathcal{E}$ occurring), we get
\begin{align*}
\alpha_{i, j} - \beta_{i, j} & = \E\left[\frac{1}{10} (k - a_{1, 2} - a_{3, 4}) \mid \mathcal{E}\right] - k / 12  = \frac{1}{10} (k - k / 6) - k / 12 \pm O(\sqrt{k})  = \pm O(\sqrt{k}),
\end{align*}
which completes the proof.

{\small{
\bibliographystyle{alpha}
\bibliography{references}
}}

\end{document}